\title{Competitive Query Minimization for Stable Matching with One-Sided Uncertainty}
\author{Evripidis Bampis}{Sorbonne Université, CNRS, LIP6, Paris, France}{evripidis.bampis@lip6.fr}{https://orcid.org/0000-0002-4498-3040}{Partially funded by the grant ANR-19-CE48-0016 from the French National Research Agency (ANR).}
\author{Konstantinos Dogeas}{Department of Computer Science, Durham University, Durham, United Kingdom}{}{https://orcid.org/0009-0001-1528-3221}{}
\author{Thomas Erlebach}{Department of Computer Science, Durham University, Durham, United Kingdom}{thomas.erlebach@durham.ac.uk}{https://orcid.org/0000-0002-4470-5868}{}
\author{Nicole Megow}{Faculty of Mathematics and Computer Science, University of Bremen, Bremen, Germany}{nicole.megow@uni-bremen.de}{https://orcid.org/0000-0002-3531-7644}{Supported by DFG grant no.\ 547924951.}
\author{Jens Schl{\"o}ter}{Faculty of Mathematics and Computer Science, University of Bremen, Bremen, Germany}{jschloet@uni-bremen.de}{https://orcid.org/0000-0003-0555-4806}{}
\author{Amitabh Trehan}{Department of Computer Science, Durham University, Durham, United Kingdom}{amitabh.trehan@durham.ac.uk}{https://orcid.org/0000-0002-2998-0933}{}
\authorrunning{E.~Bampis, K.~Dogeas, T.~Erlebach, N.~Megow, J.~Schlöter and A.~Trehan}
\keywords{Matching under Preferences, Stable Marriage, Query-Competitive Algorithms, Uncertainty}
\newcommand{\comparison}{comparison\xspace}
\newcommand{\prefer}{\mathit{prefer}}
\newcommand{\topq}{\mathit{top}}
\newcommand{\intq}{\mathit{intq}}
\newcommand{\mtrue}{\mathsf{true}}
\newcommand{\mfalse}{\mathsf{false}}
\newcommand{\OPT}{\mathrm{OPT}}
\newcommand{\ALG}{\mathrm{ALG}}
\DeclareMathOperator{\EX}{\mathbb{E}}
\begin{document}

\maketitle

\begin{abstract}
    We study the two-sided stable matching problem with one-sided uncertainty for two sets of agents $A$ and $B$, with
    equal cardinality. Initially, the preference lists of the agents in $A$ are given but the preferences of the agents in $B$ are unknown. An algorithm can make queries to reveal information about the preferences of the agents in $B$.
    We examine three query models: comparison queries, interviews,
    and set queries. Using competitive analysis, our aim is to
    design algorithms that minimize the number of queries required
    to solve the problem of finding a stable matching or verifying
    that a given matching is stable (or stable and optimal for
    the agents of one side). We present
    various upper and lower bounds on the best possible competitive ratio
    as well as results regarding
    the complexity of the offline problem of determining the
    optimal query set given full information.
    \end{abstract}

    \section{Introduction}
    \label{sec:intro}
    
    In the classical two-sided stable matching problem, we are given two disjoint sets $A$ and~$B$ of agents (often referred to as men and women) of equal cardinality~$n$. Each agent has a complete preference list over the agents of the other set. The task is to find a {\em stable} matching, i.e., a one-to-one allocation in which no two agents prefer to be matched to each other rather than to their current matching partners. This problem has applications in numerous
    allocation markets, e.g.,
    university admission, residency markets, distributed internet services, etc.
    Since its introduction by Gale and Shapley~\cite{GaleShapley1962} this problem has been widely studied in different variants from both practical and theoretical perspectives;
    we refer to the books~\cite{DBLP:books/daglib/0066875,RothSotomayor_book1990,DBLP:books/ws/Manlove13}.
    
    While the majority of the literature assumes full information about the preference lists, this may not be realistic in large matching markets. It might be impractical or too costly and not even necessary to gather the complete preferences.
    Hence, different models for uncertainty in the preferences have received attention in the past decade~\cite{DBLP:journals/ai/AzizBHR19,DBLP:journals/algorithmica/AzizBGHMR20,DrummondB13,DBLP:conf/aaai/DrummondB14,DBLP:journals/jet/EhlersM15,haeringer2019,HaeringerI14,DBLP:conf/sigecom/RastegariCIIL14,DBLP:conf/sigecom/RastegariCIL13}. Many of these works rely on probabilistic models and guarantees. This may not be appropriate for applications in which no (correct) distributional information
    is available, e.g.\ in one-time markets. Further, one might ask for guaranteed properties such as stability and optimality instead of probabilistic ones.
    
    A different way of handling uncertainty in the preferences
    is to allow an algorithm to make queries to learn about the unknown preferences. Various types of queries (in terms of both input and output) are conceivable, with one example being {\em interview queries}~\cite{DrummondB13,DBLP:conf/aaai/DrummondB14,DBLP:conf/sigecom/RastegariCIIL14,DBLP:conf/sigecom/RastegariCIL13}.
    Here one asks for a query sequence
        where a query corresponds to an interview between two potential matching partners and the outcome is the placement of the interview partners in each other's preference list among all other candidates that she has interviewed so far. Hence, if an agent has several such interviews then she finds out her preference order over all these candidates.
    
    In this paper we investigate various query models for stable matching problems with {\em one-sided} uncertainty in the preferences.
    We assume that initially only the preference lists
    of one side,~$A$, are known but the preference lists of the other side, $B$, are unknown. 
    Applications include allocations between groups of different
    seniority or when  preferences shall be kept private; see also~\cite{haeringer2019,HaeringerI14,HaeringerI10}. 
    For illustration consider, e.g., pairing new staff with mentors or new PhD students with supervisors as part of the onboarding. 
    New staff can be asked to provide a full preference list of mentors based on information about the available mentors that can be made accessible with little effort, while requiring mentors to rank potential mentees might be considered too burdensome for senior staff due to other significant time commitments.
    
    We consider three types of queries to gain information about the preferences,
    namely {\em (i)~\comparison queries} that reveal for an agent $b\in B$ and a pair of agents from~$A$ which one~$b$ prefers, {\em (ii) set queries} that reveal for an agent $b\in B$ and a subset~$S \subseteq A$ the agent in~$S$ that $b$ prefers most, and {\em (iii) interview queries}.
    
    We study basic problems regarding stability and optimality of matchings using these query models. 
    A stable matching is called $A$-optimal~(resp.~$B$-optimal) if no agent in $A$~(resp.~$B$) prefers a different stable matching over the current one.
    To our knowledge, most existing related work considers worst-case bounds on the absolute number of queries necessary to solve the respective problem; see Further Related Work below for a discussion. For many instances, however, executing such a worst-case number of queries might not be necessary. To also optimize the number of queries on these instances, we analyze our algorithms using \emph{competitive analysis}.
    We say that an algorithm that makes queries until it can output a provably correct answer, e.g., a stable and $A$-optimal matching, is $\rho$-competitive (or $\rho$-query-competitive)
    if it makes at most $\rho$ times as many queries as the minimum possible number of queries that also output a provably correct answer for the given instance. Note that this answer may differ from that of the algorithm, e.g., a different stable matching.
    In this paper, we design upper and lower bounds on the competitive ratios for the above mentioned problems and query models.	
    Our results illustrate that worst-case instances regarding the competitive ratio are very different from the worst-case instances regarding the absolute number of queries. Thus, our lower bounds on the competitive ratio use different instances and our algorithms are designed to optimize on different instances. Indeed, worst-case instances for the absolute number of queries turn out to be \enquote{easy} for competitive analysis as the optimal solution we compare against is very large.
    
    Query-competitive algorithms are often associated with the field of \enquote{explorable uncertainty}. Most previous work considers queries revealing an originally uncertain {\em value}~\cite{BampisDEdLMS21,DurrEMM20,erlebach15querysurvey,halldorsson19sortingqueries,DBLP:conf/stacs/HoffmannEKMR08,kahan91queries,megow17mst,DBLP:conf/ijcai/ErlebachLMS23}, while in this work we query a {\em preference}.
    
    \subparagraph*{Our Contribution.}
    We study the stable matching problem with one-sided uncertainty in the preference lists and give the following main results. Note that we assume that the preferences of the $B$ side are unknown, and $|A|=|B|=n$. We remark that our technically most involved main results are lower bounds on the competitive ratio and hardness results, so the results only get stronger by
    making these assumptions.
    
    In Section~\ref{sec:pair} we focus on {\em \comparison queries}. Firstly, we ask the question of how to verify that a given matching is stable. We show that
    the problem can be solved with a $1$-competitive algorithm. 
    Then we ask how to find a stable matching under one-sided uncertainty. We give a $1$-competitive algorithm that finds a stable matching and, moreover, the solution is provably $A$-optimal. 
     Essentially, we employ the well-known {\em deferred acceptance algorithm}, first analyzed by Gale and Shapley~\cite{GaleShapley1962}, and compare its number of queries carefully with the number of queries that any algorithm needs to verify a stable matching.
    
    A substantially more challenging
    task is to find a $B$-optimal stable matching. Note that a trivial competitive ratio is 
    $O(n^2 \log n)$, as it is possible to obtain the full preferences of each of the $n$ elements
    in $B$ using $O(n \log n)$ queries, and the optimum total number of queries
    is at least $1$. One of our main contributions is a tight bound of $O(n)$.
    To that end, we first show that every algorithm for verifying that a given matching is $B$-optimal and stable requires $\Omega(n)$ queries.
    Then we give an $\mathcal{O}(n)$-competitive algorithm for the problem of finding one.
    This is best possible up to constant factors, which we prove with a matching lower bound
    that also holds for verifying that a given matching is stable and $B$-optimal,
    even for randomized algorithms.

    We complement these results by showing that the offline problem of determining the optimal number of queries for
    finding the $B$-optimal stable matching is NP-hard, and we give an $\mathcal{O}(\log n \log \log n)$-approximation algorithm.
    Here, the \emph{offline} version of a problem is to compute,
    given full information about the preferences of all agents,
    a smallest set of queries with the property that an algorithm making exactly
    those queries has sufficient information to solve the problem
    with one-sided uncertainty.
    
    Section~\ref{sec:interview} discusses interview queries. We show that the bounds on the competitive ratio and hardness results for comparison queries translate to interview queries. We remark that some of these results for interview queries, e.g., a $1$-competitive algorithm for finding an $A$-optimal stable matching, were already proven by~Rastegari et al.~\cite{DBLP:conf/sigecom/RastegariCIL13} and discuss differences to their results in the corresponding section.
    Interestingly, we can use essentially the same techniques as for the \comparison model. This may seem surprising, especially for the lower bounds, as interview queries seem to be more powerful.
    For instance, $n$ interviews are sufficient to determine the precise preference order of an agent~$b\in B$, while we need $\Omega(n \log n)$ \comparison queries to determine $b$'s preference order.
    On the other hand, an instance that can be solved with a single comparison query requires two interviews. In general, we can simulate a comparison query by using two interview queries.
    
    In Section~\ref{sec:set} we discuss the {\em set query} model. While some bounds remain the same as in the other models, e.g., $1$-competitiveness for verifying the stability of a given matching, we show that some
    bounds change drastically. For example, we give an $\mathcal{O}(\log n)$-competitive algorithm for verifying that a given matching is $B$-optimal, which is in contrast to the lower bound of $\Omega(n)$ in the other query models. It remains open whether $\mathcal{O}(1)$-competitive algorithms exist for the problems of finding a stable matching or verifying a $B$-optimal matching with set queries.
    
    \subparagraph*{Further Related Work.}
    In classical
    work on stable matching with queries, the  preferences on both sides can only be accessed via queries, with a query usually either
    asking for the $i$th entry in a preference list or for the rank of a specific element within a preference list~(cf.~e.g.~\cite{DBLP:journals/siamcomp/NgH90}).
    Note that two rank queries are sufficient to simulate a \comparison query, but up to $n-1$ \comparison queries are needed to obtain the information of a single rank query. 
    Thus, existing lower bounds on the necessary number of rank queries in these query models translate to our setting (up to a constant factor), but upper bounds do not necessarily translate.
    Ng and Hirschberg~\cite{DBLP:journals/siamcomp/NgH90}
    showed that $\Theta(n^2)$ such queries are necessary to find or verify a stable matching in the worst case.
    The lower bound of  $\Omega(n^2)$ translates to any type of queries with boolean answers, including \comparison queries \cite{DBLP:journals/geb/GonczarowskiNOR19}.
    Further work on interview queries includes empirical results~\cite{DrummondB13,DBLP:conf/aaai/DrummondB14} and complexity results~\cite{DBLP:conf/sigecom/RastegariCIIL14} on several decision problems under partial uncertainty. We discuss the latter in~\Cref{sec:interview}.
    
    Our setting of one-sided uncertainty and querying uncertain preferences is also related to existing work 
    on online algorithms for \emph{eliciting partial preferences}~\cite{Hosseini00S21,Peters2022,Ma0L21}. 
    These works also consider a setting where the preferences of agents in one of the sets are uncertain but can be determined by using different types of queries.
    In particular,~\cite{Peters2022} also considers the set query model.
    The main difference to our work is that these papers assume that the elements of one set do not have any preferences at all.
    As a consequence, they do not consider stability at all and instead aim at computing pareto-optimal or rank-maximal matchings.

    \section{Preliminaries}
    \label{sec:prelim}
    
    An instance of the {\em two-sided stable matching problem} consists of two disjoint sets $A$ and $B$ of size $|A|=|B|=n$ and complete preference lists: The preference list for each agent $a\in A$ is a total order~$\prec_a$ of~$B$, the preference list of each agent $b\in B$ is a total order $\prec_b$ of~$A$. Here, $a_1 \prec_b a_2$ means that $b$ prefers $a_1$ to $a_2$.
    A matching is a bijection from $A$ to~$B$.
    For a matching $M$, we denote the element of $B$ that is matched to $a\in A$ by $M(a)$, and the element of $A$ that is matched to $b\in B$ by $M(b)$.
    
    Given a matching~$M$, a pair $(a,b)\in A\times B$ is a \emph{blocking pair} in $M$ if $a$ is not matched to $b$ in $M$, $a$ prefers $b$ to $M(a)$,
    and $b$ prefers $a$ to $M(b)$.
    A matching $M$ is called a \emph{stable} matching if there is no blocking pair in $M$.
    
    In their influential paper, Gale and Shapley~\cite{GaleShapley1962} showed
    that a stable matching always exists, and the {\em deferred acceptance algorithm} computes one in $\mathcal{O}(n^2)$ time. In this algorithm, one
    group ($A$ or $B$) proposes matches and the other decides whether to accept or reject each proposal. The algorithm produces a stable matching that is best possible for the group $X$ that proposes (we say {\em $X$-optimal}) and worst possible for the other group: Each element of the group that proposes gets matched to the highest-preference element to which it can be matched in any stable matching, and each element of the other group gets matched to the lowest-preference element to which it can be matched in any stable matching. 
    
    In this paper, we consider the setting of {\em one-sided uncertainty}, where initially only the preference lists of all agents in $A$ are known, but the preference lists of $b\in B$ are unknown. An algorithm can make queries to learn about the preferences of $b\in B$. We distinguish the following types of queries:
    \begin{itemize}
        \item {\em Comparison queries}: For agents $b\in B$ and $a_1,a_2\in A$, the
        query $\prefer(b,a_1,a_2)$
        returns $a_1$ if $b$ prefers $a_1$ to $a_2$ and $a_2$ otherwise. These queries can  also be seen as Boolean queries that return true iff $b$ prefers $a_1$ to $a_2$.
        \item {\em Set queries}: For agents $b\in B$ and any subset $S \subseteq A$, the
        query $\topq(b,S)$
        returns $b$'s most preferred element of~$S$.
        \item  \emph{Interview queries:} For agents $b \in B$ and $a \in A$, an interview query $\intq(b,a)$ reveals the total order of the subset $\{a\} \cup P_b$ defined by $\prec_b$, where $P_b$ is the set of all elements $a' \in A$ for which a query $\intq(b,a')$ has already been executed before the query $\intq(b,a)$.
    \end{itemize}
    
    A \emph{stable matching instance with one-sided uncertainty}
    is given by two sets $A$ and $B$ of size~$n$ and, for each
    agent $a\in A$, a total order $\prec_a$ of the agents in $B$.
    The preferences of the agents in $B$ are initially unknown.
    For a given stable matching instance with one-sided uncertainty,
    we consider the following problems: \emph{finding a stable
    matching}, \emph{finding an $A$-optimal stable matching},
    and \emph{finding a $B$-optimal stable matching}.
    For a given stable matching instance with one-sided uncertainty
    and a matching $M$, we consider the following problems:
    \emph{verifying that $M$ is stable}, \emph{verifying that $M$
    is stable and $A$-optimal}, and \emph{verifying that $M$
    is stable and $B$-optimal}. All problems can
    be considered for each query model. For the verification
    problems, we consider the competitive ratio only for inputs
    where~$M$ is indeed a stable (and $A$- or $B$-optimal)
    matching. If this is not the case,
    the algorithm must detect this, but we do not compare the number of queries it
    makes to the optimum.
    This is because any algorithm may be required
    to make up to $\Omega(n^2)$ comparison or interview queries to detect
    a blocking pair, while the optimum can
    prove its existence
    with a constant number of queries.
    
    It is easy to see that for the optimum, the problem of verifying that a given
    matching $M$ is stable and $A$-optimal ($B$-optimal) is the same as that of
    finding the $A$-optimal ($B$-optimal) stable matching. This implies
    that any lower bound on the number of queries required to verify that
    $M$ is stable and $A$-optimal ($B$-optimal) also applies to the problem of finding
    the $A$-optimal ($B$-optimal) stable matching.
    
    An important concept is the notion of \emph{rotations}, 
    which can be defined as follows (cf.~\cite{DBLP:books/ws/Manlove13}):
    Let a stable matching $M$ be given.
    For an agent $a_i\in A$, let $s_A(a_i)$ denote the most-preferred element $b_j$
    on $a_i$'s preference list such that $b_j$ prefers $a_i$ to her current partner
    $M(b_j)$.
    Note that $s_A(a_i)$ must be lower than $M(a_i)$ in $a_i$'s preference
    list as otherwise $(a_i,s_A(a_i))$ would be a blocking pair.
    Let $\mathrm{next}_A(a_i) = M(s_A(a_i))$.
    Then a rotation (exposed) in $M$ is a sequence
    $
    (a_{i_0} , b_{j_0}), \ldots, (a_{i_{r-1}} , b_{j_{r-1}})
    $
    of pairs such that, for each $k$ ($0 \le k \le r - 1$),
    $(a_{i_k} , b_{j_k}) \in M$ and $a_{i_{k+1}} = \mathrm{next}_A(a_{i_k})$,
    where addition is modulo~$r$.
    The rotation can be viewed as an alternating cycle consisting
    of the matched edges $(a_{i_k},b_{i_k})$ and the unmatched
    edges $(a_{i_k},b_{i_{k+1}})$ (for $0\le k\le r-1$).
    We refer to an edge $(a,s_A(a))$ as a \emph{rotation edge} or \emph{$r$-edge} as it can
    potentially be part of a rotation. Note that every vertex
    $a\in A$ is incident with at most one $r$-edge.
    
    Given a rotation $R$
     in a stable matching $M$, we can construct a stable matching~$M'$ from~$M$ by removing all edges that are part of $R$ and $M$ and adding all $r$-edges that are part of~$R$. We refer to this 
     as \emph{applying} a rotation.
    Observe that no agent in $B$ is worse off in $M'$ than in~$M$, and some agents in $B$ prefer $M'$ to~$M$.
    The following
    has been shown.
    \begin{lemma}[Lemma~2.5.3 in Gusfield and Irving \cite{DBLP:books/daglib/0066875}%
    ]
    \label{lem:rotations-for-optimal}
        If $M$ is any stable matching other than the $B$-optimal stable matching, then there is at least one rotation exposed in~$M$.
    \end{lemma}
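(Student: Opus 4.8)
The plan is to compare $M$ with the $B$-optimal stable matching, which I denote $M_B$, and to follow the chain of $r$-edges until it closes up into a cycle; such a cycle is exactly a rotation exposed in~$M$. Throughout I rely only on the Gale--Shapley extremality stated in the preliminaries: $M_B$ is $B$-optimal \emph{and} $A$-pessimal, so every $b\in B$ weakly prefers $M_B(b)$ to $M(b)$, and every $a\in A$ weakly prefers $M(a)$ to $M_B(a)$. Set $S=\{a\in A : M(a)\neq M_B(a)\}$; since $M\neq M_B$ we have $S\neq\emptyset$, and $A$-pessimality upgrades the difference to a strict preference $M(a)\prec_a M_B(a)$ for every $a\in S$.

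The first step is to show that every $a\in S$ is incident with an $r$-edge and, more precisely, that $M(a)\prec_a s_A(a)$ while $a$ does not strictly prefer $M_B(a)$ to $s_A(a)$. For this I would verify that $b^{\ast}:=M_B(a)$ is an admissible choice for $s_A(a)$, i.e.\ that $b^{\ast}$ prefers $a$ to $M(b^{\ast})$: we have $M(b^{\ast})\neq a$ (otherwise $M(a)=b^{\ast}=M_B(a)$, contradicting $a\in S$), and $B$-optimality says $b^{\ast}$ weakly prefers $M_B(b^{\ast})=a$ to $M(b^{\ast})$, so the preference is strict. Since $s_A(a)$ is by definition the agent among all admissible choices that $a$ likes best, and $b^{\ast}$ is one such choice, the stated bounds follow.

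The second step is to show $\mathrm{next}_A(a)\in S$ for every $a\in S$. Write $b_j=s_A(a)$ and $a'=\mathrm{next}_A(a)=M(b_j)$; note $a'\neq a$ because $s_A(a)\neq M(a)$. The crucial claim is $M_B(a')\neq b_j$. Suppose instead $M_B(b_j)=a'$. By definition of $s_A$, $b_j$ prefers $a$ to $M(b_j)=a'=M_B(b_j)$. By Step~1, $a$ does not strictly prefer $M_B(a)$ to $b_j$; if this preference is strict in the other direction, i.e.\ $b_j\prec_a M_B(a)$, then $(a,b_j)$ is a blocking pair of $M_B$, contradicting its stability; and if $b_j=M_B(a)$ then $M_B(b_j)=a\neq a'$, again a contradiction. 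Hence $M_B(a')\neq b_j=M(a')$, and now $A$-pessimality applied to $a'$ gives $M(a')\prec_{a'}M_B(a')$, so $a'\in S$.

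Finally, since $S\neq\emptyset$, I pick any $a_0\in S$ and iterate $\mathrm{next}_A$; Steps~1 and~2 guarantee that $s_A$ is defined at each agent reached and that the whole orbit stays in $S$. On $S$ the map $\mathrm{next}_A$ is a well-defined function (each $a\in S$ has a unique $r$-edge), so the orbit of $a_0$ eventually enters a cycle $a_{i_0},a_{i_1}=\mathrm{next}_A(a_{i_0}),\dots,a_{i_{r-1}},\mathrm{next}_A(a_{i_{r-1}})=a_{i_0}$ with distinct $a_{i_k}$, and the matched pairs $(a_{i_k},M(a_{i_k}))$ form, by definition, a rotation exposed in~$M$. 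I expect the only genuinely non-routine point to be the claim $M_B(a')\neq b_j$ in Step~2 (everything else is lattice bookkeeping): the resolution is that the failure of this claim would make $(a,b_j)$ block the stable matching $M_B$, which is the one place the argument really uses stability of $M_B$ rather than just its extremality.
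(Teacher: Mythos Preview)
Your argument is correct and is essentially the classical proof (as in Gusfield and Irving). Note, however, that the paper does not give its own proof of this lemma at all: it is simply quoted as Lemma~2.5.3 from \cite{DBLP:books/daglib/0066875} and used as a black box, so there is nothing in the paper to compare against. Your Steps~1--3 are the standard route: show that every $a$ with $M(a)\neq M_B(a)$ has $s_A(a)$ defined and lying weakly above $M_B(a)$ in $a$'s list, then show $\mathrm{next}_A$ preserves the set $S$, and finally extract a cycle by finiteness. The one delicate point you flagged---that $M_B(\mathrm{next}_A(a))\neq s_A(a)$, proved by producing a blocking pair for $M_B$ otherwise---is handled correctly.
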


    \section{Stable Matching with Comparison Queries}
    \label{sec:pair}
    
    In this section, we consider the comparison query model with one-sided uncertainty. We first discuss our results on the problems of verifying that a given matching is stable and finding an $A$-optimal matching, before moving on to our main results regarding the competitive ratio for finding/verifying a $B$-optimal matching.
	Finally, we briefly consider the variation with two-sided uncertainty and
	give tight bounds for the problem of verifying a stable matching in that model.
    
    \subsection{Verifying That a Given Matching Is Stable}
    \label{subsec:pair-verifyStable}
    
In this section, we consider the {\em verification problem} where we are given a matching $M$ and our task is to verify that $M$ is indeed stable. We give a $1$-competitive algorithm.
As argued in the previous section, we only care about the competitive ratio if the given matching $M$ is indeed stable.
If the given matching $M$ is not stable, the algorithm must detect this, but its number of queries can be arbitrarily much larger than the optimal number of queries for detecting that $M$ is not stable.
In the case of one-sided uncertainty, as we consider it here, a single query is sufficient for the optimum to identify a blocking pair.

The following auxiliary lemma shows that exploiting transitivity cannot reduce the number of \comparison queries
to an agent $b\in B$ if one needs to find out the preference relationship of~$k$ agents
from $A$ to one particular agent from $A$ in $b$'s preference list.

\begin{lemma}\label{lem:transitivenohelp}
Consider two agents $a\in A$, $b\in B$ and assume that
there are $k$ agents $a_1,\ldots,a_k\in A\setminus\{a\}$ for
each of which we want to know whether $b$ prefers that agent to $a$ or not.
Then exactly $k$ \comparison queries to $b$ are necessary and sufficient to obtain~this~knowledge.
\end{lemma}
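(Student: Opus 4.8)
The claim has two parts: $k$ comparison queries suffice (trivial: just query $\prefer(b,a_i,a)$ for each $i$), and $k$ are necessary. The sufficiency direction needs no work, so the content is the lower bound of $k$. The plan is to use an adversary argument that maintains the partial order on $b$'s preferences consistent with the queries answered so far, and to show that if fewer than $k$ queries are asked, the adversary can keep the relationship between some $a_i$ and $a$ undetermined.

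**Main argument.** I would proceed by an adversary/indistinguishability argument. Suppose for contradiction that some (possibly adaptive) querying strategy asks at most $k-1$ comparison queries to $b$ and always determines all $k$ relationships $a_i \text{ vs. } a$. Each comparison query to $b$ has the form $\prefer(b,x,y)$ for some $x,y\in A$; the adversary answers consistently with some fixed but not-yet-fully-revealed total order $\prec_b$. The key structural observation is that the set of relationships that are \emph{logically forced} by a set $Q$ of answered comparison queries is exactly the transitive closure of the directed graph whose vertex set is $A$ and whose edges are the answered comparisons (an edge $x \to y$ meaning $x \prec_b y$). So after $k-1$ queries, the forced relationships form the transitive closure of a graph with $k-1$ edges. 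For each pair $\{a_i, a\}$, the relationship is determined iff there is a directed path between $a_i$ and $a$ in this graph.

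**The counting step — the core of the proof.** I want to show: a directed acyclic graph (the answered comparisons can be assumed acyclic, since a consistent total order makes them so) on vertex set $A$ with only $k-1$ edges cannot create a path between $a$ and each of $a_1,\dots,a_k$. Consider the undirected version of this graph restricted to relevance: for each $i$, a path from $a$ to $a_i$ exists only if $a$ and $a_i$ lie in the same connected component of the underlying undirected graph. A graph with $k-1$ edges has at most $k-1$ edges in the component containing $a$, hence at most $k-1$ vertices other than $a$ reachable-or-co-reachable — more carefully, the component of $a$ in a graph with $k-1$ total edges contains at most $k-1$ edges and therefore, being connected on its vertex set, has at most $k$ vertices total, i.e. at most $k-1$ vertices besides $a$. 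So at least one of $a_1,\dots,a_k$ is in a different component from $a$, and its relationship to $a$ is not forced. The adversary then sets $\prec_b$ to have that $a_i$ either just above or just below $a$ — both choices are consistent with all answered queries, since the two elements are in different components — producing two instances the algorithm cannot distinguish but which have different answers. This contradicts correctness. I should also note the minor point that the algorithm may adaptively choose queries, but the adversary argument handles this: at the end of the algorithm's execution it has asked $\le k-1$ queries, and the above shows some pair is undetermined regardless of which queries were asked.

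**Expected obstacle.** The only subtle point is making the "forced relationships = transitive closure of the answer graph" claim fully rigorous — i.e. that whenever a pair is \emph{not} in the transitive closure, there genuinely exist two consistent total orders disagreeing on that pair, with the additional constraint that we may also need to preserve the specific relationships $a_i \prec_b a$ or $a \prec_b a_i$ that the adversary has already committed to for \emph{other} indices. This is handled by a standard topological-sort / linear-extension argument: any partial order extends to a total order, and if $a$ and $a_i$ are incomparable in the partial order generated by the answered queries, one can find linear extensions placing $a_i$ on either side of $a$. I would state this as a short lemma about linear extensions and then apply it. I do not expect this to be genuinely hard, just the place where care is needed.
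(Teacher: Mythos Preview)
Your proposal is correct and follows essentially the same approach as the paper: both build an undirected graph on $A$ with one edge per query and argue that with fewer than $k$ edges some $a_j$ must lie in a different connected component from $a$, leaving that relationship undetermined. The paper's proof is more terse---it simply observes that $\{a,a_1,\ldots,a_k\}$ has $k+1$ vertices while the query graph has fewer than $k$ edges, hence cannot connect them all---and leaves the indistinguishability step implicit, whereas you spell out the adversary and linear-extension details; but the core combinatorial argument is the same.
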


\begin{proof}
The $k$ queries $\prefer(b,a,a_i)$ for $i=1,\ldots,k$ are clearly
sufficient.
Assume that $k'$ queries to~$b$, for some $k'<k$, are sufficient
to obtain the desired information. Consider the auxiliary
graph $H$ with vertex set $V_H=A$ and an edge
$\{a',a''\}$ for each of those $k'$ queries $\prefer(b,a',a'')$.
As the set $A'=\{a,a_1,a_2,\ldots,a_k\}$ has $k+1$ vertices
and $H$ has fewer than $k$ edges, the set $A'$ intersects at
least two different connected components of~$H$. Let
$a_j$, for some $1\le j\le k$, be a vertex that does
not lie in the same component as~$a$. Then the $k'$ queries
do not show whether $b$ prefers $a_j$ to $a$ or not,
which contradicts $k'$ queries~being~sufficient.
\end{proof}

If an algorithm
obtains for agents $x$ and $y$ with $y\neq M(x)$
the information that $M(x) \prec_x y$ (either via a direct query or via transitivity), we say that
the algorithm \emph{relates} $y$ to $M(x)$ for~$x$.
By Lemma~\ref{lem:transitivenohelp}, if the optimum
relates $k$ different elements to $M(x)$ for~$x$,
it needs to make $k$ queries to~$x$.
A pair $(x,y)$ with $y\neq M(x)$ such that the optimum
relates $y$ to $M(x)$ for $x$ is called a \emph{relationship pair} (for~$x$).
Lemma~\ref{lem:transitivenohelp} implies the following.

\begin{corollary}\label{cor:relinst}
The total number of relationship pairs (for all agents $x$)
is a lower bound on the number of comparison queries the optimum makes.
\end{corollary}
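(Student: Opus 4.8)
The plan is to derive the bound agent by agent from Lemma~\ref{lem:transitivenohelp} and then sum. First I would note that relationship pairs only ever occur for agents $x\in B$: the preferences of every $a\in A$ are known at the outset, so the optimum never needs a query to learn whether $a$ prefers some $y$ to $M(a)$. Hence the total number of relationship pairs equals $\sum_{x\in B}k_x$, where $k_x$ denotes the number of relationship pairs $(x,y)$ for the fixed agent $x\in B$.

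Next, fix $x\in B$ and let $Y_x=\{y_1,\dots,y_{k_x}\}\subseteq A\setminus\{M(x)\}$ be the agents that the optimum relates to $M(x)$ for $x$; by definition the optimum has learned, for each $y_i\in Y_x$, whether $x$ prefers $y_i$ to $M(x)$ (in fact that $M(x)\prec_x y_i$). Applying Lemma~\ref{lem:transitivenohelp} with $b=x$, $a=M(x)$, and the $k$ agents taken to be $Y_x$, we conclude that the optimum makes at least $k_x$ comparison queries to $x$. Since comparison queries addressed to distinct agents of $B$ are distinct, the optimum makes at least $\sum_{x\in B}k_x$ comparison queries in total, which is precisely the total number of relationship pairs over all agents, as claimed.

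The one subtlety to be careful about is that Lemma~\ref{lem:transitivenohelp} is stated for a scenario in which the $k$ comparisons are the whole task, while here the optimum's queries to $x$ may simultaneously serve other purposes. This is harmless: the necessity direction of the lemma argues via the graph $H$ whose edges are exactly the queries incident to $x$, showing that if $H$ had fewer than $k_x$ edges then some $y_i\in Y_x$ would lie in a different connected component of $H$ from $M(x)$ and the relationship between $y_i$ and $M(x)$ would remain undetermined. Thus at least $k_x$ of the optimum's queries are incident to $x$ regardless of what else they accomplish, so the per-agent bounds may be added without any double counting, which is exactly what makes the summation in the previous paragraph legitimate. Given that this is essentially immediate from Lemma~\ref{lem:transitivenohelp}, I expect no real obstacle beyond making this per-agent accounting precise.
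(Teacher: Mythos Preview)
Your proof is correct and takes essentially the same approach as the paper, which simply states that the corollary is implied by Lemma~\ref{lem:transitivenohelp} without spelling out the per-agent accounting and summation; you have filled in exactly those details. The observation in your third paragraph about the necessity direction of Lemma~\ref{lem:transitivenohelp} remaining valid regardless of what other information the queries reveal is the right way to make the implicit step precise.
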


    \begin{theorem}
    \label{thm:1-comp-verification}
        Given a stable matching instance with one-sided uncertainty and a stable matching~$M$, there is a $1$-competitive algorithm that uses $\sum_{a\in A} |\{b\in B \mid b \prec_a  M(a)\}|$ queries for verifying that $M$ is stable in the \comparison query model.
    \end{theorem}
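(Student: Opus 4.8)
The plan is to give an explicit algorithm, argue it is correct and makes exactly $N:=\sum_{a\in A}|\{b\in B\mid b\prec_a M(a)\}|$ queries whenever $M$ is stable, and then use Corollary~\ref{cor:relinst} to show that the optimum also needs at least $N$ queries; together this yields $1$-competitiveness (and in fact $\OPT=N$).

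\medskip\noindent\emph{The algorithm.} For every $a\in A$ and every $b\in B$ with $b\prec_a M(a)$, issue the \comparison query $\prefer(b,a,M(b))$; this is well defined since $b\prec_a M(a)$ forces $b\neq M(a)$ and hence $a\neq M(b)$. If some query returns $a$, i.e.\ $a\prec_b M(b)$, then $(a,b)$ is a blocking pair and the algorithm declares $M$ not stable; if every query returns the respective partner $M(b)$, it declares $M$ stable. Correctness is immediate: every blocking pair $(a,b)$ satisfies $b\prec_a M(a)$ and is therefore tested, so if no tested pair blocks then $M$ is stable, and conversely a blocking pair is detected by its own query. Since the map $(a,b)\mapsto\prefer(b,a,M(b))$ is injective over the pairs with $b\prec_a M(a)$, the algorithm makes exactly $N$ queries when $M$ is stable.

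\medskip\noindent\emph{The lower bound for the optimum.} Assume $M$ is stable and fix a pair $(a,b)$ with $b\prec_a M(a)$. Stability of $M$ means $(a,b)$ is not blocking, so $b$ does not prefer $a$ to $M(b)$, i.e.\ $M(b)\prec_b a$. Any correct algorithm, in particular the optimum, must after its queries be able to rule out every $B$-profile in which $M$ is unstable; in particular its query answers must force $M(b)\prec_b a$, since otherwise (the answers cannot force the false relation $a\prec_b M(b)$) there would be a linear extension of the known relations, i.e.\ a consistent $B$-profile, with $a\prec_b M(b)$, under which $(a,b)$ blocks, contradicting correctness. Hence the optimum relates $a$ to $M(b)$ for $b$, so $(b,a)$ is a relationship pair for the optimum. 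As $(a,b)\mapsto(b,a)$ is injective, the optimum has at least $N$ distinct relationship pairs, so by Corollary~\ref{cor:relinst} it makes at least $N$ queries. Combining with the previous paragraph, the algorithm is $1$-competitive.

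\medskip\noindent\emph{Main obstacle.} The step requiring care is the claim that the optimum's answers must \emph{force} $M(b)\prec_b a$ for each tested pair: one should observe that the information an algorithm holds about $\prec_b$ is exactly the transitive closure of its query answers to $b$ (the viewpoint already underlying Lemma~\ref{lem:transitivenohelp} and Corollary~\ref{cor:relinst}), so any relation not contained in this closure can be violated by some consistent linear extension. Everything else is routine bookkeeping of which pairs are queried and how they map to relationship pairs.
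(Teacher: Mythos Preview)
Your proof is correct and follows essentially the same approach as the paper: the same algorithm (query $\prefer(b,a,M(b))$ for each $(a,b)$ with $b\prec_a M(a)$), the same count $N$, and the same lower bound via relationship pairs and Corollary~\ref{cor:relinst}. Your write-up is in fact slightly more explicit than the paper's in justifying why the optimum's answers must force $M(b)\prec_b a$ (via the linear-extension argument) and in checking injectivity of the map from tested pairs to relationship pairs.
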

    
\begin{proof}
	Since the preferences of agents on the $A$-side are not uncertain, for each $(a,b) \notin M$, we already know whether $M(a) \prec_a b$. If $M(a) \prec_a b$, then we do not have to execute any queries to show that $(a,b) \notin M$ is not a blocking pair. Otherwise, every feasible query set has to prove $M(b) \prec_b a$.
	Therefore, for each element $b\in B$, there is a uniquely determined number $n_b$ of elements of $A$
	that any solution (including the optimum) must relate to $M(b)$ for $b$.
	Let $K=\sum_{b\in B} n_b$ be the resulting number of relationship pairs.

	Our algorithm simply
	queries $\prefer(b,M(b),a)$ for every pair $(a,b)\notin M$ for which $b\prec_a M(a)$. These are exactly $K$ queries.
    As the total number of relationship pairs is~$K$, 
	the optimum must also make $K$ queries
	(Corollary~\ref{cor:relinst}).
	Hence, our algorithm~is~$1$-competitive.
\end{proof}

The proof of Theorem~\ref{thm:1-comp-verification} implies that the stable matching
that maximizes the number of queries that are required to prove stability is
the $ B$-optimal matching.

    \begin{corollary}
        \label{cor:Bopt:stability:lb}
        The number of comparison queries needed to verify that the $B$-optimal
        matching is stable is $\max_{ M\mbox{ stable}} \sum_{a\in A} |\{b\in B \mid b \prec_a  M(a)\}|$.
    \end{corollary}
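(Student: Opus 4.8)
The plan is to show that Corollary~\ref{cor:Bopt:stability:lb} follows almost immediately from Theorem~\ref{thm:1-comp-verification} together with the two structural facts the deferred acceptance algorithm gives us about the $B$-optimal matching. First I would recall that Theorem~\ref{thm:1-comp-verification} establishes two things simultaneously: there is an algorithm that verifies stability of a given stable matching $M$ with exactly $Q(M) := \sum_{a\in A} |\{b\in B \mid b \prec_a M(a)\}|$ queries, and this is optimal, since $Q(M)$ is precisely the number of relationship pairs that every feasible query set (hence the optimum) must produce. So for \emph{any} fixed stable matching $M$, the exact number of comparison queries needed to verify that $M$ is stable equals $Q(M)$. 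Applying this with $M$ equal to the $B$-optimal stable matching $M_B$ immediately gives that the number of queries needed to verify $M_B$ is stable equals $Q(M_B)$, so it remains only to argue $Q(M_B) = \max_{M \text{ stable}} Q(M)$.

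For that equality I would invoke the characterization of the $B$-optimal matching from the preliminaries: in $M_B$, every agent $b\in B$ is matched to the highest-preference element of $A$ to which $b$ can be matched in any stable matching, and dually every agent $a\in A$ is matched to the \emph{lowest}-preference element of $B$ to which $a$ can be matched in any stable matching. The quantity $Q(M)$ is $\sum_{a\in A} |\{b\in B \mid b \prec_a M(a)\}|$, i.e.\ for each $a$ it counts how many agents of $B$ lie strictly above $M(a)$ in $a$'s preference list. Since $M_B(a)$ is the lowest-ranked (in $\prec_a$) partner $a$ receives among all stable matchings, the set $\{b \mid b \prec_a M(a)\}$ is maximized, for each individual $a$, when $M = M_B$; summing over $a\in A$ yields $Q(M_B) \ge Q(M)$ for every stable matching $M$. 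Combined with the fact that $Q(M_B)$ is itself attained by the stable matching $M_B$, this gives $Q(M_B) = \max_{M \text{ stable}} Q(M)$, which is exactly the claimed formula.

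The only point requiring a little care — and the thing I would single out as the main obstacle, though it is a mild one — is justifying that $M_B(a)$ is simultaneously the worst stable partner of $a$ for \emph{every} $a\in A$, so that the per-$a$ maximization can be combined into a single global statement rather than only bounding a sum by a sum of maxima that might not be jointly achievable. This is not a genuine difficulty because it is precisely the content of the Gale--Shapley structure theorem quoted in Section~\ref{sec:prelim}: the $B$-proposing deferred acceptance algorithm outputs a single matching $M_B$ in which this worst-partner property holds for all $a\in A$ at once. With that in hand the argument is a two-line deduction, and I would present it as such: the formula is the value $Q(M_B)$ of the already-known exact bound, and the $\max$ is justified by the extremal property of $M_B$ on the $A$-side.
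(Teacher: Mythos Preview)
Your proposal is correct and follows essentially the same approach as the paper: the paper states the corollary as a direct consequence of Theorem~\ref{thm:1-comp-verification} (which gives the exact count $Q(M)$ for any stable $M$) together with the fact that the $B$-optimal matching is $A$-pessimal, and you spell out precisely these two ingredients. Your observation that the per-agent maximization is simultaneously achieved at $M_B$ is exactly the content of the Gale--Shapley structure theorem already quoted in Section~\ref{sec:prelim}, so there is no gap.
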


    \subsection{Finding an \texorpdfstring{$A$}{A}-Optimal Stable Matching} 
    \label{subsec:pair-Aoptimal}
    
    We obtain the following positive result by adapting the classical deferred acceptance algorithm~\cite{GaleShapley1962} with $A$ making the proposals. 
    
    \begin{theorem}
    \label{th:pairwiseAoptimal}%
    For a given stable matching instance with one-sided uncertainty, there is a $1$-competitive algorithm for finding a stable matching in the \comparison query model. The algorithm actually finds an $A$-optimal stable matching.
    \end{theorem}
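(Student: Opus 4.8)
The plan is to run the classical deferred acceptance algorithm with the agents of $A$ proposing, realising each acceptance test by a single \comparison query: whenever an agent $a\in A$ proposes to an agent $b\in B$ that already holds a provisional partner $a'$, the algorithm queries $\prefer(b,a,a')$ and lets $b$ keep the preferred one of $a$ and $a'$. When $b$ receives its first proposal it simply holds the proposer, so no query is needed. By Gale and Shapley~\cite{GaleShapley1962} this produces the $A$-optimal stable matching $M_A$, and the \comparison answers collected during the run, together with transitivity, certify stability of $M_A$: each agent $b$'s sequence of provisional partners improves monotonically, so for every $(a,b)\notin M_A$ with $b\prec_a M_A(a)$ the agent $a$ must have proposed to $b$ and been rejected in favour of some $a'$ with $M_A(b)\preceq_b a'\prec_b a$, whence $(a,b)$ is not a blocking pair. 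Hence the algorithm indeed solves the problem and outputs an $A$-optimal stable matching.

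The next step is to count the queries. In the $A$-proposing run, an agent $a\in A$ proposes exactly to the agents $b$ with $b\preceq_a M_A(a)$, i.e.\ it makes $1+|\{b\in B\mid b\prec_a M_A(a)\}|$ proposals. Every proposal received by an agent $b\in B$ triggers exactly one \comparison query, except for the first proposal $b$ receives, and every $b$ receives at least one proposal since $M_A$ is a perfect matching. Summing over all agents, the total number of proposals is $n+\sum_{a\in A}|\{b\in B\mid b\prec_a M_A(a)\}|$, and subtracting the $n$ first proposals gives a query count of exactly $\sum_{a\in A}|\{b\in B\mid b\prec_a M_A(a)\}|$, which is precisely the quantity appearing in \Cref{thm:1-comp-verification} for the matching $M_A$.

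It remains to lower bound the optimum. Any correct algorithm outputs some stable matching $M^*$ (not necessarily $M_A$) and must be able to prove it stable. Fix $b\in B$; for every $a\in A$ with $b\prec_a M^*(a)$, so that $(a,b)$ could be blocking on $a$'s side, stability of $M^*$ forces $M^*(b)\prec_b a$, and the optimum can rule out the blocking pair only by establishing exactly this, i.e.\ by relating $a$ to $M^*(b)$ for $b$. These are $\sum_{b\in B}|\{a\in A\mid b\prec_a M^*(a)\}|=\sum_{a\in A}|\{b\in B\mid b\prec_a M^*(a)\}|$ distinct relationship pairs, so by \Cref{cor:relinst} (equivalently, \Cref{lem:transitivenohelp}) the optimum makes at least that many queries. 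Finally, $A$-optimality of $M_A$ gives $M_A(a)\preceq_a M^*(a)$ for every $a\in A$, hence $\{b\in B\mid b\prec_a M_A(a)\}\subseteq\{b\in B\mid b\prec_a M^*(a)\}$ and therefore $\sum_{a\in A}|\{b\in B\mid b\prec_a M_A(a)\}|\le\sum_{a\in A}|\{b\in B\mid b\prec_a M^*(a)\}|\le\OPT$. Thus the algorithm uses at most $\OPT$ queries and is $1$-competitive.

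The main obstacle is the careful bookkeeping in the query count for deferred acceptance, in particular checking that no \comparison query can be saved through transitivity during the run (each $a\in A$ proposes to each $b\in B$ at most once, so a proposal to $b$ is always the first encounter of that $A$-agent by $b$), and, on the lower-bound side, handling the fact that the optimum may certify a stable matching different from $M_A$; it is exactly this discrepancy that is absorbed by the $A$-optimality of $M_A$.
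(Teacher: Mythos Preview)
Your proof is correct and follows essentially the same approach as the paper: run $A$-proposing deferred acceptance with one \comparison query per non-first proposal, count the resulting queries, and lower-bound $\OPT$ via the relationship-pair argument (\Cref{lem:transitivenohelp}/\Cref{cor:relinst}) combined with $A$-optimality to handle the case where the optimum certifies a different stable matching. The only cosmetic difference is that you tally proposals from the $A$-side (per proposer) whereas the paper tallies from the $B$-side (per receiver), but these are the same sum by double counting.
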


    \begin{proof}
We utilize the classical deferred acceptance algorithm
\cite{GaleShapley1962} where $A$
makes the proposals, and we assume the reader's familiarity with it. An unmatched agent $a\in A$
makes a proposal to their preferred agent $b\in B$ by whom it has never been rejected. If $b$ is unmatched, then $b$ accepts the proposal and $a$ and $b$ get matched. If $b$ is
currently matched to some $a'\in A$, the algorithm
makes a query $\prefer(b,a,a')$. If the query result
is that $b$ prefers $a$ to $a'$, then $b$ accepts
$a$'s proposal and becomes matched to $a$ while
$a'$ becomes unmatched. Otherwise, $b$ rejects
the proposal and remains matched to~$a'$. The algorithm terminates if all agents in $A$ are matched or if every unmatched agent in $A$ has been declined by all agents in~$B$.

We show that this algorithm makes the minimum
possible number of \comparison~queries.

We execute the deferred acceptance algorithm
with $A$ as the proposers, so it produces
an $A$-optimal stable matching. Consider
an arbitrary agent~$b\in B$. Assume that
$b$ gets  matched to $a\in A$ in the stable
matching. Let $A_b=\{a,a_1,a_2,\ldots,a_{k_b}\}$ (for
some $0\le k_b <n$) be the set of agents of $A$
that proposed to $b$ during the execution
of the algorithm. Note that $|A_b|=k_b+1$ and
the algorithm has executed $k_b$ queries to~$b$, each
for two agents of $A_b$ (the first agent of
$A$ that proposed to $b$ did not require a query).
Observe that each of $a_1,\ldots,a_{k_b}$ gets
matched with an agent of $B$ that they rank
strictly lower than $b$ in the final matching.

We claim that no stable matching can be identified
without making at least $k_b$ queries to~$b$.
Let $M'$ be an arbitrary stable matching.
Note that $b$ rates $M'(b)$ at least as highly
as~$a$, because $M$ is the worst possible
matching for~$B$. Furthermore, for each $a_i$
with $1\le i\le k_b$, we have that $a_i$
rates $b$ strictly higher than $M'(a_i)$
because $M$ is $A$-optimal and $a_i$ rates $M(a_i)$
strictly lower than $b$.
Thus, for none of the pairs $(a_i,b)$ for $1\le i \le k_b$
to be a blocking pair, the queries of any
optimal query set must establish that $b$ rates
$M'(b)$ more highly than every $a_i$ for $1\le i\le k_b$.
This can only be achieved with at least $k_b$ queries.

The same argument applies to each $b\in B$, so we
have that both the optimal number of queries and
the number of queries made by the algorithm
are equal to $\sum_{b\in B} k_b$.
\end{proof}

The proof of Theorem~\ref{th:pairwiseAoptimal} implies
that, for the $A$-optimal stable matching $M$, the optimal number
of queries to prove that $M$ is stable
equals the optimal number of queries to prove that $M$ is
stable and $A$-optimal. Hence, proving optimality comes in this case for free.
    
    \subsection{Finding a \texorpdfstring{$B$}{B}-Optimal Stable Matching}
    \label{subsec:pair-Boptimal}

    The problem of finding a $B$-optimal stable matching is substantially more challenging
	in general.
	For the special case where all $A$-side preference lists are equivalent, however,
	there exists a $1$-competitive algorithm:

	\begin{observation}
	\label{obs:stability-equal-B-prefs}%
	If all agents of $A$ have the same preference list, then there is a $1$-competitive algorithm that uses $\frac{n^2-n}{2}$ queries for finding a stable matching under one-sided uncertainty.
	\end{observation}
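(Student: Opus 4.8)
The plan is to exploit the fact that when all $A$-side preference lists coincide, the stable matching is \emph{unique} and has a simple greedy structure, and then to match the query count of a greedy algorithm against the relationship-pair lower bound of Corollary~\ref{cor:relinst}.

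First I would fix notation: order $B=\{b_1,\dots,b_n\}$ so that every $a\in A$ prefers $b_1$ to $b_2$ to $\dots$ to $b_n$ (the common $A$-list). The structural claim is that in \emph{every} stable matching $M$, the agent $M(b_1)$ is $b_1$'s most preferred agent in $A$. Indeed, if $a^\ast$ denotes this favorite agent and $M(b_1)\neq a^\ast$, then $a^\ast=M(b_j)$ for some $j\neq 1$ (recall $M$ is a perfect matching); since $a^\ast$ prefers $b_1$ to $b_j$ and $b_1$ prefers $a^\ast$ to $M(b_1)$, the pair $(a^\ast,b_1)$ blocks $M$, a contradiction. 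Deleting $b_1$ and $a^\ast$ yields a smaller instance that again has all $A$-lists equal and on which the restriction of any stable matching of the original instance is again stable; by induction the stable matching is unique, namely the greedy matching in which $b_i$ is matched to its favorite agent among those not matched to $b_1,\dots,b_{i-1}$.

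The algorithm then processes $b_1,\dots,b_n$ in this order and, for each $b_i$, finds its favorite among the $n-i+1$ still-unmatched agents by a linear scan using $n-i$ comparison queries, matching $b_i$ to that agent. This outputs the unique stable matching using $\sum_{i=1}^{n}(n-i)=\frac{n^2-n}{2}$ queries. For the matching lower bound I would argue via Corollary~\ref{cor:relinst}: since the stable matching is unique, the optimum must output exactly this matching $M$ and prove it stable, and for every pair $(a,b_i)\notin M$ with $M(a)=b_j$ and $j>i$, agent $a$ prefers $b_i$ to $b_j=M(a)$, so any feasible query set must establish $M(b_i)\prec_{b_i} a$, i.e., it must relate $a$ to $M(b_i)$ for $b_i$. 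Counting these forced relationship pairs gives $\sum_{i=1}^{n}(n-i)=\frac{n^2-n}{2}$, so by Corollary~\ref{cor:relinst} the optimum also makes at least $\frac{n^2-n}{2}$ queries and the algorithm is $1$-competitive.

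The only point requiring a little care — more a routine check than a genuine obstacle — is the induction establishing uniqueness of the stable matching: one has to verify that restricting a stable matching to the sub-instance obtained by deleting $b_1$ and its partner yields a stable matching there, and that this sub-instance still has identical $A$-side lists. Once uniqueness is in hand, the algorithm's cost and the forced-relationship-pair lower bound coincide by a direct summation, which gives the claim.
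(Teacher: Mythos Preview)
Your proof is correct and follows essentially the same approach as the paper: both establish uniqueness of the stable matching via the same greedy induction, both use the identical algorithm (process $b_1,\dots,b_n$ in order and find each $b_i$'s favorite among the unmatched $A$-agents with $n-i$ comparisons), and both count $\sum_{i=1}^n (n-i)=\frac{n^2-n}{2}$. The only cosmetic difference is in how the matching lower bound is phrased: the paper argues directly that determining the top choice of $b_i$ among $n-i+1$ candidates requires $n-i$ queries (each query eliminates at most one candidate), whereas you route the same count through the relationship-pair machinery of Corollary~\ref{cor:relinst}; these are two ways of saying the same thing.
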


\begin{proof}
Let $B=\{b_1,\ldots,b_n\}$ be indexed by the preference list of the elements in $A$, i.e., $b_1$ is the first choice of the elements in $A$ and $b_i$ is the $i$th choice of the elements in $A$.

Let $a^*_1$ be the (initially unknown) first choice of $b_1$ and, for $i > 1$, let $a_i^*$ denote the first choice of $b_i$ among the elements of $A \setminus \{a^*_1,\ldots,a^*_{i-1}\}$. We can inductively argue that every stable matching must match $b_i$ to $a_i^*$ for all $i \in [n]$.

Thus, every algorithm (including the optimal solution) has to find the first choice of $b_i$ in $A \setminus \{a^*_1,\ldots,a^*_{i-1}\}$ for all $i \in [n]$. For each $b_i$ this requires a minimum of $|A \setminus \{a^*_1,\ldots,a^*_{i-1}\}|-1 = n-i$ queries, as each query can exclude at most one element from being the first choice of $b_i$. This implies that every algorithm needs at least $\sum_{i=1}^n (i-1) = \frac{n^2-n}{2}$ queries.

The following algorithm matches this lower bound. This is essentially the deferred acceptance algorithm used in the proof of~\Cref{th:pairwiseAoptimal} but considers the agents of $B$ in a specific order: (i) Iterate through $B$ in order of increasing indices (ii)
For each $b_i$, determine the first choice among the not yet matched elements of $A$ and match $b_i$ to that choice. For each $b_i$ this requires at most $n-i$ queries and, thus, a total of $\sum_{i=1}^n (i-1) = \frac{n^2-n}{2}$ queries.
\end{proof}

    For arbitrary instances, 
    we first describe an algorithm that is $\mathcal{O}(n)$-competitive.
    Complementing this result, we then show that every (randomized) online algorithm has competitive ratio at least $\Omega(n)$ for finding a $B$-optimal stable matching.
    Finally, we show that the offline problem of determining the optimal number of queries for computing a $B$-optimal stable matching is NP-hard and give an $\mathcal{O}(\log n \log \log n)$-approximation.  
    
    \subsubsection{Algorithm for Computing a \texorpdfstring{$B$}{B}-Optimal Stable Matching}
    \label{subsubsec:pair-B-algo}

    We first consider the problem of verifying that a given stable $B$-optimal matching is indeed stable and $B$-optimal.
    An algorithm for this problem needs to prove that $M$ has no blocking pair and that no alternating cycle with respect to $M$ is a rotation.
    For each potential blocking pair $(a,b)$ that cannot be ruled out
    because of $a$'s preferences, such an algorithm has to prove that it is not a blocking pair
    using a suitable query to $b$ as discussed in~\Cref{subsec:pair-verifyStable}.
    
    The more involved part is proving that $M$ is $B$-optimal. By~\Cref{lem:rotations-for-optimal}, $M$ is $B$-optimal if and only if it does not expose a rotation. Based on the known $A$-side preferences, each edge $(a,b)$ with $M(a) \prec_a b$ could potentially be an $r$-edge. Thus, each cycle that alternates between such edges and edges in $M$ could potentially be a rotation.
    An algorithm that proves $B$-optimality has to prove for each such alternating cycle that at least one non-matching edge $(a,b)$ on that cycle is not an $r$-edge.
    By definition,
     there are two possible ways to prove that an edge $(a,b)$ with $M(a) \prec_a b$ is not an $r$-edge:
    \begin{enumerate}
    \item Query $b$ and find out that $b$ prefers $M(b)$ to $a$. Then, $b$ cannot be $s_A(a)$ as $b$ does not prefer $a$~to~$M(b)$.
    \item Query one $b'$ with $M(a) \prec_a b' \prec_a b$ and find out that
    $b'$ prefers $a$ to $M(b')$.
    Then, $b$ cannot be the most-preferred element in
    $a$'s list that prefers $a$ to her current partner, as $b'$
    has that property and is preferred over~$b$.
    \end{enumerate}
    
    Corollary~\ref{cor:Bopt:stability:lb} gives the optimal number of queries to prove that the matching $M$ is stable, which is a lower bound on the optimal number of queries necessary to prove that $M$ is stable and $B$-optimal.
    Let $Q(M)$ denote this number. However, there exist instances where $Q(M)=0$ and $Q_B(M)>0$ for the optimal number $Q_B(M)$ of queries to prove that $M$ is stable \emph{and} $B$-optimal.
    Consider an instance where all elements of $A$ have distinct first choices and let $M$ denote the matching that matches all elements of $A$ to their respective first choice. Then, there is a realization of $B$-side preference lists such that the matching $M$ is also $B$-optimal. For this realization we have $Q(M)=0$ and $Q_B(M)>0$. 
    This implies that the lower bound of Corollary~\ref{cor:Bopt:stability:lb} is not strong enough for analyzing algorithms that verify $B$-optimality as we cannot prove that such an algorithm makes at most $c \cdot Q(M)$ queries.
    We give another lower bound on the optimal number of queries.
    
    \begin{lemma}
        \label{lem:bopt:numberlb:pw}
        The optimal number of queries for verifying (and thus also for finding)
      the $B$-optimal stable matching is at least $n-1$ for every instance of the stable matching problem with one-sided uncertainty.
    \end{lemma}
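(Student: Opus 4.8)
The plan is to prove the bound for the verification problem; as observed in \Cref{sec:prelim}, for the optimum this is the same task as finding the $B$-optimal stable matching, so the bound transfers. Fix an instance with $B$-optimal stable matching $M$ and let $Q$ be any query set that is \emph{feasible}, i.e.\ such that $M$ is stable and $B$-optimal in every realization of the $B$-side preferences consistent with the answers to $Q$; the goal is to show $|Q|\ge n-1$.

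The first step is a counting bound. For $b\in B$ let $c_b$ be the number of agents $x\in A\setminus\{M(b)\}$ for which the answers to $Q$ already determine whether $b$ prefers $x$ to $M(b)$. By \Cref{lem:transitivenohelp} (with $M(b)$ in the role of the distinguished agent), $Q$ must contain at least $c_b$ queries to $b$, hence $|Q|\ge\sum_{b\in B}c_b$. Let $S\subseteq A$ be the set of agents $x$ such that $Q$ determines, for at least one $b\in B$, whether $b$ prefers $x$ to $M(b)$. Each $x\in S$ is counted by $c_b$ for some $b$, so $\sum_b c_b\ge|S|$, and it remains to prove $|S|\ge n-1$, i.e.\ that at most one agent of $A$ is outside $S$.

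Suppose for contradiction that two distinct agents $a_1,a_2\in A$ lie outside $S$. Each $a_i$ must then be matched by $M$ to its first choice: otherwise, with $b_0$ the first choice of $a_i$ we have $b_0\prec_{a_i}M(a_i)$, so feasibility forces $M(b_0)\prec_{b_0}a_i$ (else $(a_i,b_0)$ blocks $M$ in some consistent realization), and this determines $b_0$'s preference between $a_i$ and $M(b_0)$, contradicting $a_i\notin S$. Since $M(a_1)\ne M(a_2)$ and each $M(a_i)$ is the first choice of $a_i$, we have $M(a_1)\prec_{a_1}M(a_2)$ and $M(a_2)\prec_{a_2}M(a_1)$, so $(a_1,M(a_2))$ and $(a_2,M(a_1))$ are candidate $r$-edges and the alternating cycle $C$ through $(a_1,M(a_1)),(a_2,M(a_2))$ with the non-matching edges $(a_1,M(a_2)),(a_2,M(a_1))$ is a potential rotation. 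I would then construct a realization consistent with $Q$ in which $C$ is exposed. It suffices to impose: $M(a_2)$ prefers $a_1$ to $a_2$; $M(a_1)$ prefers $a_2$ to $a_1$; every agent that $a_1$ ranks strictly between $M(a_1)$ and $M(a_2)$ prefers its $M$-partner to $a_1$; and every agent that $a_2$ ranks strictly between $M(a_2)$ and $M(a_1)$ prefers its $M$-partner to $a_2$. Under these constraints $s_A(a_1)=M(a_2)$ and $s_A(a_2)=M(a_1)$, so $C$ becomes an exposed rotation; applying it yields a stable matching in which no agent of $B$ is worse off and some agent of $B$ is strictly better off, so $M$ is not $B$-optimal in this realization, contradicting feasibility. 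Each imposed constraint only fixes the outcome of a single comparison of $a_1$ (or of $a_2$) against the $M$-partner of the agent whose list is being constrained; had $Q$ forced the opposite outcome it would have placed $a_1$ (or $a_2$) into $S$. The constraints are mutually compatible (no ``in-between'' agent equals $M(a_1)$ or $M(a_2)$, as those are first choices, and an agent lying in-between for both $a_1$ and $a_2$ is only asked to rank its own partner above both), and for each $b$ the relation obtained from $Q$'s answers together with these constraints is acyclic---a cycle would again force a comparison putting $a_1$ or $a_2$ into $S$---so it extends to a total order on $b$'s list, giving the desired realization. Hence $|S|\ge n-1$ and $|Q|\ge\sum_b c_b\ge|S|\ge n-1$.

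The main obstacle is precisely this last step: one must check that the single candidate two-pair rotation chosen for the pair of ``$S$-free'' agents can coexist with \emph{all} answers already committed to by $Q$, and this is exactly where $a_1,a_2\notin S$ is used. At a higher level, the point of the lemma is that the cost of certifying $B$-optimality is not visible in the stability lower bound of \Cref{cor:relinst}---which can be $0$ for the $B$-optimal matching---yet this hidden cost is always at least $n-1$: a $B$-optimality certificate must, up to one exceptional agent, pin down the position of every agent of $A$ relative to some other agent inside some $B$-list.
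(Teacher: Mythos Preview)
Your proof is correct and follows essentially the same approach as the paper's: both arguments show that for all but at most one agent $a\in A$ the query set must determine, for some $b\neq M(a)$, how $b$ ranks $a$ relative to $M(b)$, using the blocking-pair argument when $a$ is not matched to its top choice and the two-pair rotation argument otherwise. Your presentation differs only cosmetically---you route the counting step through \Cref{lem:transitivenohelp} and $\sum_b c_b$ rather than the paper's direct ``each $a$-query is distinct'' argument, and you are more explicit than the paper about constructing the consistent realization in the rotation case.
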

    
    \begin{proof}
    Let $M$ be the $B$-optimal stable matching for the given instance.
    For $a\in A$, call a query an $a$-query if it reveals for some $b\in B$ with $b\neq  M(a)$ whether $b$ prefers $a$ to her current
    partner or not.
    We claim that an optimal algorithm
    needs to make at least one $a$-query for every $a\in A$ with at most a single exception.
    Assume for a contradiction that
    the optimal algorithm makes neither an $a$-query nor an $a'$-query for two distinct elements $a,a'\in A$.
    If $a$ prefers $ M(a)$ over $ M(a')$ and $a'$ prefers $ M(a')$ over $ M(a)$,
    then it is impossible to exclude the possibility that
    $(a,M(a)),(a',M(a'))$ is a rotation exposed in~$ M$,
    because the only
    way to prove that $(a, M(a'))$ is not an $r$-edge is via an $a$-query, and similarly for $(a', M(a))$.
    If $a$ prefers $ M(a')$ over $ M(a)$, then an $a$-query to $ M(a')$ is necessary to exclude
    that $(a, M(a'))$ is a blocking pair.
    If $a'$ prefers $ M(a)$ over $ M(a')$, then an $a'$-query to $ M(a)$ is necessary for the
    analogous reason.
    Hence, the claim holds. We note that the $n-1$ queries whose existence is asserted by
    the claim are distinct: A query to some $b\in  B$ cannot be an $a$-query and at the same time an $a'$-query
    for some $a'\neq a$, as the query $\prefer(b,a,a')$ cannot yield previously unknown information about how both
    $a$ and $a'$ compare to $ M(b)$ in $b$'s preference~list.
    \end{proof}
    
    Next, we give an $\mathcal{O}(n)$-competitive algorithm for finding a $B$-optimal matching and analyze it by exploiting the lower bounds on the optimal number of queries of~\Cref{cor:Bopt:stability:lb} and~\Cref{lem:bopt:numberlb:pw}. For pseudocode see~\Cref{alg:compqueries:Bopt}.
    
    \begin{enumerate}
    \item Find an $ A$-optimal matching using the 1-competitive algorithm for $ A$-optimal matchings.
    \item Search for a rotation by asking, for every $a\in A$, the elements of $ B$ that are below
      $ M(a)$ in $a$'s preference list in order of $\prec_a$ whether they prefer $a$ to their current partner, until
      either an $r$-edge is found or we know that $a$ has no $r$-edge.
    \item If a rotation $R$ is found, apply that rotation. The agents $a\in  A\cap R$ then no longer have
      a known $r$-edge as their previous $r$-edge is now their matching edge.
      However, the new $r$-edge partner of such an agent must be further down the preference list of $a$ than the old one.  
      The elements $a\in A\setminus R$
      that had an $r$-edge to an element $b\in  B\cap R$ can no longer be sure that their edge to $b$ is an $r$-edge since $b$ has a new matching partner $M(b)$,
      so $b$ must be asked again whether it prefers the new partner over $a$ when searching for the new $r$-edge of $a$.
      The algorithm then repeats Step $2$ but starts the search for the new rotation edge of an agent $a \in A$ at either the previous rotation edge (if  $a\in A\setminus R$) or at the direct successor of the new $M(a)$ in $\prec_a$ (if $a \in A \cap R$).
    \item When a state is reached where it is known for every $a\in  A$ what its $r$-edge is
      (or that it has no $r$-edge) but the $r$-edges do not form a rotation, the algorithm
      terminates and outputs~$ M$.
    \end{enumerate}

    \begin{algorithm}[t]
	\KwIn{Instance of the stable matching problem with one-sided uncertainty.}
	$M \gets $ $A$-optimal matching computed using~\Cref{th:pairwiseAoptimal} \label{line:Bopt:Aopt}\;
	$N \gets \{a \in A \mid M(a) \text{ is last in} \prec_a\}$ \tcc*[r]{Elements without $r$-edge}
	$\forall a\in A \setminus N\colon p(a) \gets$ first element in $\prec_a$ after $M(a)$\;
	$\forall a\in A \setminus N\colon r(a) \gets \top$ \tcc*[r]{known $r$-edges or $\top$ if $r$-edge still unknown} 
	\ForEach{$a \in A\setminus N$\label{line:Bopt:startloop}}{
		\Repeat{$r(a) \not= \top$ or $a \in N$}{
			$t \gets \prefer(p(a),a,M(p(a)))$ \label{line:Bopt:query}\;
			\If{$t = M(p(a))$}{
				\lIf{$p(a)$ is the last element of $\prec_a$}{
					$N \gets N \cup \{a\}$
				}\lElse{
					$p(a) \gets$ direct successor of $p(a)$ in $\prec_a$\label{line:Bopt:md1}
				}
			}	
			\Else{
				$r(a) \gets p(a)$\tcc*[r]{$r(a)$ and $a$ form an $r$-edge}
			}
		}
	}
	\If{$M$ exposes a rotation $R$\label{line:Bopt:rotation}}{
		$M \gets $ stable matching constructed from $M$ by applying $R$\;
		$N \gets N \cup \{a \in A \cap R \mid M(a) \text{ is last in} \prec_a\}$\;
		$\forall a \in (A\cap R) \setminus N \colon r(a) \gets \top$ and  $p(a) \gets$ first element in $\prec_a$ after $M(a)$\label{line:Bopt:md2}\;
		$\forall a \in (A\setminus R)\setminus N \colon p(a) \gets r(a)$ and $r(a) \gets \top$\;
		Jump to Line~\ref{line:Bopt:startloop}\label{line:Bopt:uc}\;
		
	}
	\Return $M$\;
	\caption{Algorithm to find the $B$-optimal stable matching using comparison queries.}
	\label{alg:compqueries:Bopt}
\end{algorithm} 
    
    \begin{theorem}\label{thm:O(n)-comp-B-optimal}
    Given a stable matching instance with one-sided uncertainty,
    the algorithm is $\mathcal{O}(n)$-competitive for finding a $B$-optimal stable matching using \comparison queries.
    \end{theorem}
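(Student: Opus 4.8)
The plan is to charge the queries of \Cref{alg:compqueries:Bopt} against two lower bounds on the optimal number $\OPT$ of queries. Write $M_A$ and $M_B$ for the $A$-optimal and $B$-optimal stable matchings, let $\mathrm{rank}_a(b)$ denote the position of $b$ in $\prec_a$, and set $Q(M)=\sum_{a\in A}(\mathrm{rank}_a(M(a))-1)=\sum_{a\in A}|\{b\in B\mid b\prec_a M(a)\}|$ for a matching $M$. By \Cref{thm:1-comp-verification} and \Cref{cor:Bopt:stability:lb}, even verifying that $M_B$ is stable costs $Q(M_B)$ queries for the optimum, and since finding $M_B$ is at least as hard as verifying its stability, $\OPT\ge Q(M_B)$; by \Cref{lem:bopt:numberlb:pw}, also $\OPT\ge n-1$. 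It therefore suffices to show that the algorithm makes $O(Q(M_B)\cdot n + n^2)$ queries.

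First I would account for the call in Line~\ref{line:Bopt:Aopt}: by \Cref{th:pairwiseAoptimal} it makes exactly $Q(M_A)$ queries, and since $M_A$ is $A$-optimal every $a\in A$ weakly prefers $M_A(a)$ to $M_B(a)$, whence $Q(M_A)\le Q(M_B)\le\OPT$. Next I would bound the number $T$ of rotations the algorithm applies. Applying a rotation turns a stable matching into a stable matching in which no $B$-agent is worse off and every $A$-agent on the rotation is strictly worse off, and by \Cref{lem:rotations-for-optimal} the algorithm halts exactly when the current matching equals $M_B$; thus the matchings it visits form a chain $M_A=M_0,M_1,\dots,M_T=M_B$. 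Telescoping $\mathrm{rank}_a(M_i(a))$ along this chain shows that $\sum_{a\in A}(\mathrm{rank}_a(M_B(a))-\mathrm{rank}_a(M_A(a)))=Q(M_B)-Q(M_A)$ equals the sum over the $T$ applied rotations of the descent $\sum_{a\in A\cap R_i}(\mathrm{rank}_a(M_{i+1}(a))-\mathrm{rank}_a(M_i(a)))$ of the $i$-th rotation $R_i$; since this descent is at least $|A\cap R_i|\ge 2$, we get $2T\le Q(M_B)-Q(M_A)\le\OPT$, so $T\le\OPT/2$, and the loop starting at Line~\ref{line:Bopt:startloop} is (re)entered at most $T+1$ times.

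Finally I would count the queries $\prefer(p(a),a,M(p(a)))$ inside the repeat-loops, classifying each as \emph{type~(a)} if it returns $M(p(a))$ (then $p(a)$ advances to its successor in $\prec_a$, or $a$ joins $N$) and \emph{type~(b)} otherwise (then $r(a)$ is fixed and $a$'s loop exits). The key invariant I would verify is that $p(a)$ is non-decreasing in $\prec_a$ for the entire execution: after a rotation $R$ is applied, for $a\in A\cap R$ the new $M(a)$ equals the stored $r(a)$, which equals the current $p(a)$, so Line~\ref{line:Bopt:md2} advances $p(a)$ by one position, while for $a\in A\setminus R$ the pointer is reset to $r(a)$, which at that moment already equals $p(a)$; in every case $p(a)$ stays or advances. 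Hence each agent makes at most $n$ type~(a) queries, at most $n^2$ in all, which is $O(n\cdot\OPT)$ since $\OPT\ge n-1$. Within one execution of the outer loop each agent makes at most one type~(b) query, so at most $n$ per execution and at most $(T+1)n\le(\OPT/2+1)n$ in total. Adding the $Q(M_A)\le\OPT$ queries of Line~\ref{line:Bopt:Aopt} yields $O(n\cdot\OPT)$ queries overall.

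The step I expect to be the main obstacle is the last one. Naively, re-examining an $r$-edge after every applied rotation could cost $\Theta(n)$ queries per rotation, and a priori there could be $\Theta(n^2)$ rotations, suggesting a $\Theta(n^3)$ blow-up. The argument therefore has to (i) pin down the global monotonicity of all the pointers $p(a)$ by checking each pointer-reset rule in \Cref{alg:compqueries:Bopt}, including the seemingly wasteful re-query of an $r$-edge that the last rotation did not touch, which is what lets one charge the type~(a) queries to $n^2=O(n\cdot\OPT)$; and (ii) pay for the single fresh type~(b) query of each agent in each pass using the bound $T=O(\OPT)$ from the descent/rotation-mass estimate, which is precisely what rules out the naive blow-up.
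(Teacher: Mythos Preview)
Your proof is correct and follows essentially the same approach as the paper: bound the Line~\ref{line:Bopt:Aopt} call by $\OPT$, use $\OPT\ge n-1$ and $\OPT\ge 2T$ (your telescoping of $Q(\cdot)$ is exactly the paper's ``each rotation increases the stability lower bound by at least~$2$''), and split the remaining queries into a class bounded by $n^2$ and a class bounded by $n\cdot T$. The only cosmetic difference is the partition---the paper splits into \emph{good} (first time the pair $(a,p(a))$ is queried) versus \emph{bad} (a repeat), whereas you split by the query's \emph{answer} (type~(a) advances $p(a)$, type~(b) fixes $r(a)$)---but both partitions rest on the same monotonicity of $p(a)$ and yield the same $O(n)\cdot\OPT$ bound.
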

    
    \begin{proof}
    Let $\OPT$ denote the number of queries made by an optimal algorithm.
    Since finding any stable matching can never require more queries than finding a $B$-optimal stable matching, \Cref{th:pairwiseAoptimal} implies that the algorithm makes at most $\OPT$ queries in the first step.
            
    We analyze the queries executed \emph{after} the first algorithm step.
    Call a query \emph{good} if it is the first query
    involving a specific combination of an agent $a \in A$ and an agent $p(a) \in B$,
    i.e., the first query of form $\prefer(p(a),a,M(p(a)))$ for that specific combination of $p(a)$ and $a$.
    All other queries are \emph{bad}.
    By definition of good queries, the algorithm makes at most $n^2$ such queries since this is the maximum number of good queries that can exist.
    Since $\OPT\ge n-1$ (\Cref{lem:bopt:numberlb:pw}),  the number of good queries~is~$\mathcal{O}(n)\cdot \OPT$.
    
    Consider the bad queries and a fixed $a \in A$. In the second step of the algorithm, it repeatedly executes queries of the form $\prefer(p(a),a,M(p(a)))$ with $p(a) \in B$ to find out if $(a,p(a))$ is an $r$-edge, starting with the direct successor $p(a)$ of $M(a)$ in $\prec_a$. 
    If $(a,p(a))$ is not an $r$-edge, then the next query partner $p(a)$ for $a$ moves one spot down
    in the list $\prec_a$. This is repeated until the $r$-edge $(a,r(a))$ of $a$ is found or we know
    that $a$ does not have~an~$r$-edge. Here, $r(a)$ refers to the element that forms an $r$-edge with $a$.
    
    If $a$ does not have an $r$-edge, there will be no more queries for $a$ again as all $b \in B$ that are lower than $M(a)$ in the preference list of $a$ prefer their current partner $M(b)$ over $a$ and this partner will only improve during the execution of the algorithm. 
    Otherwise, $a$ will be considered again in the second step of the algorithm only if a rotation was found in the third step. If $a$ is part of the rotation, then $r(a) = p(a)$ will be the new matching partner of $a$ and $p(a)$ will be moved one spot down in $\prec_a$. Only if $a$ is not part of the rotation, $p(a) = r(a)$ remains unchanged by definition of the third step.
    In conclusion, the next query partner $p(a)$ of $a$ moves down one spot in $\prec_a$ after each query for $a$ unless a rotation is found that does not contain $a$. This means that a bad query for $a$ can only occur as the first query for $a$ after a new rotation that does not involve $a$ is found. Thus, each rotation can cause at most $|A| - 2$ bad queries (at least two members of $A$ must be involved in the rotation). Thus, the number of bad queries is at most $(n-2) \cdot n_r$ for the number of applied rotations $n_r$. 
    
    For each applied rotation, at least two agents of $A$ get re-matched to agents of $B$
    that are lower down on their preference lists than their previous matching partner. This increases the lower bound
    on the optimal number of queries to show stability (cf.~\Cref{cor:Bopt:stability:lb}) by at least~$2$. 
    Thus,~\Cref{cor:Bopt:stability:lb} implies $\OPT \ge 2 \cdot n_r$. We can conclude that the number of bad queries is at most $(n-2) \cdot n_r \le \mathcal{O}(n) \cdot \OPT$.
    \end{proof}
    
    \subsubsection{Lower Bound for Computing a \texorpdfstring{$B$}{B}-Optimal
    Matching}
    \label{subsubsec:pair-B-lower}
    
    We give
    a lower bound of $\Omega(n)$ on the competitive ratio 
    for finding a $B$-optimal stable matching with comparison queries.
    This implies that the result of~\Cref{thm:O(n)-comp-B-optimal} is, asymptotically, best-possible.
    Further, the lower bound also holds for
    verifying that a given matching is~$B$-optimal.
    
    \begin{theorem}
        \label{thm:B-opt:det-lb}
        In the \comparison query model, every deterministic or randomized online algorithm for finding a $B$-optimal stable matching in a stable matching instance with one-sided uncertainty
        has competitive ratio~$\Omega(n)$.
    \end{theorem}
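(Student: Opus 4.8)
I would prove the bound via Yao's principle. The goal is to exhibit a probability distribution over stable matching instances with one-sided uncertainty such that \emph{every} deterministic online algorithm makes $\Omega(n^2)$ queries in expectation, while every instance in the support satisfies $\OPT=\Theta(n)$. Since $\OPT\ge n-1$ on every instance (\Cref{lem:bopt:numberlb:pw}), this gives an expected ratio of $\Omega(n)$ for deterministic algorithms, and Yao's principle transfers it to randomized algorithms. The same distribution should keep the $B$-optimal matching ambiguous until $\Omega(n^2)$ queries have been made, so the argument also covers verification of a given stable and $B$-optimal matching.

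The $A$-side preferences would be fixed (not random) and chosen to meet two opposing requirements. First, the $A$-optimal matching should be cheap to obtain relative to $\OPT$ (the $1$-competitive algorithm of \Cref{th:pairwiseAoptimal} handles this, so the dominant cost lies in proving $B$-optimality), and the \emph{potential} rotation structure seen by an algorithm that only knows the $A$-side should be rich: there should be $\Omega(n^2)$ candidate $r$-edges $(a,b)$ with $M_A(a)\prec_a b$, none excludable a priori, so that certifying ``no rotation is exposed'' looks expensive. Second, the \emph{realised} $B$-side preferences, drawn from the distribution, should with high probability make the $B$-optimal matching $M^*$ lie near the tops of the $A$-preference lists --- so that $\sum_{a}(\mathrm{rank}_a(M^*(a))-1)=O(n)$ and hence, by \Cref{cor:Bopt:stability:lb}, proving stability of $M^*$ costs $O(n)$ --- and should collapse the potential $r$-edge structure of $M^*$ into something with an $O(n)$-query certificate of acyclicity: most agents' $r$-edges become determined by a single query each, and a constant number of ``sink'' agents, matched near the bottoms of their own lists, absorb every potential cycle while costing only $O(n)$ in total to handle (low matched rank makes ``$a$ has no $r$-edge'' essentially free and makes stability cheap for the remaining agents). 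Combined with \Cref{lem:bopt:numberlb:pw} this pins $\OPT=\Theta(n)$ over the whole support.

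For the lower bound on the algorithm I would use an averaging/information-theoretic argument: conditioned on any transcript of fewer than $cn^2$ answers (for a suitable constant $c$), the conditional distribution over the hidden $B$-preferences still assigns positive probability to two realisations with \emph{different} $B$-optimal matchings --- for instance one in which some so-far-unprobed candidate edge closes an exposed rotation and one in which it does not --- so the algorithm cannot yet have output a provably correct matching. Hence every deterministic algorithm makes $\Omega(n^2)$ queries in expectation. Working with a distribution here, rather than a purely adaptive adversary, is precisely what yields the randomized lower bound.

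The hard part will be reconciling the two demands on the construction. Placing a matched partner high in an $A$-list makes the corresponding stability check cheap but leaves that agent's true $r$-edge ambiguous and expensive to pin down; placing it low does the reverse. The construction must therefore arrange that, in the realised instance, the few agents whose $r$-edge verification is genuinely expensive are exactly the $O(1)$ sink agents whose low matched position makes ``no cycle through them'' cheap to certify, while each of the remaining $\Theta(n)$ agents needs only one query --- and yet, from the viewpoint of an algorithm that does not know which realisation occurred, all $\Omega(n^2)$ candidate edges stay live until nearly all of them have been probed. Verifying that this balance is simultaneously achievable and that the distribution indeed keeps the $B$-optimal matching ambiguous for $\Omega(n^2)$ queries is the technical core; an alternative route to the same $\Omega(n^2)$ algorithm cost, which one might use if the ``search'' version is hard to control, is to force $\Theta(n)$ successive rotations through a common hub element, each triggering $\Theta(n)$ re-queries, while keeping the total downward displacement (hence $\OPT$) at $\Theta(n)$.
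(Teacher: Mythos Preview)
Your plan has the right skeleton---Yao's principle, fixed $A$-side, a distribution on $B$-preferences with $\OPT=\Theta(n)$ everywhere and $\Omega(n^2)$ expected queries for any deterministic algorithm---and this is exactly what the paper does. What is missing is the construction itself, which you correctly flag as the technical core but do not supply. The paper's construction is quite specific and worth knowing: every $a_i$ has $b_i$ as top choice, so the $A$-optimal matching $M=\{(a_i,b_i)\}$ costs zero stability queries; $A$ is split into $A_1$ (first half), $A_2$, and a single sink $a_{n-1}$; the $B_1$-preferences are fixed so that $(a_{2j},b_{2j}),(a_{2j+1},b_{2j+1})$ is a potential rotation for each $j$, which can only be ruled out by proving that $a_{2j}$ or $a_{2j+1}$ has an $r$-edge into~$B_2$; and the randomness hides \emph{which} $b\in B_2$ is that $r$-edge partner, uniformly over $\Theta(n)$ candidates. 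This forces the algorithm to spend $\Theta(n)$ queries per resolved pair, for $n/4$ pairs, while an optimum that knows the realisation spends $O(n)$ total.

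Two points where your plan diverges from, or is weaker than, the paper. First, you propose to argue that after few queries two realisations with \emph{different $B$-optimal matchings} remain possible. The paper instead keeps $M$ as the $B$-optimal matching in \emph{every} realisation; what varies is only the location of the certifying $r$-edges. This is cleaner: it makes the verification lower bound immediate (the algorithm must output $M$ together with a proof of $B$-optimality), and it sidesteps the need to control $\OPT$ across realisations whose $B$-optimal matchings differ. Your version can be made to work, but you would have to check that the ``rotation present'' realisations still have $\OPT=O(n)$. Second, your ``alternative route''---forcing $\Theta(n)$ successive rotations through a hub, each triggering $\Theta(n)$ re-queries---is an analysis of one particular algorithm (essentially \Cref{alg:compqueries:Bopt}), not a lower bound for all algorithms. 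It would show tightness of \Cref{thm:O(n)-comp-B-optimal} for that algorithm but does not prove the theorem.
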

    
    \begin{proof}
        We first show the statement for deterministic algorithms.
        Consider the following instance (cf.~Fig.~\ref{fig:BoptLB}) with two sets of agents $A = \{ a_0, \dots, a_{n-1} \}$ and $B = \{ b_0, \dots, b_{n-1} \}$, and assume $n/2$ to be even. If this is not the case, then the constant factor in the lower bound will be slightly~worse.
    
    \begin{figure*}
    \centering
\begin{tikzpicture}[scale=0.92,transform shape]
	\node[black, draw, fill, circle, label=left:${a_0}$] (a0) at (0,-.6*0) {};
	\node[black, draw, fill, circle, label=right:${b_0}$] (b0) at (5,-.6*0) {};
	\draw[thick] (a0) -- (b0);
	
	\node[black, draw, fill, circle, label=left:${a_1}$] (a1) at (0,-.6*1) {};
	\node[black, draw, fill, circle, label=right:${b_1}$] (b1) at (5,-.6*1) {};
	\draw[thick] (a1) -- (b1);
	
	\node[black, draw, fill, circle, label=left:${a_2}$] (a2) at (0,-.6*2) {};
	\node[black, draw, fill, circle, label=right:${b_2}$] (b2) at (5,-.6*2) {};
	\draw[thick] (a2) -- (b2);
	
	\node[black, draw, fill, circle, label=left:${a_3}$] (a3) at (0,-.6*3) {};
	\node[black, draw, fill, circle, label=right:${b_3}$] (b3) at (5,-.6*3) {};
	\draw[thick] (a3) -- (b3);
	
	\node[black, draw, fill, circle, label=left:${a_4}$] (a4) at (0,-.6*4) {};
	\node[black, draw, fill, circle, label=right:${b_4}$] (b4) at (5,-.6*4) {};
	\draw[thick] (a4) -- (b4);
	
	\node[black, draw, fill, circle, label=left:${a_5}$] (a5) at (0,-.6*5) {};
	\node[black, draw, fill, circle, label=right:${b_5}$] (b5) at (5,-.6*5) {};
	\draw[thick] (a5) -- (b5);
	
	\node[black, draw, fill, circle, label=left:${a_6}$] (a6) at (0,-.6*6) {};
	\node[black, draw, fill, circle, label=right:${b_6}$] (b6) at (5,-.6*6) {};
	\draw[thick] (a6) -- (b6);
	
	\node[black, draw, fill, circle, label=left:${a_7}$] (a7) at (0,-.6*7) {};
	\node[black, draw, fill, circle, label=right:${b_7}$] (b7) at (5,-.6*7) {};
	\draw[thick] (a7) -- (b7);
	
	\node[black, draw, fill, circle, label=left:${a_8}$] (a8) at (0,-.6*8) {};
	\node[black, draw, fill, circle, label=right:${b_8}$] (b8) at (5,-.6*8) {};
	\draw[thick] (a8) -- (b8);
	
	\node[black, draw, fill, circle, label=left:${a_9}$] (a9) at (0,-.6*9) {};
	\node[black, draw, fill, circle, label=right:${b_9}$] (b9) at (5,-.6*9) {};
	\draw[thick] (a9) -- (b9);
	
	\node[black, draw, fill, circle, label=left:${a_{10}}$] (a10) at (0,-.6*10) {};
	\node[black, draw, fill, circle, label=right:${b_{10}}$] (b10) at (5,-.6*10) {};
	\draw[thick] (a10) -- (b10);
	
	\node[black, draw, fill, circle, label=left:${a_{11}}$] (a11) at (0,-.6*11) {};
	\node[black, draw, fill, circle, label=right:${b_{11}}$] (b11) at (5,-.6*11) {};
	\draw[thick] (a11) -- (b11);
	
	\draw[dashed] (a6) -- (b11);
	\draw[dashed] (a7) -- (b11);
	\draw[dashed] (a8) -- (b11);
	\draw[dashed] (a9) -- (b11);
	\draw[dashed] (a10) -- (b11);
	\draw[dashed] (a11) -- (b11);
	
	\node[] (a0.list) at (-3, -.6*0) {$(b_0, b_6, b_7, b_8, b_9, b_{10}, b_{11}, \ast)$};
	\node[] (a1.list) at (-3, -.6*1) {$(b_1, b_6, b_7, b_8, b_9, b_{10}, b_{11}, \ast)$};
	\node[] (a2.list) at (-3, -.6*2) {$(b_2, b_6, b_7, b_8, b_9, b_{10}, b_{11}, \ast)$};
	\node[] (a3.list) at (-3, -.6*3) {$(b_3, b_6, b_7, b_8, b_9, b_{10}, b_{11}, \ast)$};
	\node[] (a4.list) at (-3, -.6*4) {$(b_4, b_6, b_7, b_8, b_9, b_{10}, b_{11}, \ast)$};
	\node[] (a5.list) at (-3, -.6*5) {$(b_5, b_6, b_7, b_8, b_9, b_{10}, b_{11}, \ast)$};
	
	\node[] (a6.list) at (-4.2, -.6*6) {$(b_6, b_{11}, \ast)$};
	\node[] (a7.list) at (-4.2, -.6*7) {$(b_7, b_{11}, \ast)$};
	\node[] (a8.list) at (-4.2, -.6*8) {$(b_8, b_{11}, \ast)$};
	\node[] (a9.list) at (-4.2, -.6*9) {$(b_9, b_{11}, \ast)$};
	\node[] (a10.list) at (-4.15, -.6*10) {$(b_{10}, b_{11}, \ast)$};
	\node[] (a11.list) at (-4.45, -.6*11) {$(b_{11}, \ast)$};
	
	\node[] (b0.list) at (6.6, -.6*0) {$(a_1, a_0, \ast)$};
	\node[] (b1.list) at (6.6, -.6*1) {$(a_0, a_1, \ast)$};
	\node[] (b2.list) at (6.6, -.6*2) {$(a_3, a_2, \ast)$};
	\node[] (b3.list) at (6.6, -.6*3) {$(a_2, a_3, \ast)$};
	\node[] (b4.list) at (6.6, -.6*4) {$(a_5, a_4, \ast)$};
	\node[] (b5.list) at (6.6, -.6*5) {$(a_4, a_5, \ast)$};
	
	\node[] (b6.list) at (7.9, -.6*6) {$(\diamond, a_6, a_7, a_8, a_9, a_{10}, a_{11}, \diamond)$};
	\node[] (b7.list) at (7.9, -.6*7) {$(\diamond, a_7, a_8, a_9, a_{10}, a_{11}, a_6, \diamond)$};
	\node[] (b8.list) at (7.9, -.6*8) {$(\diamond, a_8, a_9, a_{10}, a_{11}, a_6, a_7, \diamond)$};
	\node[] (b9.list) at (7.9, -.6*9) {$(\diamond, a_9, a_{10}, a_{11}, a_6, a_7, a_8, \diamond)$};
	\node[] (b10.list) at (7.9, -.6*10) {$(\diamond, a_{10}, a_{11}, a_6, a_7, a_8, a_9, \diamond)$};
	\node[] (b11.list) at (7.9, -.6*11) {$(\diamond, a_6, a_7, a_8, a_9, a_{10}, a_{11}, \diamond)$};

\end{tikzpicture}
    \caption{Example of the lower bound construction for finding $B$-optimal matchings. The solid edges
    represent the $ A$-optimal matching $M$ that needs to be shown to be also $ B$-optimal using queries.
    The dashed edges represent rotation edges. 
     Each of the agents in $\{a_0,a_1,\ldots,a_5\}$ also has a rotation edge to some agent in $\{b_6,b_7,b_8,b_9,b_{10},b_{11}\}$ that is not shown.
    An asterisk ($\ast$) indicates that the remaining agents are placed in arbitrary order in the preference list. 
    A diamond ($\diamond$) indicates that the adversary decides in response to the queries made by the algorithm which of the agents in $\{a_0,a_1,\ldots,a_5\}$ are placed at the front of the preference list and which at the back.}
    \label{fig:BoptLB}
    \end{figure*}
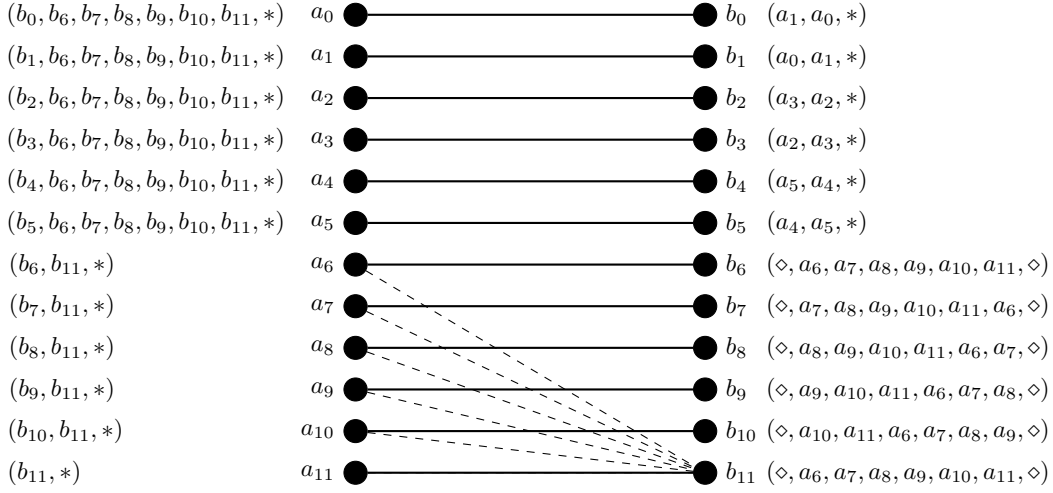

        We partition $A$ into three subsets $A_1 = \{a_0,\ldots,a_{\frac{n}{2}-1}\}$, $A_2 = \{a_{\frac{n}{2}},\ldots,a_{n-2}\}$ and $A_3 = \{a_{n-1}\}$.
        and $B$ into two subsets, $B_1 = \{b_0,\ldots,b_{\frac{n}{2}-1}\}$ and $B_2 = B \setminus B_1$.

        In the following, we
        first define the known A-side preferences and the adversarial strategy. Then we give bounds on the optimal number of queries and the number of queries made by any deterministic algorithm.

        \textbf{A-side preferences.} Consider the following preference lists for $A$.
        For an agent $a_i \in A_1$, the preference list consists of three parts, $P(a_i) = P_1(a_i) P_2(a_i) P_3(a_i)$.
        The first part of the list is the corresponding $i^{th}$ agent of $B$, i.e., $P_1(a_i) = (b_i)$.
        The second part consists of the $\frac{n}{2}$ agents of set $B_2$ in increasing order, i.e., $P_2(a_i) = (b_{\frac{n}{2}}, b_{\frac{n}{2}+1}, \dots, b_{n-2}, b_{n-1})$.
        The last part $P_3(a_i)$ consists of the 
        agents of $B_1\setminus\{b_i\}$ in an arbitrary order.

        For an agent $a_i \in A_2$, the preference list starts with agent $b_i$, followed by the last agent $b_{n-1}$ and finally an arbitrary order of the remaining agents in group $B$.
        For the single agent $a_{n-1}$ in set $A_3$, the preference list starts with agent $b_{n-1}$ followed by an arbitrary order of the remaining agents of set $B$.
          
        \textbf{Adversarial strategy.} The preference lists of the agents in set $B$ are unknown.
        The instance has the $A$-optimal matching $M = \{(a_i,b_i) \mid 0 \le i < n\}$.
        The adversary will ensure that this matching is also $B$-optimal.
        Since each $a_i$ is matched with its top choice, proving stability does not require any queries.
        To prove $B$-optimality of $M$, the executed queries must prove that there is no rotation.
    
        The adversary will ensure that $M$ can be shown to be a $B$-optimal matching
        with $\mathcal{O}(n)$ queries while any deterministic algorithm is forced to make $\Omega(n^2)$ queries.
        
        To achieve this, the adversary sets the preferences of the agents of $ B_1$ independent of the algorithm's actions as follows. For each odd $i \in \{1,3,5,\ldots,(n/2)-1\}$, we let the preference list of $b_i$ start with $a_{i-1}$ followed by $a_i$ and finally all remaining agents in $A$ in an arbitrary order. The preference list of $b_{i-1}$ starts with $a_i$ followed by $a_{i-1}$ and then the remaining agents in $A$ in an arbitrary order.
        Using these preferences, the sequences $(a_{i-1},b_{i-1}),(a_{i},b_{i})$ are potential rotations. To prove that such a sequence is not a rotation, an algorithm has to show that either $(a_{i-1},b_{i})$ or $(a_{i},b_{i-1})$ is not an $r$-edge.
        The only way of showing this is to prove that either $a_{i-1}$ or $a_i$ instead has an $r$-edge to some agent of $B_2$.    
        
        Consider any deterministic algorithm.
        The adversary selects the preferences of the agents in $B_2$ in such a way that the following properties
        hold:
        \begin{description}
        \item[(P1)] Agent $a_{n-1}$ has no $r$-edge. Note that, by the definition of the preferences of $B_1$ above, $a_{n-1}$ already cannot have a rotation edge to an agent of $B_1$.
        \item[(P2)] Each agent in $A_2$ has an $r$-edge to $b_{n-1}$.
        \item[(P3)] Each agent $a_i$ in $A_1$ has an $r$-edge to some agent $b_{t(i)}$ of $B_2$. The choice of that agent $b_{t(i)}$ depends on the queries made by the algorithm.  
        \end{description}
    
        The properties (P1)--(P3) ensure that there is no rotation, as the alternating path starting at any $a \in A \setminus \{a_{n-1}\}$
        with the $r$-edge of that agent ends at~$a_{n-1}$, which has no $r$-edge.
     
        Let $t(i)$ denote the index of the agent $b_{t(i)}$ of $B_2$ to which $a_i\in A_1$ has an $r$-edge.
        This index is determined by the adversary in response to the queries made by the algorithm. 
        Concretely, the
        adversary lets $t(i)$
        be the index of the \emph{last agent} $b_j \in B_2$ for which the algorithm makes a query of the
        form $\prefer(b_j,a_i,\ast)$, where we use $\prefer(b_j,a_i,\ast)$ as a short-hand to refer to queries
        $\prefer(b_j,a_i,a_{i'})$ or $\prefer(b_j,a_{i'},a_i)$ for some~$i'$.
        If the algorithm doesn't
        make queries of this form for all $b_j\in B_2$, then let $t(i)$ be an arbitrary $j$ such that
        the algorithm does not make a query of this form for $b_j\in B_2$. The adversary sets
        the preferences of the agents in $B_2$ in such a way that $b_{t(i)}$ prefers $a_i$ to her partner $a_{t(i)}$ in the A-optimal matching $M$ while all other $b_j \in B_2$ prefer their partner in the A-optimal matching $a_j$  to~$a_i$.
        For example, if the algorithm was to make queries $\prefer(b_j,a_i,a_j)$ for all $b_j\in B_2$ (which it
        might do in order to check whether $a_i$ has an $r$-edge to one of these agents),
        the adversary would answer $\mfalse$ to the first $\frac{n}{2}-1$ such queries and $\mtrue$ to the
        final one.
        
    To achieve the properties (P1)--(P3), the adversary sets the preferences of each agent $b_j$ of $B_2$ as follows:
        \begin{itemize}
        \item $b_j$ prefers $a_i\in A_1$ to $a_j$ if and only if $j=t(i)$.
        \item If $j\neq n-1$, $b_j$ prefers $a_j$ to $a_{j'}$ for all $j'\neq j$, $a_{j'}\in A_2$.
        \item If $j=n-1$, $b_j$ prefers $a_{j'}$ to $a_{j}$ for all $j'\neq j$, $a_{j'}\in A_2$.
        \end{itemize}
        This can be done by letting the preference list of $b_j$ contain first
        the agents $a_i\in A_1$ with $j=t(i)$ in some order,
        then the agents of $A_2$ in some order (only
        ensuring for $b_j$ that $a_j$ comes first among the agents of $A_2$ if
        $j\neq n-1$ and that $a_j$ comes last among the agents of $A_2$ if
        $j=n-1$),
        and finally the agents $a_i\in A_1$ with
        $j\neq t(i)$ in some order.

        \textbf{Upper bound on the optimal query cost.} 
        An optimal solution for the instance can prove that matching $M$ is $B$-optimal by verifying that the properties (P1)--(P3) 
        indeed hold by using at most $n-1+\frac{n}{2}-1+\frac{n}{2}=2n-2$ queries
        as follows:
        \begin{itemize}
        \item The $n-1$ queries $\prefer(b_i,a_{n-1},a_i)=\mfalse$ for $i\le n-2$ show that $a_{n-1}$ has no $r$-edge.
        \item Each of the $\frac{n}{2}-1$ queries $\prefer(b_{n-1},a_{\frac{n}{2}+i},a_{n-1})=\mtrue$ for $0\le i\le \frac{n}{2}-2$ shows
             that $a_{\frac{n}{2}+i}$ has an $r$-edge to $b_{n-1}$. This is because each agent of $A_2$ has $b_{n-1}$ in its preference list directly after its current matching partner. So if $b_{n-1}$ prefers an agent of $A_2$ over its current partner $a_{n-1}$, then this directly gives us an $r$-edge.
        \item Each of the $\frac{n}{2}$ queries $\prefer(b_{t(i)},a_{i},a_{t(i)})=\mtrue$ for $0\le i\le \frac{n}{2}-1$ shows
             that $a_i$ has a rotation edge to some agent in $B_2$. Based on the result of such a query, $a_i$ must have an $r$-edge to either
         $b_{t(i)}$ or to some other agent of $ B_2$ that is higher up in $a_i$'s preference list.
        \end{itemize}
        \textbf{Lower bound on the algorithm's query cost.}
        We provide to the algorithm the information
        that $a_{n-1}$ has no $r$-edge, that each agent of $ A_2$ has an $r$-edge to $b_{n-1}$, and we reveal the full preference lists of all agents in $B_1$.
        Clearly, this extra information can only reduce the number of queries a deterministic algorithm may need as it could simply ignore the~information.
    
        For each agent $a_i \in A_1$, the algorithm will either make queries of the form $\prefer(b_j,a_i,\ast)$
        for all $b_j\in B_2$ or not. Call $a_i$ \emph{resolved} in the former case
        and \emph{unresolved} otherwise. For any resolved agent, the algorithm may have determined that it has an
        $r$-edge to an agent of $B_2$ and hence cannot be part of a rotation. For the unresolved agents,
        the algorithm cannot know whether they have an $r$-edge to an agent in $B_2$.
    
        As argued above, for each odd $i \in \{1,3,5,\ldots,\frac{n}{2}-1\}$, the algorithm has to resolve~either $a_i$ or $a_{i-1}$ to prove that $(a_i,b_i),(a_{i-1},b_{i-1})$ is not a rotation. 
        Thus,
        it must resolve at least $n/4$ agents. 
        For each resolved agent, the algorithm has made
        queries of the form $\prefer(b_j,a_i,\ast)$ for each $b_j\in  B_2$.
        This totals to at least $\frac{n}{4} \cdot \frac{n}{2} \cdot \frac{1}{2}=\frac{n^2}{16} \in \Omega(n^2)$ queries. Note that we divide $\frac{n}{4}\cdot \frac{n}{2}$ by two as
         a single query $\prefer(b_j,a_i,a_{i'})$
        is of the form $\prefer(b_j,a_i,\ast)$ and also
        $\prefer(b_j,a_{i'},\ast)$. 

	Finally, we show how to extend the lower bound to work for randomized algorithms. 
	
	By Yao's principle~\cite{Borodin98,Yao77} we can prove the theorem by giving a randomized instance $\mathcal{R}$ and showing that $$\EX_{R \sim \mathcal{R}}\left[\frac{\ALG(R)}{\OPT(R)}\right] \in \Omega(n)$$ 
	holds for every deterministic algorithm $\ALG$, where $\ALG(R)$ and $\OPT(R)$ denote the number of queries executed by the algorithm and an optimal solution, respectively, for the realization $R$ of the randomized instance $\mathcal{R}$.

	To define the randomized instance $\mathcal{R}$, we take the instance of the deterministic lower bound and introduce randomization into the uncertain preference lists. 
	We define the preference lists of $A$ and $B_1$ without randomization in the same way as before. Similarly, we leave the sub-list defined for
	the agents of $A_2$ in the preference list of an agent of $B_2$ as it is and only randomize the positions of the agents of $A_1$ in the preference lists of $B_2$.
	
	To this end, consider an odd $i \in \{1,3,\ldots,(n/2)-1\}$. 
	The randomized part of the instance uniformly at random picks a tuple $(a_k,b_j)$ with $k \in \{i-1,i\}$ and $b_j \in B_2$. For this selected tuple, we set the preferences such that $b_j$ prefers $a_k$ over its current matching partner $a_j$. For all other tuples $(a_{k'},b_{j'})$ with $k' \in \{i-1,i\}$, $b_{j'} \in B_2$ and either $k \not= k'$ or $j \not= j'$, we set the preference of $b_{j'}$ such that it prefers its current partner over $a_{k'}$.
	
	This can be achieved by letting the preference list of $b_j$ contain first
	the agents $a_k\in A_1$ such that $(a_k,b_j)$ was selected by the randomized procedure above, then the agents of $A_2$ in some non-randomized order (only
	ensuring for $b_j$ that $a_j$ comes first among the agents of $A_2$ if
	$j\neq n-1$ and that $a_j$ comes last among the agents of $A_2$ if
	$j=n-1$), and finally the agents $a_{k'}\in A_1$ such that the tuple $(a_{k'},b_j)$ was \emph{not} selected by the randomized procedure. Ties can be broken according to some arbitrary but fixed order.
 	
 	By defining the preferences in this way, every realized instance still satisfies the properties (P1) and (P2) as defined in the proof of the deterministic lower bound. While the preferences do not satisfy property (P3), they satisfy for each odd $i \in \{1,3,\ldots,(n/2)-1\}$ that either $a_i$ or $a_{i-1}$ has a rotation edge to some agent of $ B_2$. This still implies that, for every realized instance, the matching $M = \{(a_j,b_j) \mid j \in \{0,\ldots,n-1\}\}$ is B-optimal.
 	Slightly adjusting the strategy of the deterministic proof, one can show that $\OPT \le 2n-2$ still holds for each such realization. 
 	This implies 
 	$$\EX_{R \sim \mathcal{R}}\left[\frac{\ALG(R)}{\OPT(R)}\right] \ge \EX_{R \sim \mathcal{R}}\left[\frac{\ALG(R)}{2n-2}\right] =  \frac{\EX_{R \sim\mathcal{R}}[\ALG(R)]}{2n-2}$$ for every deterministic algorithm $\ALG$. So it suffices to show $\EX_{R \sim\mathcal{R}}[\ALG(R)] \in \Omega(n^2)$ to prove the theorem.
 	
 	To that end, consider an arbitrary deterministic algorithm. As argued in the deterministic lower bound proof, the algorithm has to, for each odd $i \in \{1,3,\ldots,(n/2)-1\}$, either prove that $(a_i,b_{i-1})$ or $(a_{i-1},b_{i})$ is not an $r$-edge. By definition of the instance, this requires at least one query of the form $\prefer(b_j,a_k,\ast)$ (as defined in the deterministic lower bound proof) for the tuple $(b_j,a_k)$ with $b_j \in B_2$ and $k \in \{i,i-1\}$ that was drawn by the randomized procedure above for index $i$.
 	The algorithm will have to execute queries of the form $\prefer(b_{j'},a_{k'},\ast)$ with $b_{j'} \in B_2$ and $k' \in \{i,i-1\}$ until it hits a query with $j' = j$ and $k' = k$.
 	We call such a query \emph{successful} if $j' = j$ and $k' = k$ and \emph{unsuccessful} otherwise.
	In the same way, we call the selected tuples \emph{successful} and all other tuples \emph{unsuccessful}.
 	Note that the algorithm might need further queries to prove that either $(a_i,b_{i-1})$ or $(a_{i-1},b_{i})$ is not a rotation edge, but executing at least one successful query is a necessary condition.
 	
 	Consider a fixed odd $i \in \{1,3,\ldots,(n/2)-1\}$. We bound the expected number of queries of the form $\prefer(b_{j'},a_{k'},\ast)$ with $b_{j'} \in B_2$ and $k' \in \{i,i-1\}$ that the algorithm needs until one of them is successful. 	
 	Let $Y_i$ be a random variable denoting the number of queries of that form the algorithm executes.
 	Note that the algorithm might execute different queries in-between the queries of that form, but the random variable $Y_i$ only counts the queries of that form for the fixed $i$ and ignores different queries that are executed in-between them.
 	To further characterize $Y_i$, let $Z_{i,\ell}$ with $\ell \ge 1$ be an indicator random variable denoting whether the first $\ell$ queries of that form are \emph{not} successful. Then,
 	$Y_i = 1 + \sum_{\ell \ge 1} Z_{i,\ell}$ and $\EX_{R \sim \mathcal{R}}[Y_i] = 1 + \sum_{\ell \ge 1} \EX_{R \sim \mathcal{R}}[Z_{i,\ell}]$. 
 	
 	We first observe that queries that do no involve $a_i$ and $a_{i-1}$ do not give any information on which tuples can be successful for $i$. Furthermore, queries that involve $a_{i}$ (or $a_{i-1}$) and some $b_j \in B_2$ do not admit any information on whether some tuple $(a_k,b_{j'})$ (or some tuple $(a_{i-1},b_{j'})$) with $j' \not= j$ or $k=i-1$ (or $k=i$) is successful or not.
 	Thus, at any point during the execution of an algorithm, all tuples $(a_k,b_j)$ for which the algorithm did not yet execute a query of form $\prefer(b_j,a_k, \ast)$ are equally likely to be successful (unless the algorithm already found the successful tuple).
 	
 	Consider the expected value $\EX_{R \sim \mathcal{R}}[Z_{i,\ell}] = \Pr[Z_{i,\ell} = 1]$. For $\ell = 1$, we have $\Pr[Z_{i,\ell} = 1] = \frac{n-2}{n}$ since there are $n$ tuples $(a_{k'}, b_{j'})$ with $k' \in \{i,i-1\}$, among those only one successful tuple is drawn uniformly at random, and a query can cover at most two such tuples at the same time if it is of form $\prefer(b_{j'},a_{i},a_{i-1})$.  
 	For $\ell = 2$, we have $\Pr[Z_{i,\ell} = 1] \ge \frac{n-2}{n} \cdot \frac{n-4}{n-2}$ because given that the first query is not successful there are still $n-2$ tuples that could still be successful, only one uniformly at random selected tuple is actually successful, and the second query can cover at most two of the potentially successful tuples.
 	Continuing this argumentation, we get 
 	$$
 	 \EX_{R \sim \mathcal{R}}[Z_{i,\ell}] = \Pr[Z_{i,\ell} = 1] = \prod_{\ell' =1}^\ell \frac{n-2 \cdot \ell'}{n - 2 \cdot (\ell'-1)} = 1-\frac{2\ell}{n}
 	$$
 	for each $1 \le \ell \le n/2$. This directly implies
 	\begin{align*}
 	\EX_{R \sim \mathcal{R}}[Y_i] & = 1 + \sum_{\ell \ge 1} \EX_{R \sim \mathcal{R}}[Z_{i,\ell}] \ge 1 + \sum_{\ell = 1}^{n/2} \EX_{R \sim \mathcal{R}}[Z_{i,\ell}]\\
 	  & \ge \sum_{\ell = 1}^{n/2} (1 - \frac{2\ell}{n}) = \frac{n-2}{4}.
 	\end{align*}
 	The number of queries the algorithm executes on a realization $R$ is at least  $\ALG(R) \ge \sum_{i \in \{1,3,\ldots,(n/2)-1\}} Y_{i}$, which implies
 	\begin{align*}
 	\EX_{R \sim \mathcal{R}}[\ALG(R)] & \ge\!\! \sum_{i \in \{1,3,\ldots,(n/2)-1\}} \!\! \EX_{R \sim \mathcal{R}}[Y_{i}] \ge \frac{n}{4} \cdot \frac{n-2}{4}\\
 	& = \frac{n^2-2n}{16} \in \Omega(n^2).
 	\end{align*}
    \end{proof}
    
    \subsubsection{Offline Results for Computing
    \texorpdfstring{$B$}{B}-Optimal Stable Matchings}
    \label{subsubsec:pair-B-offline}
    
    We show NP-hardness for the offline problem of verifying a given matching $M$ to be stable and $B$-optimal. Recall that in the offline problem we assume full knowledge of the $B$-side preferences but still want to compute a query set of minimum size that a third party without knowledge of the $B$-side preferences could use to verify the $B$-optimality of $M$.
    
    \begin{theorem}\label{thm:offline-pair-B-NPhard}
    The offline problem of computing an optimal set of \comparison queries for finding (or verifying) the $B$-optimal stable matching in a stable matching instance with one-sided uncertainty is NP-hard.
    \end{theorem}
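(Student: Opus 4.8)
The plan is to reduce from a suitable NP-hard covering problem; \textsc{Vertex Cover} already suffices, and the structure underneath is really that of destroying every directed cycle of an auxiliary digraph. First I would reformulate the offline problem combinatorially. Since the $A$-side preferences are known, a potential blocking pair $(a,b)$ with $M(a)\prec_a b$ needs no query, so the cost of ruling out blocking pairs is a fixed number (cf.\ \Cref{cor:Bopt:stability:lb}); I will build instances where this, together with the cost charged by a few auxiliary dead-end gadgets, is a fixed quantity depending only on $n$ and on $|E|$, which is known. The remaining task is certifying $B$-optimality: by \Cref{lem:rotations-for-optimal} this means proving, for every alternating cycle $C$ over matching edges and candidate edges $(a,b)$ with $M(a)\prec_a b$, that some candidate edge of $C$ is not an $r$-edge; and, as recalled before \Cref{lem:bopt:numberlb:pw}, one certifies $(a,b)$ to be a non-$r$-edge either by a query to $b$ returning $M(b)$, or by a query to some $b'$ with $M(a)\prec_a b'\prec_a b$ returning $a$. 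The key design principle is that a single query of the second type, aimed at the \emph{first} candidate on $a$'s list that truly prefers $a$ — i.e.\ at $a$'s true $r$-edge partner, which I place high on $a$'s list — certifies simultaneously that \emph{every} candidate edge further down $a$'s list is not an $r$-edge. Thus ``resolving'' such an agent $a$ with one query destroys all potential rotations through a candidate edge out of $a$, which is exactly the effect of placing a vertex into a vertex cover.

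Given a graph $G=(V,E)$, I would introduce an agent $a_v\in A$ with matched partner $\beta_v\in B$ for each $v\in V$, put $\beta_v$ at the top of $a_v$'s list and, immediately below it, a fresh $r$-edge agent $\gamma_v$ (matched to a fresh dummy agent $\delta_v$, configured so that $\delta_v$ has no $r$-edge and $\gamma_v$ has no outgoing candidate edge closing a cycle, which makes $M$ genuinely the $B$-optimal stable matching). Below $\gamma_v$ on $a_v$'s list I place exactly the agents $\beta_w$ for the neighbours $w$ of $v$; every other $\beta_w$ is pushed \emph{above} $\beta_v$, which costs the fixed number of blocking-pair queries counted above but avoids creating unwanted candidate edges at $a_v$. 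Choosing the $B$-side preferences freely so that each $\gamma_v$ prefers $a_v$ to $\delta_v$ and each $\beta_w$ prefers its partner $a_w$ to every $a_v$, the potential rotations correspond exactly to the directed cycles of (the bidirected version of) $G$: for an edge $\{v,w\}$, certifying $(a_v,\beta_w)$ to be a non-$r$-edge costs one query, either the single ``resolve'' query at $\gamma_v$ (handling all edges at $v$) or a dedicated query at $\beta_w$ (handling only $\{v,w\}$). An exchange argument then shows that a query set destroying all potential rotations, after removing redundant queries, never beats a minimum vertex cover: one converts it into a vertex cover of size at most its number of ``cycle'' queries. Hence the minimum number of queries for verifying (equivalently, by the discussion after \Cref{cor:relinst} on the offline equivalence of verification and finding, for finding) the $B$-optimal stable matching equals a fixed polynomially computable quantity plus the vertex cover number $\tau(G)$, which proves NP-hardness and covers both variants in the theorem.

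The step I expect to be the main obstacle is controlling all the \emph{unintended} candidate edges and potential rotations that completeness of the preference lists forces into existence. Every $B$-agent must appear somewhere on every $A$-agent's list, so each appearance below a matched partner spawns a candidate edge and possibly a new alternating cycle, and the dummy agents $\delta_v$ similarly threaten spurious $2$-rotations of the form $(a_v,\beta_v),(\delta_v,\gamma_v)$. The bulk of the proof is arranging the ``don't-care'' tails of all preference lists — and proving — that every such spurious cycle is either absent, or already broken for free by queries forced by the vertex-cover part, or contributes only a precisely computable additive constant, so that the clean identity ``optimum $=$ (fixed cost) $+\,\tau(G)$'' holds. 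Verifying that the constructed $M$ is indeed stable and exposes no rotation is the routine but delicate bookkeeping on which that identity rests.
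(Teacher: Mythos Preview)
Your approach is correct but takes a genuinely different route from the paper. The paper reduces directly from \textsc{Minimum Feedback Arc Set}: for each node $v$ of a directed graph $G$ it introduces one $A$-agent $v$ and one $B$-agent $v'$, matches them, makes $v$ the top choice of $v'$, and places the out-neighbours of $v$ at the tail of $v$'s list just after $v'$. Because every $B$-agent has its own partner as top choice, type-(ii) certification is \emph{impossible} in their instance, so the only way to kill a candidate edge $(v,u')$ is the single type-(i) query to $u'$; the $B$-optimality cost is then literally the minimum feedback arc set of $G$, with no gadgets and no exchange argument needed.

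Your Vertex Cover reduction instead \emph{manufactures} type-(ii) certification via the $\gamma_v$-gadgets: one query at $\gamma_v$ wipes out all outgoing candidate edges from $a_v$, mimicking ``put $v$ in the cover'', and your exchange argument (each type-(i) query breaks exactly one $2$-cycle, so uncovered edges cost at least one query apiece, whence any solution yields a cover of at most the same size) is sound. The price is exactly the obstacle you flag: every auxiliary $B$-agent must be placed on every complete preference list without creating spurious alternating cycles, and you must check that the resulting fixed additive cost really is disjoint from the $\tau(G)$ part --- which it is, via \Cref{cor:relinst}, since stability relates non-neighbours to $M(b)$ while your $B$-optimality queries relate neighbours (at $\beta_w$) or the single $a_v$ (at $\gamma_v$). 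Both reductions work; the paper trades your gadgets and bookkeeping for a slightly harder source problem with a one-line instance translation.
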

    
    \begin{proof}
    We give a reduction from the NP-hard {\em Minimum Feedback Arc Set (FAS)} problem.
    Given a directed graph $G=(V,E)$, a feedback arc set is a subset of edges $E'\subseteq E$ which, if removed from $G$, leaves the remaining graph
    acyclic. The FAS problem is to decide  for a given directed graph and some $k\in \mathbb{Z}_+$, whether there is a feedback arc set $E'$
    with $|E'|\leq k$.

    Given an instance of FAS with
    $G=(V,E)$ and some $k$, we construct a stable matching instance with one-sided uncertainty as follows.
    For each node $v$ of $G$, introduce an agent
    $v$ in $ A$ and an agent $v'$ in $ B$.
    Let $N^+(v)$ denote the set of out-neighbors of $v$ in $G$,
    and $d^+(v)=|N^+(v)|$.
    The preference list of $v$ is such that
    it ends with $v'$ followed by all $u'$ for $u\in N^+(v)$.
    All other $w'$ in $ B$ come before $v'$.
    Thus, the elements of $B\setminus\{u' \mid u\in N^+(v)\}$ are the most preferred partners of $v$, followed by $v'$ and finally the elements of $\{u' \mid u\in N^+(v)\}$.
    Let $ M$ be the matching that matches $v$ to $v'$, for all $v$.
    The preference lists of $b\in B$ are such that $ M$ is the
    $ B$-optimal stable matching: Every $v'$ has $v$ as top preference,
    and the remaining agents of $A$ follow in arbitrary order.
    By selecting the matching $M$ this way, we have that, for every $v \in A$, all edges to elements of $\{u' \mid u\in N^+(v)\}$ are potential $r$-edges. To prove that such an edge $(v,u')$ is not an $r$-edge, an algorithm has to compare $u$ and $v$ from the perspective of $u'$ to prove that $u'$ prefers $M(u')=u$ over~$v$.

    The number of queries $Q(M)$ needed to verify the stability of $ M$
    is determined by $M$ and is polynomial-time computable by using~\Cref{thm:1-comp-verification}. 
    To prove $B$-optimality of $M$, we need to show that there is no rotation (\Cref{lem:rotations-for-optimal}). Indeed, there is a query strategy with $k$ queries for verifying that there is no rotation if and only if there is a feedback arc set in~$G$ of size $k$. To see this, observe that every directed cycle in~$G$ corresponds to a potential rotation
    in the matching instance, and every query that excludes one of
    the edges of the potential rotation from being an $r$-edge
    corresponds to the removal of the corresponding arc in $G$.
    
    Note that, for the constructed instance, all queries
    to verify the stability of $M$ obtain information of the form $M(b) \prec_b a$ for $a \in A$ and $b \in B$ with $b \prec_a M(a)$. On the other hand, all queries that help to verify the absence of a rotation obtain information of the form $M(b) \prec_b a$ for $a \in A$ and $b \in B$ with $M(a) \prec_a b$.
    As these are disjoint query sets, we can conclude that there is a query strategy that proves $M$ to be stable and $B$-optimal with at most $Q(M) + k$ queries if and only if there is a feedback arc set in $G$ of size~at~most~$k$.
    \end{proof}
    
    We also prove the following approximation for the offline problem by exploiting an $\mathcal{O}(\log n\log\log n)$-approximation for weighted feedback arc set by Even et al.~\cite{EvenNSS98}.
    
    \begin{theorem}\label{thm:offline-B-pair-approx}
    The offline problem of computing an optimal set of \comparison queries for finding the $B$-optimal stable matching in a stable matching instance with one-sided uncertainty can be approximated within ratio $\mathcal{O}(\log n\log\log n)$.
    \end{theorem}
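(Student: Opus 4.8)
The plan is to reduce the problem, in an approximation-preserving way, to \textsc{Weighted Feedback Arc Set} and then invoke the $\mathcal{O}(\log n\log\log n)$-approximation of Even et al.~\cite{EvenNSS98}. Since all preferences are known offline, one first computes the $B$-optimal stable matching~$M$ and the value $s_A(a)$ for every $a\in A$. Two lower bounds on the optimum $\OPT$ will be used throughout: $\OPT\ge Q(M)$, where $Q(M)$ is the (known) optimal number of queries to certify stability of~$M$ from Corollary~\ref{cor:Bopt:stability:lb}, and $\OPT\ge n-1$ from Lemma~\ref{lem:bopt:numberlb:pw}. The algorithm first outputs the $Q(M)$ stability queries of Theorem~\ref{thm:1-comp-verification}; the remaining work is to certify the absence of a rotation, which I would cast as a feedback-arc-set problem.

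Concretely, consider the \emph{potential rotation digraph} $D$ on vertex set $A$, with an arc $(a,M(b))$ labelled $b$ for every $a$ and every $b$ with $M(a)\prec_a b$; potential rotations are exactly the directed cycles of~$D$, and ruling out an arc means certifying that its labelling edge $(a,b)$ is not an $r$-edge. By the two cases listed just before Lemma~\ref{lem:bopt:numberlb:pw}, the arcs split into three groups: (a)~the true $r$-edge arcs $(a,s_A(a))$, which can never be ruled out and which form an \emph{acyclic} subgraph $D_0$ because $M$ is $B$-optimal (contrapositive of Lemma~\ref{lem:rotations-for-optimal}); (b)~``short'' false arcs $(a,M(b))$ with $s_A(a)\prec_a b$, each of which is automatically ruled out as soon as $a$'s $r$-edge is revealed; and (c)~the remaining ``long'' false arcs, forming a set $X$ --- namely $(a,M(b))$ with $M(a)\prec_a b\prec_a s_A(a)$, together with all arcs out of any $a$ having no $r$-edge --- for each of which $b$ prefers $M(b)$ to $a$ and there is no admissible higher-ranked witness, so it can be ruled out only by one dedicated query to~$b$. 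After the stability queries, the algorithm therefore (i)~outputs $\prefer(s_A(a),M(s_A(a)),a)$ for every $a$ with an $r$-edge --- at most $n$ queries, which reveal all $r$-edges and discredit every short arc of type~(b) --- and (ii)~runs the weighted feedback-arc-set approximation on $D_0\cup X$, assigning each arc of $X$ weight $1$ and each arc of $D_0$ a sufficiently large weight so that the returned solution provably avoids $D_0$, and then outputs one dedicated query per returned arc. Since $D_0$ is acyclic, deleting all of $X$ is a feasible feedback arc set, so the returned set avoids $D_0$ and leaves $D_0$ together with the surviving arcs of $X$ acyclic; hence no potential rotation survives and $M$ is verified stable and $B$-optimal.

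The crux is the lower bound matching step~(ii): I must show $\OPT\ge\tau$, where $\tau$ is the optimum of the feedback-arc-set instance $D_0\cup X$ just described. The argument I would give: any feasible query set leaves a collection of not-yet-discredited arcs that necessarily contains $D_0$ and must be acyclic --- otherwise a surviving directed cycle could be realized as a genuine rotation by a $B$-side profile consistent with all query answers (here using that the two listed ways of discrediting an $r$-edge are the only ones), so the solution could not certify $B$-optimality. Hence it discredits all of $X$ except for an acyclic extension of $D_0$, i.e.\ at least $\tau$ arcs of~$X$; and since each arc of $X$ can be discredited only by a dedicated query to its labelling agent~$b$, grouping these queries by $b$ and applying Lemma~\ref{lem:transitivenohelp} shows that discrediting $k$ arcs of $X$ costs at least $k$ queries, so $\OPT\ge\tau$. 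Putting everything together, the algorithm makes $Q(M)+\mathcal{O}(n)+\mathcal{O}(\log n\log\log n)\cdot\tau$ queries; using $\OPT\ge Q(M)$, $\OPT\ge n-1$ (so $n\le 2\,\OPT$ for $n\ge 2$, the case $n=1$ being trivial) and $\OPT\ge\tau$, this is $\mathcal{O}(\log n\log\log n)\cdot\OPT$. The one point that needs careful handling is the realizability claim behind ``certifiable $\iff$ surviving arcs acyclic''; everything else is bookkeeping.
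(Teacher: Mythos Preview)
Your proposal is correct and follows essentially the same approach as the paper's proof: verify stability first, then query each true $r$-edge to discredit the ``short'' arcs, and finally cast what remains as a weighted feedback-arc-set instance with the true $r$-edges at prohibitive weight and apply the Even et al.\ approximation. Your contracted digraph on vertex set $A$ is just the paper's bipartite $M\cup R\cup P$ instance with the matching edges contracted, so the two FAS formulations are equivalent. You are slightly more careful than the paper in two respects: you explicitly include in $X$ all outgoing arcs of vertices $a$ without an $r$-edge (the paper's definition of $P$ is a bit vague on this case), and you spell out the lower-bound step $\OPT\ge\tau$ via Lemma~\ref{lem:transitivenohelp}, whereas the paper simply asserts that ``the minimum weight feedback arc set is the cheapest way to prove that $M$ does not have a rotation.'' The realizability point you flag---that a surviving cycle of non-discredited arcs can actually be witnessed by a consistent $B$-profile---is indeed the only nontrivial step, and the paper does not argue it either; your identification of it as the crux is accurate.
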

    
\begin{proof}
	Let $ M$ be the $ B$-optimal matching. We give an algorithm that verifies $M$ to be stable and $B$-optimal by executing at most $\mathcal{O}(\log n\log\log n) \cdot \OPT$ queries, where $\OPT$ is the optimal number of queries for the same instance.
	First, the algorithm proves that $M$ is stable using~\Cref{thm:1-comp-verification}. This leads to at most $\OPT$ queries.
	
	After that, the algorithm has to prove $B$-optimality. First, for every $a \in A$ that has an $r$-edge to an agent $r(a) \in B $, the algorithm queries $\prefer(r(a), a, M(r(a)))$. Since $(a,r(a))$ is an $r$-edge, this query must return that $r(a)$ prefers $a$ over $M(r(a))$. This leads to at most $n \le \OPT+1$ queries ($n \le \OPT+1$ holds by~\Cref{lem:bopt:numberlb:pw}).
	Note that, for an $a \in A$ with an $r$-edge, the query $\prefer(r(a), a, M(r(a)))$ proves that $a$ has an $r$-edge but is not necessarily sufficient to prove that $(a,r(a))$ is indeed the $r$-edge of $a$. If there is an agent $b \in B$  with $M(a) \prec_a b \prec_a r(a)$ for which we have not yet verified whether $b$ prefers $a$ over $M(b)$, then $(a,b)$ could also still be the $r$-edge of $a$. We call such pairs $(a,b)$ \emph{potential $r$-edges} and let $P$ denote the set of these edges.
	
	It remains to consider the graph $G$ defined by the matching edges, the $r$-edges $R$, and all potential $r$-edges $P$. If $G$ has no cycle alternating between edges in $M$ and edges in $P \cup R$, then we have shown that $M$ does not expose a rotation and, thus, is $B$-optimal.  Otherwise, the algorithm has to execute queries $\prefer(b,a,M(b))$ for edges $(a,b) \in P$ to prove that they are not actually $r$-edges until it becomes clear that $M$ has no rotation.
	
	To select the edges $(a,b) \in P$ for which the algorithm executes such queries, we exploit the $\mathcal{O}(\log n\log\log n)$-approximation for weighted feedback arc set by Even et al.~\cite{EvenNSS98}.
	To this end, we create an instance of the weighted feedback arc set problem by considering the vertices $A \cup B$, adding the edges $M \cup R$ with weight $\infty$ each and adding the edges $P$ with weight $1$ each. We orient all edges in $M$ from the $B$-side vertex to the $A$-side vertex and all edges in $R \cup P$ from the $A$-side vertex to the $B$-side vertex. The orientation ensures that all cycles in the graph alternate between $M$-edges and $R \cup P$-edges. Since the matching $M$ is $B$-optimal by assumption, there cannot be an alternating cycle using only edges in $M \cup R$, so there must be a feedback arc set that only uses edges in $P$. The choice of the edge weights ensures that every approximation algorithm for weighted feedback arc set finds such a solution.
	We use the $\mathcal{O}(\log n\log\log n)$-approximation to find such a feedback arc set $F \subseteq P$. Since removing $F$ from the instance yields an acyclic graph, querying $\prefer(b,a, M(b))$ for each $(a,b) \in F$ proves that $M$ does not expose a rotation. As the minimum weight feedback arc set is the cheapest way to prove that $M$ does not have a rotation, we have $|F| \le \mathcal{O}(\log n\log\log n) \cdot \OPT$, which implies the theorem.
\end{proof}
    
\subsection{Verifying a Stable Matching with Two-Sided Uncertainty}
We observe that the lower bound on the optimal number of queries in \Cref{cor:relinst} can also be used for verifying a stable matching in a stable matching instance with uncertain preferences on both sides.

\begin{theorem}\label{thm:2-comp-verification}
    In the \comparison query model, there is a $2$-competitive algorithm for verifying that a given matching $M$ in a stable matching instance with uncertain preferences on both sides is stable.
\end{theorem}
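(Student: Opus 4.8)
The plan is to use a straightforward pair-by-pair verification algorithm and to charge its queries against the relationship-pair lower bound of \Cref{cor:relinst}. The first thing I would point out is that \Cref{cor:relinst} and the underlying \Cref{lem:transitivenohelp} carry over unchanged to two-sided uncertainty: their proofs only reason about the graph of executed queries, and a query $\prefer(x,\cdot,\cdot)$ is attributed to the single agent $x$, so queries to distinct agents are never confused and the per-agent counts add up.

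The algorithm I have in mind iterates over all $n^2-n$ unmatched pairs $(a,b)$ with $a\in A$ and $b\in B$. For each such pair it first queries $\prefer(a,M(a),b)$. If the answer certifies $M(a)\prec_a b$, then $(a,b)$ cannot be a blocking pair and the algorithm moves on; call such a pair \emph{good}. Otherwise $b\prec_a M(a)$ (call the pair \emph{bad}), and the algorithm additionally queries $\prefer(b,M(b),a)$; since $M$ is stable this must certify $M(b)\prec_b a$, so again $(a,b)$ is not blocking. (If this second query instead reveals $M(b)\prec_b a$ to be false, the algorithm has found a blocking pair and reports that $M$ is not stable; as usual we only bound the competitive ratio when $M$ is actually stable.) Writing $g$ and $h$ for the numbers of good and bad pairs, so $g+h=n^2-n$, the algorithm makes exactly one query per good pair and two per bad pair, i.e.\ $\ALG=g+2h$ queries.

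For the lower bound I would argue as follows. To verify that $M$ is stable, any feasible query set must, for each unmatched pair $(a,b)$, establish $M(a)\prec_a b$ or $M(b)\prec_b a$ --- equivalently, in the terminology introduced before \Cref{cor:relinst}, the optimum must have the relationship pair $(a,b)$ for $a$ or the relationship pair $(b,a)$ for $b$. The key observation is that a relationship pair $(x,y)$ certifies exactly one unmatched pair (namely $(x,y)$ when $x\in A$ and $(y,x)$ when $x\in B$), and that for two distinct unmatched pairs the corresponding two-element sets of admissible relationship pairs are disjoint --- here one uses $A\cap B=\emptyset$ to rule out, for instance, a relationship pair whose first coordinate lies in $A$ coinciding with one whose first coordinate lies in $B$. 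Hence the optimum must have at least $g+h$ distinct relationship pairs, and \Cref{cor:relinst} yields $\OPT\ge g+h$. Combining the two bounds gives $\ALG=g+2h\le 2(g+h)\le 2\,\OPT$, as claimed.

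I expect the lower-bound step to be the only delicate point: one must check carefully that the relationship pairs forced on the optimum by different unmatched pairs cannot overlap, so that their total count is genuinely at least the number of unmatched pairs --- this is exactly where the disjointness of $A$ and $B$ and the one-to-one correspondence between a relationship pair and the single unmatched pair it certifies are used. Everything else (describing the algorithm, counting $g+2h$ queries, and the final arithmetic) is routine.
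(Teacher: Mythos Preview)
Your proposal is correct and takes essentially the same approach as the paper: the paper's algorithm queries the $b$-side first (and the $a$-side second if needed) rather than the other way around, and it simply upper-bounds $\ALG$ by $2(n^2-n)$ instead of your sharper $g+2h$, but the lower-bound argument via disjoint relationship pairs forced by each unmatched pair is identical, and both arrive at competitive ratio~$2$.
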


\begin{proof}
	To verify that a given matching $M$ in a graph $G$ is stable, we have to prove for each $(a,b) \notin M$ that $(a,b)$ is not a blocking pair. That is, we have to prove that $M(a) \prec_a b$ or that $M(b) \prec_b a$. Note that $M(a) \prec_a b$ (or $M(b) \prec_b a$) can be verified by directly querying $\prefer(a,b,M(a))$ (or $\prefer(b,a,M(b))$) or indirectly via transitivity.

	For every pair $(a,b)\notin M$, let the algorithm query $\prefer(b,a,M(b))$
	first and, if the answer is that $a \prec_b M(b)$, query also $\prefer(a,b,M(a))$.
	If the answer to the latter query is that $b\prec_a M(a)$, the pair $(a,b)$ is
	a blocking pair, and the algorithm outputs that $M$ is not a stable matching.
	If the algorithm finds for every pair $(a,b)\notin M$ that
	$M(b) \prec_b a$ or $M(a)\prec_a b$, the algorithm outputs that $M$ is a stable
	matching.
  
	The algorithm makes at most $2(n^2-n)$ queries, as it makes at most $2$ queries
	for each of the $n^2-n$ pairs $(a,b)\notin M$.

	We now show that the optimal number of queries is at least
	$n^2-n$. For each pair $(a,b)\notin M$, the optimum needs to prove
	$M(a) \prec_a b$ or $M(b)\prec_b a$. This means it must relate
	$b$ to $M(a)$ for $a$, or it must relate $a$ to $M(b)$ for~$b$.
	Either way, this produces a relationship pair. As no two
	different pairs $(a,b)\notin M$ can produce the same relationship
	pair, the total number of relationship pairs is at least
	$n^2-n$. By Corollary~\ref{cor:relinst}, this implies that the optimum makes
	at least $n^2-n$ queries. As the algorithm makes at most
	$2(n^2-n)$ queries, it is $2$-competitive.
\end{proof}

We can show that no deterministic algorithm can do better.

\begin{theorem}\label{thm:lb-verification}
    In the \comparison query model, no deterministic algorithm can be better than $2$-competitive for the problem of verifying that a given matching $M$ in a stable matching instance with uncertain preferences on both sides is stable.
\end{theorem}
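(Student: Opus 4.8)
The plan is, for each $n \ge 2$, to construct a hard instance together with an adaptive adversary forcing every deterministic algorithm to make at least $2(n^2-n)$ queries on it, while the optimum needs only $n^2-n$. Let $M=\{(a_i,b_i)\mid 0\le i<n\}$ be the matching to be verified and leave the preference lists of both sides uncertain. For a non-matching pair $(a_i,b_j)$ call the comparison of $b_i=M(a_i)$ with $b_j$ in $a_i$'s list its \emph{$A$-side relation} and the comparison of $a_j=M(b_j)$ with $a_i$ in $b_j$'s list its \emph{$B$-side relation}; the pair is non-blocking iff $a_i$ prefers $b_i$ to $b_j$ (the $A$-side relation is ``favorable'') or $b_j$ prefers $a_j$ to $a_i$ (the $B$-side relation is favorable). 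Two simple observations drive the argument. First, since a comparison query is addressed to a single agent and $a_i$'s list and $b_j$'s list are independent, the algorithm cannot certify that $(a_i,b_j)$ is non-blocking unless its query answers determine at least one of the two relations of the pair. Second, by Lemma~\ref{lem:transitivenohelp} (with the roles of $A$ and $B$ exchanged, applied to the query graph over $B$): if the algorithm's answers determine, for some $a_i$, the comparison of $b_i$ against each of the other $n-1$ agents, then the algorithm has made at least $n-1$ queries to $a_i$; symmetrically for each $b_j$.

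The adversary answers queries so that, for every pair, the \emph{first} of its two relations that gets determined is determined \emph{unfavorably}. It maintains for each agent $x$ a linear order of the elements that have occurred in queries to $x$, consistent with all answers given, keeping $M(x)$ as low as possible subject to the constraints already forced; when a query is about to fix the $A$-side (resp.\ $B$-side) relation of a still-untouched pair $(a_i,b_j)$, the adversary answers that $a_i$ prefers $b_j$ to $b_i$ (resp.\ that $b_j$ prefers $a_i$ to $a_j$), which is possible because an untouched pair imposes no constraint forcing the opposite answer. Verifying that this is always consistent, in particular that transitivity can never compel the adversary to disclose a favorable relation ``for free'', is the step I expect to be the main obstacle; it follows from the ``as low as possible'' invariant and the fact that the maintained partial orders always extend to a complete profile. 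In the resulting profile $M$ is stable: whenever a pair's first determined relation is its (unfavorable) $A$-side relation, the completion makes $b_j$ prefer $a_j$ to $a_i$ (its favorable $B$-side relation), and symmetrically in the other case; untouched pairs are completed with a favorable relation. These choices are mutually consistent because, for each $a_i$, the $b_j$ that end up above $b_i$ in $a_i$'s list (pairs whose $A$-side relation was determined first) and those that end up below $b_i$ (pairs whose $B$-side relation was determined first) partition $B\setminus\{b_i\}$, and likewise for each $b_j$.

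Given this adversary, the accounting is immediate. For every non-matching pair, the first relation determined is unfavorable and hence does not on its own certify non-blockingness, so the algorithm is forced to determine the \emph{other} relation of the pair as well before it can terminate; thus it determines both. In particular, for every $a_i$ the algorithm determines the comparison of $b_i$ against all $n-1$ other agents, so by the second observation it makes at least $n-1$ queries to $a_i$, and the same holds for each $b_j$. Since queries addressed to different agents are distinct, the algorithm makes at least $\sum_{a\in A}(n-1)+\sum_{b\in B}(n-1)=2(n^2-n)$ queries. On the other hand, since $M$ is stable in the adversary's instance, the optimum verifies it with $n^2-n$ queries, one on the favorable side of each non-matching pair. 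Hence the competitive ratio is at least $2(n^2-n)/(n^2-n)=2$, which together with Theorem~\ref{thm:2-comp-verification} is tight.
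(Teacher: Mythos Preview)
Your approach is more ambitious than the paper's but has a genuine gap. The paper simply takes $n=2$: the two non-matching pairs $(a_1,b_2)$ and $(a_2,b_1)$ involve disjoint agents, so the two subproblems are independent and the adversary (``answer the first query on each pair unfavorably, the second favorably'') is trivially consistent. That already yields a ratio of $4/2=2$, and one hard instance is all a lower bound needs.

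For general $n$ your adversary is not shown to be consistent, and in fact the rule ``keep $M(x)$ as low as possible'' does not work as stated. Here is a concrete failure with $n=3$. The algorithm first asks $\prefer(a_0,b_1,b_2)$; neither pair is touched yet and neither answer affects the position of $b_0$, so suppose the adversary answers $b_1\succ_{a_0}b_2$. Next the algorithm asks $\prefer(b_1,a_0,a_1)$; the pair $(a_0,b_1)$ is untouched, so the adversary answers unfavorably ($a_0\succ_{b_1}a_1$). Now $(a_0,b_1)$ is touched on the $B$-side, and to keep $M$ stable the adversary must eventually make its $A$-side favorable. When the algorithm asks $\prefer(a_0,b_0,b_1)$, the only stable-preserving answer is $b_0\succ_{a_0}b_1$, which by transitivity with the first answer forces $b_0\succ_{a_0}b_2$ --- a \emph{favorable} $A$-side relation for the still-untouched pair $(a_0,b_2)$. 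The algorithm has now certified $(a_0,b_2)$ without ever touching its $B$-side, contradicting your claim that both relations of every pair get determined; in particular, not all $n-1$ relations to $M(b_2)$ need be determined in $b_2$'s list, so the $2(n^2-n)$ count does not go through. The adversary cannot escape by answering $b_2\succ_{a_0}b_1$ in the first step, since the algorithm can branch on that answer and symmetrically attack via $(a_0,b_2)$. Your sentence ``it follows from the `as low as possible' invariant'' is precisely where the argument breaks: that invariant says nothing about how to answer comparisons not involving $M(x)$, and those answers are what later interact badly with the forced favorable answers for touched pairs.

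The fix is to drop the general-$n$ ambition and restrict to $n=2$, where every preference list has length two and transitivity is vacuous; your adversary then collapses to the paper's.
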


\begin{proof}
    Let $A=\{a_1,a_2\}$, $B= \{b_1,b_2\}$ and $M = \{(a_1,b_1),(a_2,b_2)\}$. Any algorithm has to prove that $(a_1,b_2)$ and $(a_2,b_1)$ are not blocking pairs.
    Thus, any algorithm has to either prove $a_2 \prec_{b_2} a_1$ or $b_1 \prec_{a_1} b_2$ (to verify that $(a_1,b_2)$ is not a blocking pair) and $a_1 \prec_{b_1} a_2$ or $b_2 \prec_{a_2} b_1$ (to verify that $(a_2,b_1)$ is not a blocking pair).

    Since the subproblems of proving that $(a_1,b_2)$ and $(a_2,b_1)$ are not blocking pairs are independent of each other, we can w.l.o.g.~assume that the algorithm starts by proving that $(a_1,b_2)$ is not a blocking pair.
    If the algorithm starts by querying $\prefer(b_2,a_1,a_2)$, then the adversary reveals $a_2 \succ_{b_2} a_1$, which forces the algorithm to also query $\prefer(a_1,b_1,b_2)$. We let this query reveal $b_1 \prec_{a_1} b_2$. The optimal solution only queries  $\prefer(a_1,b_1,b_2)$. If the algorithm starts by querying  $\prefer(a_1,b_1,b_2)$, we can argue symmetrically. Thus, the algorithm executes twice as many queries as the optimal solution to prove that $(a_1,b_2)$ is not a blocking pair.

    We can argue analogously to show that the algorithm also executes twice as many queries as the optimal solution to prove that $(a_2,b_1)$ is not a blocking pair, which implies the result.
\end{proof}

    \section{Stable Matching with Interview Queries}
    \label{sec:interview}
    
    In this section, we consider the interview query model.
    Most of our results and proofs are quite similar to their counterparts for comparison queries. This might be surprising as interview and comparison queries are, in a sense, incomparable: While interview queries allow us to more efficiently determine full preference lists, a comparison between two agents can be done more efficiently via a single comparison query.
    As we show the same (asymptotic) bounds on the competitive ratio, the latter seems to be the deciding factor.

\subsection{Verifying and Finding a Stable Matching with Interview Queries}
A $1$-competitive algorithm for finding a stable matching and verifying a given stable matching with interview queries is implied by the results and arguments from \cite{DBLP:conf/sigecom/RastegariCIL13} for a more general uncertainty setting and can be derived as follows.
    
Consider a given instance of stable matching with one-sided uncertainty and a given stable matching $M$. To verify that $M$ is indeed stable, we have to consider all potential blocking pairs, i.e., all pairs $(a,b)$ with $b \prec_a M(a)$. For such a pair, we have to verify that $M(b) \prec_b a$ holds to prove that $(a,b)$ is not a blocking pair. The only way of comparing $M(b)$ and $a$ from $b$'s perspective is to execute the interviews $\intq(b,a)$ \emph{and} $\intq(b,M(b))$.
For a fixed $b \in B$, this implies that the minimum number of interviews involving $b$ necessary to prove that $M$ is stable is $Q_b(M) = 0$ if no element of $a \in A \setminus \{M(b)\}$ prefers $b$ over its current partner $M(a)$ and $Q_b(M) = 1 + |\{ a \in A \mid b \prec_a M(a)\}|$ otherwise. We can observe the following.

\begin{observation}
Consider a given instance of stable matching with one-sided uncertainty and a given stable matching $M$. The minimum number of interview queries necessary to verify that $M$ is indeed stable is $Q(M) = \sum_{b \in B} Q_b(M)$ with $Q_b(M) = 0$ if no element of $a \in A \setminus \{M(b)\}$ prefers $b$ over its current partner $M(a)$ and $Q_b(M) = 1 + |\{ a \in A \mid b \prec_a M(a)\}|$ otherwise 
\end{observation}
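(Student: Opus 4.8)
The plan is to prove matching lower and upper bounds on the number of interview queries, in analogy with the proof of Theorem~\ref{thm:1-comp-verification} for comparison queries but adapted to the semantics of interviews. Since the $A$-side preferences are known, for every pair $(a,b)\notin M$ we already know whether $M(a)\prec_a b$; if this holds, then $(a,b)$ cannot be a blocking pair and no query is needed to rule it out. The only pairs that still require work are those with $b\prec_a M(a)$, and for each such pair any correct verification must establish $M(b)\prec_b a$ from $b$'s side. (Throughout, we use that $M$ is indeed stable, as this is the only case in which verification cost is counted.)

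For the lower bound, I would fix $b\in B$ and distinguish two cases. If there is no $a\in A\setminus\{M(b)\}$ with $b\prec_a M(a)$, then every pair $(a,b)$ with $a\neq M(b)$ is already ruled out by the known $A$-side preferences, so no interview involving $b$ is necessary and $Q_b(M)=0$. Otherwise, take any $a$ with $b\prec_a M(a)$; note $a\neq M(b)$, since for $a=M(b)$ the condition would read $b\prec_{M(b)} b$, which is impossible. To certify $M(b)\prec_b a$, the verifier must learn the relative order of $a$ and $M(b)$ in $\prec_b$; but by the definition of an interview query, an interview $\intq(b,x)$ only ever reveals the restriction of $\prec_b$ to the agents already interviewed with $b$, so the order between $a$ and $M(b)$ can be deduced only once both $\intq(b,a)$ and $\intq(b,M(b))$ have been executed. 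Hence every feasible query set must contain $\intq(b,M(b))$ together with $\intq(b,a)$ for each $a$ with $b\prec_a M(a)$, i.e.\ at least $1+|\{a\in A\mid b\prec_a M(a)\}|$ interviews involving $b$. Since interviews involving distinct agents of $B$ are distinct, summing over $b\in B$ gives the claimed lower bound $Q(M)=\sum_{b\in B}Q_b(M)$.

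For the upper bound, I would exhibit the matching algorithm: for each $b\in B$ for which some $a\neq M(b)$ satisfies $b\prec_a M(a)$, execute $\intq(b,M(b))$ and $\intq(b,a)$ for every such $a$. Because $M$ is stable, these interviews reveal $M(b)\prec_b a$ for all the relevant pairs, so every potential blocking pair is ruled out; the number of interviews used is exactly $\sum_{b\in B}Q_b(M)$, matching the lower bound.

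The only genuine subtlety — and the step I would be most careful with — is the lower-bound claim that "revealing a larger interviewed order" cannot certify $M(b)\prec_b a$ without interviewing both endpoints. One might worry that the total order returned for some bigger interviewed set pins down the comparison indirectly; this is ruled out directly by the query semantics, since an adversary can keep the position of any not-yet-interviewed agent relative to the interviewed ones completely undetermined, in the same spirit as Lemma~\ref{lem:transitivenohelp}. I would spell out this adversarial argument explicitly to make the lower bound airtight.
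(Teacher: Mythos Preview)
Your proposal is correct and follows essentially the same approach as the paper: both argue the lower bound by observing that certifying $M(b)\prec_b a$ requires executing both $\intq(b,a)$ and $\intq(b,M(b))$, and both obtain the matching upper bound via the natural algorithm that performs exactly these interviews. Your treatment is in fact more careful than the paper's, which states the key claim (``the only way of comparing $M(b)$ and $a$ from $b$'s perspective is to execute both interviews'') without the explicit adversarial justification you plan to spell out.
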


Consider the following algorithm: For each $b \in B$ with $Q_b(M) > 0$, query $\intq(b,M(b))$ and $\intq(b,a)$ for each $a \in A$ with $b \prec_a M(a)$. This algorithms algorithm clearly verifies the stability of $M$ and executes exactly $\sum_{b \in B} Q_b(M)$ interview queries. Thus, the observation implies the following lemma.

\begin{lemma}
    For a given stable matching instance with one-sided uncertainty and a stable matching~$M$, there is a $1$-competitive algorithm for verifying that $M$ is stable in the interview query model.
\end{lemma}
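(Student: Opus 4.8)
The plan is to combine the algorithm described just above with the preceding Observation. Recall the algorithm: for every $b \in B$ with $Q_b(M) > 0$, execute the interview $\intq(b, M(b))$ together with $\intq(b,a)$ for each $a \in A$ satisfying $b \prec_a M(a)$; for every $b$ with $Q_b(M) = 0$, make no query involving $b$. I would show that this algorithm both verifies stability and uses exactly the optimal number of queries.

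For correctness I would argue as follows. Since the $A$-side preferences are known, the only pairs $(a,b) \notin M$ that could be blocking are those with $b \prec_a M(a)$; for every other pair $a$ already prefers $M(a)$ to $b$, so no query is needed to rule it out. For a pair with $b \prec_a M(a)$, after the algorithm's queries both $a$ and $M(b)$ lie in $P_b$, so the total order on $\{a\} \cup P_b$ revealed by the interviews involving $b$ determines whether $M(b) \prec_b a$. Because $M$ is stable, this inequality holds for every such pair, so the algorithm confirms that no blocking pair exists; and if some revealed order showed $a \prec_b M(b)$ instead, the algorithm would correctly report that $M$ is not stable. Hence the algorithm is a correct verification procedure.

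For $1$-competitiveness I would count queries: for each $b$ with $Q_b(M) > 0$ the algorithm performs exactly $1 + |\{a \in A \mid b \prec_a M(a)\}| = Q_b(M)$ interviews, and none for the remaining $b$, for a total of $\sum_{b \in B} Q_b(M) = Q(M)$. By the Observation, $Q(M)$ is the minimum number of interview queries needed by \emph{any} correct verification procedure, in particular by the optimum for this instance. Thus the algorithm matches the optimum on every input, so it is $1$-competitive.

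The only genuine content lies in the lower-bound direction of the Observation, which I would regard as the main obstacle were it not already established: one must rule out that the total orders returned by interviews could, through transitivity, let us certify $M(b) \prec_b a$ for some relevant pair without having interviewed both $a$ and $M(b)$ with $b$. This cannot happen, because an interview involving neither $a$ nor $M(b)$ reveals nothing about their relative rank in $\prec_b$, and an interview involving only one of them places it merely relative to the agents interviewed with $b$ so far, never relative to the other, as-yet-uninterviewed agent. Consequently, whenever $\{a \in A \mid b \prec_a M(a)\}$ is non-empty, at least $1 + |\{a \in A \mid b \prec_a M(a)\}|$ distinct interviews involving $b$ are forced, which is exactly $Q_b(M)$; summing over all $b \in B$ yields the claimed lower bound, and the lemma follows immediately.
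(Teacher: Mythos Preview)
Your proposal is correct and follows essentially the same approach as the paper: the paper states the identical algorithm, asserts it ``clearly verifies the stability of $M$'' with exactly $\sum_{b\in B} Q_b(M)$ queries, and invokes the preceding Observation for the matching lower bound. Your write-up is in fact more thorough, as you spell out both the correctness argument and the justification of the Observation's lower bound, which the paper leaves implicit.
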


Similar to the \comparison query model, we can observe that the $A$-optimal matching $M^*$ minimizes the query cost for verifying stability $Q(M) = \sum_{b \in B} Q_b(M)$ over all stable matchings $M$.

To find such an $A$-optimal matching, we can again just consider the deferred acceptance algorithm where $A$
makes the proposals. Whenever an agent $a\in A$ makes a proposal to an element $b\in B$ that is currently matched to some $a'\in A$, the algorithm queries $\intq(b,a)$ and $\intq(b,a')$. Each interview query is only executed if it has not yet been queried during the previous execution if the algorithm. If the query result is that $b$ prefers $a$ to $a'$, then $b$ accepts $a$'s proposal and becomes matched to $a$ while
$a'$ becomes unmatched. Otherwise, $b$ rejects the proposal and remains matched to~$a'$.

It is not hard to see that this algorithm executes exactly $Q(M^*) = \sum_{b \in B} Q_b(M^*)$ interview queries. This implies that the deferred acceptance algorithm is $1$-competitive for finding the $A$-optimal matching or any stable matching with interview queries.

\subsection{Finding a \texorpdfstring{$B$}{B}-Optimal Stable Matching with Interview Queries}

For finding a $B$-optimal stable matching with interview queries, it is not hard to see that the lower bound of~\Cref{thm:B-opt:det-lb} for \comparison queries nearly directly translates. We briefly sketch how to adjust that lower bound for interview queries to achieve the following theorem.

\begin{theorem}
	\label{thm:int:LB}
	In the interview query model, every deterministic or randomized online algorithm for finding a $B$-optimal stable matching in a stable matching instance with one-sided uncertainty has competitive ratio~$\Omega(n)$.
\end{theorem}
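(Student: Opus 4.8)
The plan is to adapt the lower bound construction from the proof of Theorem~\ref{thm:B-opt:det-lb} to the interview query model, observing that the key obstruction in that proof translates almost verbatim. First I would re-use the same instance: sets $A=A_1\cup A_2\cup A_3$ and $B=B_1\cup B_2$ with the same $A$-side preference lists, the same $A$-optimal matching $M=\{(a_i,b_i)\mid 0\le i<n\}$, and the same adversarial strategy for the $B$-side preferences, enforcing properties (P1)--(P3). The only difference is the query mechanism: instead of comparison queries $\prefer(b_j,a_i,\ast)$, the algorithm now makes interview queries $\intq(b_j,a)$ that reveal the position of $a$ among all agents interviewed so far at $b_j$.

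The crucial observation is that an interview query $\intq(b_j,a_i)$ only yields information about how $a_i$ compares to $M(b_j)=a_j$ at $b_j$ if $\intq(b_j,a_j)$ has also been performed (or is performed later). So, just as a single comparison query $\prefer(b_j,a_i,a_{i'})$ could cover two of the relevant tuples, here the situation is even simpler: determining whether $a_i$ has an $r$-edge to $b_j$ requires an interview $\intq(b_j,a_i)$, and the adversary can answer each such interview so that $a_i$ is ranked below $a_j$ unless $b_j$ is the designated agent $b_{t(i)}$. I would redefine $t(i)$ to be the index of the \emph{last} agent $b_j\in B_2$ for which the algorithm performs $\intq(b_j,a_i)$, mirroring the comparison-query definition. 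To prove that the sequence $(a_{i-1},b_{i-1}),(a_i,b_i)$ is not a rotation for odd $i$, the algorithm must still ``resolve'' $a_i$ or $a_{i-1}$, i.e., interview it at all $\frac{n}{2}$ agents of $B_2$. This forces $\Omega(n^2)$ interviews, since there are $\Omega(n)$ such odd indices and the relevant interview sets for distinct $a_i$ are disjoint (an interview $\intq(b_j,a_i)$ involves only one agent from $A$). Meanwhile, the optimum uses $O(n)$ interviews: $n-1+1$ interviews at the agents $b_i$ ($i\le n-2$) together with $\intq(b_i,a_{n-1})$ to certify (P1) — noting that certifying stability needs no interviews since each $a_i$ is matched to its top choice, but rotation-exclusion interviews at $b_j$ must be paired with $\intq(b_j,M(b_j))$ — plus $O(n)$ more to certify (P2) and (P3). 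A careful count gives $\OPT=O(n)$.

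For the randomized extension I would again invoke Yao's principle with the same randomized instance $\mathcal{R}$ as in Theorem~\ref{thm:B-opt:det-lb}, where for each odd $i$ a uniformly random tuple $(a_k,b_j)$ with $k\in\{i-1,i\}$, $b_j\in B_2$ is made ``successful''. The analysis of $Y_i$, the number of relevant interviews the algorithm makes before hitting the successful tuple, goes through: an interview $\intq(b_j,a_k)$ with $k\in\{i-1,i\}$, $b_j\in B_2$ covers at most one of the $n$ candidate tuples (this is actually cleaner than the comparison case, where a query could cover two), so $\EX[Y_i]=\Theta(n)$, yielding $\EX_{R\sim\mathcal{R}}[\ALG(R)]\in\Omega(n^2)$ while $\OPT(R)=O(n)$ for every realization.

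The main obstacle I anticipate is bookkeeping around the pairing requirement: unlike a comparison query, an interview at $b_j$ is only \emph{useful} in conjunction with $\intq(b_j,a_j)$, so I need to be careful that (i) the adversary's answers remain consistent — in particular, the positions revealed by a growing set of interviews at a fixed $b_j$ must be extendable to a single total order satisfying (P1)--(P3) — and (ii) the optimum's interview budget is still $O(n)$ despite needing these pairing interviews. Both are routine but must be stated precisely; the adversary can always defer committing to the full order of $b_j$ until forced, and since $t(i)$ is defined by the last interview, consistency is maintainable. Since the paper states ``it is not hard to see'' and asks only for a sketch, I would present the construction and the two counting arguments (deterministic and randomized) at the level of detail above, referring back to Theorem~\ref{thm:B-opt:det-lb} for the parts that are identical.
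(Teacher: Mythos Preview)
Your proposal is correct and follows essentially the same approach as the paper's proof: the same instance and adversarial strategy as in Theorem~\ref{thm:B-opt:det-lb}, with $t(i)$ redefined as the last $b_j\in B_2$ for which $\intq(b_j,a_i)$ is performed, the observation that each interview covers only one tuple (making the randomized analysis slightly cleaner than in the comparison case), and the $O(n)$ optimum handled by simulating each comparison query with at most two interviews. The paper's sketch is briefer and does not dwell on the pairing/consistency bookkeeping you flag, but your concerns there are indeed routine and your level of detail is appropriate.
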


\begin{proof}[Proof sketch.]
	
	We separately sketch the deterministic and randomized lower bound.
	
	\textbf{Deterministic lower bound.} Consider the same instance as in the deterministic lower bound of~\Cref{thm:B-opt:det-lb}. Recall that each $a_i \in A_1$ has an $r$-edge to some $t(i) \in B_2$ that is selected by the adversary depending on the queries executed by the deterministic algorithm. Fix an $a_i \in A_1$. For interview queries we select $t(i)$ as the last element $b \in B_2$ for which the algorithm executes a query $\intq(b,a_i)$. If the algorithm does not execute such a query for every element of $B_2$, then select an arbitrary agent $b$ of $B_2$ for which the query $\intq(b,a_i)$ has \emph{not} been executed by the algorithm.
	
	For a fixed $a_i \in A_1$, this forces any deterministic algorithm to execute at least $|B_2|$ queries $\intq(b,a_i)$ with $b \in B_2$ to prove that $a_i$ has an $r$-edge to some $b \in B_2$. As argued in the proof for comparison queries, any deterministic algorithm has to do this for at least $\frac{n}{4}$ members of $A_1$. This leads to a total of at least $\frac{n}{2}\cdot \frac{n}{4} \in \Omega(n^2)$ interview queries for every deterministic algorithm. 
	
	The optimal solution on the other hand needs at most $\mathcal{O}(n)$ queries by for example executing the query strategy described in the proof for \comparison queries while simulating each comparison query with at most two interviews.
	
	\textbf{Randomized lower bound.} For the randomized lower bound, we again use Yao's principle, consider the same randomized instance as in the proof for \comparison queries and prove that every deterministic algorithm need $\Omega(n^2)$ queries in expectation. 
	
	To this end, consider an arbitrary deterministic algorithm.
	Recall that, for each odd $i \in \{1,3,\ldots,(n/2)-1\}$, the algorithm either has to prove that $(a_i,b_{i-1})$ or $(a_{i-1},b_{i})$ is not an $r$-edge. By definition of the instance, this requires at least one query of the form $\intq(b_j,a_k)$ for the tuple $(b_j,a_k)$ with $b_j \in B_2$ and $k \in \{i,i-1\}$ that was drawn by the randomized procedure as defined in the \comparison query proof for index $i$.
	The algorithm will have to execute queries of the form $\intq(b_{j'},a_{k'})$ with $b_{j'} \in B_2$ and $k' \in \{i,i-1\}$ until it hits a query with $j' = j$ and $k' = k$.
	We call such a query \emph{successful} if $j' = j$ and $k' = k$ and \emph{unsuccessful} otherwise.
	In the same way, we call the selected tuples \emph{successful} and all other tuples \emph{unsuccessful}.
	Note that the algorithm might need further queries to prove that either $(a_i,b_{i-1})$ or $(a_{i-1},b_{i})$ is not a rotation edge, but executing at least one successful query is a necessary condition. 
	
	After this slight adjustment, we can bound the expected number of queries in the same way as before (with the only difference that a single query can now cover only a single tuple and not two) to prove that every deterministic algorithm makes $\Omega(n^2)$ queries in expectation. On the other hand, the optimal solution for each realization of the randomized instance needs at most $\mathcal{O}(n)$ queries by again simulating the comparison query strategy with at most two interview queries per comparison.
\end{proof}

For the matching upper bound, recall that $n^2$ interview queries are enough to determine the full $B$-side preference lists. This means that we need at most $n^2$ interview queries to find the $B$-optimal matching $M$. We show that even the optimal solution needs at least $\Omega(n)$ interviews to find the $B$-optimal matching, which then implies an $\mathcal{O}(n)$-competitive algorithm. We prove the following lemma by essentially repeating the corresponding proof for \comparison queries (cf.~\Cref{lem:bopt:numberlb:pw})

\begin{lemma}
	The optimal number of queries for verifying 
	the $B$-optimal stable matching with interview queries is at least $n-1$ for every instance of the stable matching problem with one-sided uncertainty.
\end{lemma}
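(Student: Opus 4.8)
The plan is to mirror the proof of \Cref{lem:bopt:numberlb:pw} for comparison queries, changing only the final step that argues the identified queries are distinct. Let $M$ be the $B$-optimal stable matching for the given instance. The key claim I would establish is that, for every $a \in A$ with at most a single exception, any algorithm verifying that $M$ is stable and $B$-optimal must execute at least one interview query $\intq(b,a)$ with $b \neq M(a)$. This uses the observation that interview queries give nothing through transitivity about an un-interviewed agent: knowing the $\prec_b$-order of $\{M(b)\}\cup P_b$ says nothing about the rank of an agent $a\notin P_b\cup\{M(b)\}$, so to learn whether some $b$ prefers $a$ to $M(b)$ the algorithm must have executed both $\intq(b,a)$ and $\intq(b,M(b))$, and in particular $\intq(b,a)$.

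To prove the claim I would argue by contradiction exactly as in \Cref{lem:bopt:numberlb:pw}: suppose for two distinct $a,a'\in A$ the algorithm executes neither $\intq(b,a)$ with $b\neq M(a)$ nor $\intq(b,a')$ with $b\neq M(a')$, so it has no information on how $a$ (resp.\ $a'$) is ranked relative to $M(b)$ in the list of any $b\neq M(a)$ (resp.\ $b\neq M(a')$). A case distinction on the known $A$-side preferences then yields a contradiction. If $a$ prefers $M(a)$ to $M(a')$ and $a'$ prefers $M(a')$ to $M(a)$, the algorithm cannot rule out that $(a,M(a)),(a',M(a'))$ is an exposed rotation: by the two possibilities for disproving an $r$-edge listed before \Cref{thm:O(n)-comp-B-optimal}, ruling out the $r$-edge $(a,M(a'))$ would require learning how some $b\neq M(a)$ compares $a$ to $M(b)$ (namely $b=M(a')$, or some $b$ with $M(a)\prec_a b\prec_a M(a')$), and symmetrically for $(a',M(a))$, contradicting verification of $B$-optimality. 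If $a$ prefers $M(a')$ to $M(a)$, then $\intq(M(a'),a)$ is needed to show $(a,M(a'))$ is not a blocking pair; if $a'$ prefers $M(a)$ to $M(a')$, then $\intq(M(a),a')$ is needed symmetrically. These three cases are exhaustive, so the claim holds.

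The step I expect to be the main obstacle is concluding that the $n-1$ queries asserted by the claim are pairwise distinct. In the comparison-query proof this held because a single query $\prefer(b,a,a')$ cannot reveal how both $a$ and $a'$ compare to $M(b)$; that argument fails for interviews, since one query $\intq(b,M(b))$ executed after several interviews can simultaneously reveal the ranks of many agents relative to $M(b)$. The remedy is to charge each of the $n-1$ relevant agents $a$ to a query of the form $\intq(b_a,a)$ whose existence the claim guarantees; two such queries for distinct $a\neq a'$ differ in their $A$-side argument and are therefore different interview queries. Hence the optimum makes at least $n-1$ interview queries, which also lower-bounds the cost of finding the $B$-optimal stable matching.
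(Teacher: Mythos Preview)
Your proposal is correct and takes essentially the same approach as the paper: show that for all but at most one $a\in A$ some interview $\intq(b,a)$ must be made, then observe that queries with different $A$-side arguments are distinct. The only cosmetic difference is that the paper drops your restriction $b\neq M(a)$ and simply argues that at most one $a\in A$ can avoid appearing in \emph{any} interview (handling the stability case by noting that if $M(a)$ is not $a$'s top choice then an interview of $a$ is already forced), whereas you keep the three-case split of \Cref{lem:bopt:numberlb:pw} verbatim; both arguments yield the same $n-1$ distinct queries.
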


\begin{proof}
Let $M$ be the $B$-optimal stable matching for the given instance. Consider an arbitrary algorithm that verifies $M$ to be $B$-optimal with interview queries.
Assume that there are at least two distinct members $a$ and $a'$ of $B$ for which the algorithm does not execute any queries.
If some $b \in B$ satisfies either $b \prec_a M(a)$ or $b \prec_{a'} M(a')$, then this is a contradiction to the algorithm verifying $M$ to be stable. 
Otherwise, $(a,M(a)),(a',M(a'))$ is a potential rotation so the algorithm has to prove that either $(a,M(a'))$ or $(a',M(a))$ is not an $r$-edge. Assume w.l.o.g.~that the algorithm proves $(a,M(a'))$ to not be an $r$-edge. To do this, it either has to prove $a' \prec_{M(a')} a$, which is impossible without executing the interview $\intq(M(a'),a)$, or it has to prove $a \prec_b M(b)$ for some $b$ with $M(a) \prec_a b \prec_a M(a')$, which is impossible without executing the interview $\intq(b,a)$. Each case leads to a contradiction.
\end{proof}

Together with~\Cref{thm:int:LB}, this lemma implies the following theorem.

    \begin{theorem}
    \label{thm:int-Bopt-Thetan}
    In the interview query model, the best possible (randomized) competitive ratio for finding the $B$-optimal stable matching in an instance of stable matching with one-sided uncertainty is in $\Theta(n)$.
    \end{theorem}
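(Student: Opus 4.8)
The plan is to prove the $\Theta(n)$ bound by combining a lower bound of $\Omega(n)$ with a matching upper bound of $\mathcal{O}(n)$ on the best possible competitive ratio. The lower bound is already available: \Cref{thm:int:LB} states that every deterministic or randomized online algorithm for finding a $B$-optimal stable matching with interview queries has competitive ratio $\Omega(n)$. So the only thing that remains is to exhibit an $\mathcal{O}(n)$-competitive algorithm, which then automatically also serves as a randomized upper bound since it is deterministic.

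For the upper bound, I would use the crude algorithm that queries $\intq(b,a)$ for every pair $(a,b)\in A\times B$. Since $n$ interview queries to a fixed agent $b\in B$ reveal the total order of all of $A$ under $\prec_b$, after these $n^2$ queries the algorithm knows all $B$-side preferences, and it can then compute a $B$-optimal stable matching directly (e.g.\ by running deferred acceptance with $B$ proposing). Thus the algorithm always terminates with a provably correct answer after exactly $n^2$ interview queries. Alternatively, one could simulate \Cref{alg:compqueries:Bopt} by replacing each comparison query with at most two interviews; both routes give $\mathcal{O}(n)$-competitiveness, but the crude algorithm is the simplest.

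To bound the competitive ratio, I would invoke the lemma preceding the theorem, which shows that the optimal number of queries for verifying (and hence for finding) the $B$-optimal stable matching with interview queries is at least $n-1$ on every instance. Consequently the crude algorithm makes at most $n^2 = \mathcal{O}(n)\cdot(n-1) \le \mathcal{O}(n)\cdot\OPT$ queries, so it is $\mathcal{O}(n)$-competitive. Together with the $\Omega(n)$ lower bound of \Cref{thm:int:LB}, this shows that the best possible (randomized) competitive ratio is $\Theta(n)$.

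The statement is essentially a corollary: all the technical difficulty lives in the already-established \Cref{thm:int:LB} (the adversarial/Yao's-principle construction) and in the lower bound $\OPT\ge n-1$, so there is no genuine obstacle left in assembling the theorem. If one instead wanted an algorithm matching the optimum up to a smaller constant, or an $o(n)$-competitive algorithm, that would be the hard part — but \Cref{thm:int:LB} rules the latter out, so nothing more is needed.
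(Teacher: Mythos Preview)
Your proposal is correct and matches the paper's own argument essentially verbatim: the paper also combines \Cref{thm:int:LB} for the $\Omega(n)$ lower bound with the crude $n^2$-interview algorithm and the preceding lemma ($\OPT\ge n-1$) to obtain the $\mathcal{O}(n)$ upper bound.
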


\subsection{NP-Hardness of the Offline Problem}

    For the offline problem of verifying a given $B$-optimal stable matching with interview queries, Rastegari et al.~\cite{DBLP:conf/sigecom/RastegariCIL13} show NP-hardness in a setting
    with partial uncertainty on both sides. As their proof exploits the possibility of giving partial information as part of the input, it does not directly translate to our setting with one-sided uncertainty. 
    However, we can show with a similar proof as for comparison queries that the problem remains hard even in our setting.
    
    \begin{theorem}
    \label{thm:int-offline}
    The offline problem of computing an optimal set of interview queries for finding the $B$-optimal stable matching in a stable matching instance with one-sided uncertainty is NP-hard.
    \end{theorem}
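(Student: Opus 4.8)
The plan is to adapt the reduction from Minimum Feedback Arc Set used in the proof of \Cref{thm:offline-pair-B-NPhard} to the interview query model. Recall that in that reduction, given a directed graph $G = (V,E)$, we introduce an agent $v \in A$ and an agent $v' \in B$ for each node $v$, and set up the preference lists so that: the matching $M = \{(v,v') \mid v \in V\}$ is $B$-optimal; no queries are needed to verify stability of $M$ (by choosing $G$ to have no relevant "blocking-pair" structure, or more precisely by noting the stability-cost $Q(M)$ is a fixed polynomial-time computable quantity); and the potential $r$-edges of $v$ are exactly the edges $(v,u')$ for $u \in N^+(v)$. Excluding such an edge from being an $r$-edge corresponds precisely to removing the arc $(v,u)$ from $G$, and a set of edges suffices to prove the absence of any alternating cycle (i.e., any potential rotation) if and only if the corresponding arcs form a feedback arc set in $G$.

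First I would reuse exactly the same construction of the $A$-side preference lists and the matching $M$, and the same $B$-side realization making $M$ the $B$-optimal stable matching. The only thing that changes is how queries are counted. In the comparison model, a single query $\prefer(u', v, M(u'))$ suffices to learn whether $u'$ prefers $v$ to $u$; in the interview model, we instead need the two interviews $\intq(u', v)$ and $\intq(u', u)$ to extract this comparison — unless one of these interviews has already been performed for another purpose. Thus I would argue that the interview cost decomposes per agent $u' \in B$: if $u'$ needs to be "resolved against" any of its potential-$r$-edge partners at all, then the interview $\intq(u', u)$ (the matched partner) must be performed once, plus one interview $\intq(u', v)$ for each such $v$ that the strategy chooses to compare. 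Concretely, for a chosen set $F \subseteq E$ of arcs to exclude, the interview cost is $\sum_{u' : \exists (v,u) \in F} \bigl(1 + |\{v : (v,u) \in F\}|\bigr)$ plus the (fixed) stability-verification cost $Q(M)$ expressed in interview terms.

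The subtlety — and the step I expect to be the main obstacle — is that this per-agent overhead of "$+1$ if the in-set is nonempty" makes the objective not simply the cardinality of a feedback arc set, so I need to massage the reduction to keep the correspondence tight. The clean fix is to pad the graph: replace each node $u$ of $G$ by a small gadget (e.g., prepend a fixed-length directed path or a bundle of parallel arcs into $u$) so that in any feedback arc set of interest every relevant node already has at least one incoming deleted arc "for free," or alternatively add to every agent $u' \in B$ one fixed mandatory potential $r$-edge from a dedicated dummy agent that must always be resolved, thereby forcing the $\intq(u', u)$ interview regardless. With such a modification, the interview cost becomes exactly (a fixed polynomial-time computable constant) plus $|F|$, and then there is a query strategy proving $M$ stable and $B$-optimal with at most that constant plus $k$ interviews if and only if $G$ has a feedback arc set of size at most $k$. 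I would verify, as in the comparison-query proof, that the interviews needed for stability verification and those needed for excluding rotations touch disjoint parts of the instance (stability interviews concern pairs $(a,b)$ with $b \prec_a M(a)$, rotation interviews concern pairs with $M(a) \prec_a b$), so the two costs add without interference. This yields NP-hardness of the offline interview-query problem for both verifying and finding the $B$-optimal stable matching.
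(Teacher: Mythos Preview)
Your high-level plan is right, and you correctly spot that the per-agent ``$+1$'' for $\intq(u',u)$ is the crux. But your disjointness claim is wrong, and this is precisely where the argument hinges. An interview $\intq(b,M(b))$ is not captured by your dichotomy ``$b\prec_a M(a)$ versus $M(a)\prec_a b$'' (here $a=M(b)$, so neither holds), yet it is needed for \emph{both} stability verification (to compare any potential blocking partner of $b$ against $M(b)$) \emph{and} rotation exclusion (to compare any potential $r$-edge partner of $b$ against $M(b)$). So the two interview sets are not disjoint; they overlap exactly at the $\intq(b,M(b))$ interviews. This overlap is good news: it absorbs your ``$+1$'' into the fixed stability cost, rather than leaving it as a variable term depending on~$F$. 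Your padding proposals, by contrast, do not work as stated: prepending a directed path into $u$, or a bundle of parallel arcs from a fresh source into $u$, creates no new cycles, so no feedback arc set needs to touch them and nothing is forced ``for free''; and a dummy agent with a potential $r$-edge to $u'$ but no incoming potential $r$-edges lies on no alternating cycle, so nothing forces its resolution either.

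The paper's fix is to make the overlap universal rather than incidental. It adds one dummy pair $(z,z')$ with $z\in A$, $z'\in B$; every $v\in A\setminus\{z\}$ puts $z'$ at the top of its list, $z$ puts $z'$ last, $z'$ puts $z$ first, and every $b\in B\setminus\{z'\}$ puts $z$ last. Then $(z,z')\in M$, and for every $b\neq z'$ the pair $(z,b)$ is a potential blocking pair, so stability already forces $\intq(b,z)$ and $\intq(b,M(b))$. Once $\intq(b,M(b))$ is mandatory for every $b$, each additional interview $\intq(b,a)$ carries exactly the information of the comparison $\prefer(b,a,M(b))$, and the remainder of the comparison-query reduction (\Cref{thm:offline-pair-B-NPhard}) applies verbatim: the extra cost beyond the fixed stability budget is exactly the number of excluded potential $r$-edges, i.e., the size of a feedback arc set.
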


\begin{proof}
	Consider the same construction as in the proof for comparison queries (cf.~\Cref{thm:offline-pair-B-NPhard}). We add a dummy element $z$ to $A$ and a dummy element $z'$ to $B$. Element $z'$ has $z$ as the top choice and afterwards all other agents of $A$ in an arbitrary order. Agent $z$ has $z'$ as the last choice and before that the other elements of $B$ in an arbitrary order. The agents of $A\setminus \{z\}$ all have $z'$ as the top choice and afterwards the preference list as defined in the proof of~\Cref{thm:offline-pair-B-NPhard}.  The elements of $B \setminus \{z'\}$ have $z$ as the last choice and before that the preference list as defined in the proof of~\Cref{thm:offline-pair-B-NPhard}. This forces $(z,z')$ to be part of the $B$-optimal matching and any algorithm has to query $\intq(b,z)$ and $\intq(b,M(b))$  for all $b \in B\setminus\{z'\}$ to prove stability. 
	
	After proving stability, each query $\intq(b,a)$ for an $a \in A$ and $b \in B$ contains the information of $\prefer(b,a,M(b))$ (as $\intq(b,M(b))$ has already been queried to prove stability). Thus, we can now repeat the remaining part of the proof of~\Cref{thm:offline-pair-B-NPhard} to show the theorem.	
\end{proof}

    \section{Stable Matching with Set Queries}
    \label{sec:set}
    
We consider the stable matching problem with one-sided uncertainty and set queries.
    Note that set queries are a natural generalization of \comparison queries.
    For verifying any $B$-optimal matching, we show that the optimal number of set queries is at
    least $n-1$. We also observe that there is an algorithm that makes at most
    $n^2$ queries for finding the $B$-optimal matching (or an $A$-optimal
    matching if we want to), as one can sort all preference
    lists using $n^2$ set queries. This implies an $\mathcal{O}(n)$-competitive algorithm for finding the $B$-optimal matching.
    For the subproblem of verifying that a given matching is $B$-optimal, we give an $\mathcal{O}(\log n)$-competitive algorithm by exploiting the additional power of set queries in an involved binary search algorithm.
    If we only have to verify stability for a given matching, we give a $1$-competitive algorithm.
    Furthermore, we show that the offline problem of verifying that a given matching does not have a rotation is NP-hard.

\subsection{Verifying That a Given Matching Is Stable}

We start by characterizing the optimal number of queries (and query strategy) to verify that a given matching $M$ is stable. The main difference to the \comparison model is that, for a fixed $b \in B$, a single query $\topq(b, \{a \mid b \prec_a M(a)\} \cup \{M(b)\})$ is sufficient to prove that $b$ is not part of any blocking pair.

\begin{theorem}
	\label{thm:setqueries:stable}
	Consider a stable matching instance with one-sided uncertainty and a stable matching $M$. The minimum number of set queries to verify that $M$ is stable is $|\{b \in B \mid \exists a \in A \colon b \prec_a M(a)\}|\le n$. Further, there is a $1$-competitive algorithm to verify that $M$ is stable.
\end{theorem}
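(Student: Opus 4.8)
The plan is to pin down the exact optimal cost as $K := |\{b \in B \mid \exists a \in A \colon b \prec_a M(a)\}|$ by proving matching upper and lower bounds, which immediately yields $1$-competitiveness; the bound $K \le n$ is trivial since $K \le |B| = n$. The key structural observation, which separates the set-query model from the comparison model, is that for a fixed $b$ all potential blocking pairs $(a,b)$ can be ruled out (or exhibited) by a \emph{single} query, because whether a pair $(a,b)\notin M$ is blocking depends only on the known relation $b \prec_a M(a)$ on the $A$-side and on the unknown relation of $a$ versus $M(b)$ in $b$'s list.

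For the algorithmic upper bound I would, for each $b \in B$ with $\{a \in A \mid b \prec_a M(a)\} \ne \emptyset$, set $S_b = \{a \in A \mid b \prec_a M(a)\} \cup \{M(b)\}$ and issue the single set query $\topq(b, S_b)$. If every such query returns $M(b)$, output that $M$ is stable; if some query returns $a \ne M(b)$, output that $M$ is not stable, since then $b \prec_a M(a)$ and $b$ prefers $a$ to $M(b)$, so $(a,b)$ is a blocking pair. Correctness follows because a pair $(a,b)\notin M$ is blocking iff $b \prec_a M(a)$ and $M(b) \succ_b a$, and the answer $\topq(b,S_b) = M(b)$ certifies $M(b) \prec_b a$ for exactly those $a$ that could form a blocking pair with $b$; for $b$ with $\{a \mid b \prec_a M(a)\} = \emptyset$ no blocking pair involving $b$ can exist, so no query is needed. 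This algorithm uses precisely $K$ queries.

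For the lower bound I would show that every correct verification algorithm must make at least one set query naming each $b \in B$ with $\{a \in A \mid b \prec_a M(a)\} \ne \emptyset$; since a query $\topq(b,S)$ names exactly one element of $B$, the number of queries is then at least $K$. The argument is an indistinguishability/adversary argument: suppose on some instance where $M$ is stable the algorithm makes no query naming such a $b^*$, and fix $a^*$ with $b^* \prec_{a^*} M(a^*)$; construct a second instance identical except that $b^*$'s preference list is modified so that $a^*$ is ranked above $M(b^*)$. In the second instance $(a^*, b^*)$ is a blocking pair, so $M$ is not stable, yet every query the algorithm issues names some $b \ne b^*$ and hence receives the same answer in both instances, so the algorithm returns the same (hence wrong) answer on one of them — a contradiction. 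Combining the two bounds gives optimal cost exactly $K$ and $1$-competitiveness.

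The only delicate point, and thus the part to state carefully, is the lower-bound step for adaptive algorithms: one must note that a set query reveals information solely about the single $B$-agent it names, so withholding every query from $b^*$ leaves $b^*$'s preference list entirely undetermined and makes the two instances indistinguishable regardless of how adaptively the remaining queries are chosen. Everything else is routine.
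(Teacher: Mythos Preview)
Your proposal is correct and follows essentially the same approach as the paper: define the set of ``dangerous'' agents $b$ (those with some $a$ satisfying $b \prec_a M(a)$), show a single query $\topq(b,\{a \mid b \prec_a M(a)\}\cup\{M(b)\})$ suffices per such $b$, and argue each such $b$ must receive at least one query. The paper states the lower bound more tersely (``at least one query to $b$ is necessary'') whereas you spell out the adversary/indistinguishability argument explicitly, but this is just a difference in level of detail, not in substance.
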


\begin{proof}
	Consider an arbitrary $b\in B$. Let $Z(b)=\{a \in A \mid b \prec_a M(a)\}$, i.e., $Z(b)$ contains all $a \in A$ that could potentially form a blocking pair with $b$. Thus, $M$ can only be stable if $M(b) \prec_b a$ holds for all $b \in B$ and $a \in Z(b)$. 
	If $Z(b)\neq \emptyset$, at least one query to $b$ is necessary, and
	the query $\topq(b,Z(b)\cup\{M(b)\})$ with answer $M(b)$ reveals all the required information to prove that $b$ is not part of any blocking pair.
	Thus, the minimum number of queries to confirm that $M$ is stable is $|\{b \in B \mid \exists a \in A \colon b \prec_a M(a)\}|$ as claimed. Furthermore, the algorithm that queries $\topq(b,Z(b)\cup\{M(b)\})$ for all $b \in B$ with $Z(b)\not= \emptyset$ is $1$-competitive.
\end{proof}

\subsection{Verifying That a Given Matching Is Stable and \texorpdfstring{$B$}{B}-Optimal}

For the problem of confirming that a given matching is $B$-optimal by using set queries, we show that every algorithm needs to execute at least $n-1$ queries. This is analogous to the setting with \comparison queries and uses a similar proof as~\Cref{lem:bopt:numberlb:pw}. 
It implies that finding a $B$-optimal matching also requires at least $n-1$ queries.

\begin{lemma}
	\label{lem:bopt:numberlb:sq}
	Consider an arbitrary stable matching instance with one-sided uncertainty and the	
	$B$-optimal matching $M$. Every algorithm needs at least $n-1$ set queries to verify that~$M$ is indeed stable and $B$-optimal.
\end{lemma}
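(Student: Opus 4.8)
The plan is to mimic the argument of Lemma~\ref{lem:bopt:numberlb:pw}: show that for all but at most one agent $a\in A$, any verifying algorithm must execute at least one query that reveals information about how $a$ compares to $M(a')$ for some $a'$, and then argue that these queries are distinct enough to give $n-1$ in total. First I would define, for an agent $a\in A$, an \emph{$a$-query} to be a set query $\topq(b,S)$ for some $b\in B$ with $b\neq M(a)$ and $a\in S$ whose answer reveals whether $b$ prefers $a$ to $M(b)$ (either directly, because $M(b)$ is returned or $M(b)\in S$, or by transitivity against the returned element). I would then claim that an algorithm verifying that $M$ is stable and $B$-optimal must make an $a$-query for every $a\in A$ with at most one exception, and that $n-1$ of the asserted queries can be chosen to be distinct.

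For the main claim, suppose for contradiction that there are two distinct agents $a,a'\in A$ for which the algorithm makes neither an $a$-query nor an $a'$-query. I would split into the same three cases as in the comparison proof. If $a$ prefers $M(a')$ to $M(a)$, then to rule out $(a,M(a'))$ as a blocking pair the algorithm needs to learn that $M(a')$ prefers her partner to $a$, which requires an $a$-query — contradiction; symmetrically if $a'$ prefers $M(a)$ to $M(a')$. Otherwise $a$ prefers $M(a)$ to $M(a')$ and $a'$ prefers $M(a')$ to $M(a)$; then $(a,M(a)),(a',M(a'))$ is a potential rotation, and to prove it is not a rotation the algorithm must show that one of $(a,M(a'))$ or $(a',M(a))$ is not an $r$-edge. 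Showing $(a,M(a'))$ is not an $r$-edge requires either learning that $M(a')$ does not prefer $a$ to her partner (an $a$-query) or learning that some $b$ with $M(a)\prec_a b\prec_a M(a')$ prefers $a$ to $M(b)$ (also an $a$-query); symmetrically for $(a',M(a))$ it requires an $a'$-query. Since the algorithm makes neither, it cannot distinguish the $B$-optimal matching $M$ from the one obtained by applying this rotation — contradiction. Hence the claim holds.

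It remains to argue that the $n-1$ queries whose existence is asserted are distinct, i.e., that a single set query cannot simultaneously serve as an $a$-query and an $a'$-query for distinct $a,a'$. This is the step I expect to be the main obstacle, since unlike a comparison query, a set query $\topq(b,S)$ returning some $a^*$ can reveal that $b$ prefers $a^*$ to every other element of $S$, and hence could in principle relate several agents of $S$ to $M(b)$ at once when $M(b)\in S$. The resolution is that a set query is addressed to one particular $b\in B$, and the only way it can be an $a$-query is by relating $a$ to $M(b)$ for $a$ \emph{with $a\neq M(b)$ playing the role of the agent whose rotation/blocking status is in question}; the relevant $a$-queries in the three cases above all concern comparisons against $M(a')$ (respectively $M(a)$), i.e., queries to the \emph{partner} $M(a')$ or to an intermediate $b$, and a single such query to a fixed $b$ can only certify the status of $a$-edges incident to $b$, not of $a'$-edges incident to the \emph{different} vertex that $a'$'s obstruction requires. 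Making this precise — tracking exactly which vertex of $B$ each of the $n-1$ asserted queries is addressed to and checking that the assignment $a\mapsto(\text{witnessing query})$ is injective — is the crux, and I would handle it by observing that in each of the three cases the witnessing $a$-query is forced to involve $M(a')$ (or an element strictly between $M(a)$ and $M(a')$ in $\prec_a$), none of which can coincide with the corresponding witness for $a'$, so no query is double-counted; therefore at least $n-1$ distinct set queries are required.
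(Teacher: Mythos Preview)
Your case analysis establishing that for all but at most one $a\in A$ the algorithm must make some $a$-query is fine; it is essentially the same as in the comparison model. The genuine gap is the distinctness step, and your proposed fix does not work.

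Concretely, a single set query $\topq(b,S)$ that returns $M(b)$ simultaneously certifies $M(b)\prec_b a$ for \emph{every} $a\in S\setminus\{M(b)\}$, so by your own definition it is an $a$-query for each such~$a$. For instance, with $n=3$ and each $a_i$ matched to its top choice $b_i$, the single query $\topq(b_1,\{a_1,a_2,a_3\})$ returning $a_1$ is both an $a_2$-query and an $a_3$-query. Your argument that ``the witnessing $a$-query is forced to involve $M(a')$'' conflates two different stages: in the contradiction step $a'$ is a fixed second violator, but in the counting step you are trying to injectively assign a query to each of $n-1$ agents with no distinguished~$a'$ in sight. Nothing prevents the same query (to the same $b$) from being the chosen witness for several agents at once, so the injection you need is not established.

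The paper sidesteps this entirely by counting something that \emph{is} automatically one-per-query: the set $A_1\subseteq A$ of agents that are ever \emph{returned} as the top choice of some query. Since each set query outputs exactly one agent, the number of queries is at least $|A_1|$. One then shows $|A_1|\ge n-1$: if two agents $a_1,a_2$ are never returned, then (after observing $M(a_1),M(a_2)\notin S$ so that $a_1$ prefers $M(a_1)$ to $M(a_2)$ and vice versa) the potential rotation $(a_1,M(a_1)),(a_2,M(a_2))$ cannot be excluded, because excluding either candidate $r$-edge requires a query that returns $a_1$ or~$a_2$. This last implication is the set-query analogue of your case analysis, but the counting is now sound because ``being returned'' is a property of at most one agent per query.
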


\begin{proof}
	For each $b \in B$, let $
	Z(b) = \{a \in A \mid b \prec_{a} M(a)\}$ and let $S = \{b \in B \mid Z(b)\neq\emptyset\}$.
	By the proof of~\Cref{thm:setqueries:stable}, every algorithm needs to execute at least one query of the form $top(b,X)$ with $X \subseteq A$ for all $b \in S$ and this query has to return $M(b)$ as the top choice. 
	Since verifying $B$-optimality includes proving stability, this leads to at least $|S|$ queries.
	
	Consider an arbitrary algorithm that verifies $M$ to be $B$-optimal and let $A_1 \subseteq A$ denote the agents of $A$ that are returned as the top choice by some query of the algorithm. Then $|S| \le |A_1|$ and $\{ a \in A \mid \exists b \in S \colon M(b) = a\} \subseteq A_1$ by the argumentation above.

	If $|A_1| \ge n-1$, then the statement follows immediately, so assume $|A_1| < n-1$ and let $A_2 = A \setminus A_1$.
	Since $|A_1| < n-1$, the set $A_2$ has at least two distinct members $a_1$ and $a_2$. 
	Furthermore, we must have $M(a_1),M(a_2) \notin S$ as observed above. 
	By definition of $S$, we have $M(a_1) \prec_{a_1} M(a_2)$ and $M(a_2) \prec_{a_2} M(a_1)$. This means that $(a_1,M(a_2))$ and $(a_2,M(a_1))$, based on the initially given information, could potentially be rotation edges. Thus, $(a_1,M(a_1)), (a_2,M(a_2))$ could potentially be a rotation and the algorithm has to prove that this is not the case by showing that one of $(a_1,M(a_2))$ and $(a_2,M(a_1))$ is not an $r$-edge.
	To prove that $(a_1,M(a_2))$ is not an $r$-edge, one has to either verify $a_2 \prec_{M(a_2)} a_1$ or $a_1 \prec_b M(b)$ for some $b \in B$ with $M(a_1) \prec_{a_1} b \prec_{a_1} M(a_2)$. However, this requires at least one query that returns either $a_1$ or $a_2$ as the top choice, and there is a symmetric argument for proving that $(a_2,M(a_1))$ is not an $r$-edge.
	Since $a_1$ and $a_2$ are never returned as the top choice by a query of the algorithm, this is a contradiction to the assumption that the algorithm verifies that $M$ is $B$-optimal.
\end{proof}

In contrast to the \comparison model, there exists an offline algorithm that asymptotically matches the lower bound of~\Cref{lem:bopt:numberlb:sq}. 

\begin{theorem}
	\label{thm:sqOfflineApprox}
	There exists a polynomial-time offline algorithm that, given an instance of stable matching with one-sided uncertainty and the $B$-optimal matching $M$, verifies that $M$ is indeed stable and $B$-optimal by executing $\mathcal{O}(n)$ set queries.
\end{theorem}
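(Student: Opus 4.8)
The plan is to combine the $1$-competitive stability certificate of~\Cref{thm:setqueries:stable} with two further groups of set queries that, taken together, pin down for every $a\in A$ exactly which edges incident to $a$ can still be rotation edges after the queries have been answered. Since the problem is offline, the algorithm knows all $B$-side preferences, hence it knows for every $a\in A$ whether $a$ has an $r$-edge and, if so, its partner $s_A(a)$; all query sets below can therefore be computed in polynomial time.

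Concretely, I would issue three groups of queries. \emph{(Group~1, stability.)} For every $b\in B$ with $Z(b)=\{a\in A\mid b\prec_a M(a)\}\neq\emptyset$, issue $\topq(b,Z(b)\cup\{M(b)\})$; as in~\Cref{thm:setqueries:stable} the answer is $M(b)$, which certifies that $M$ has no blocking pair; at most $n$ queries. \emph{(Group~2, high potential $r$-edges.)} For every $b\in B$, issue $\topq\bigl(b,\{M(b)\}\cup\{a\in A\mid M(a)\prec_a b\text{ and }M(b)\prec_b a\}\bigr)$; the answer is again $M(b)$, which (by the first of the two criteria listed earlier for excluding an $r$-edge) rules out $(a,b)$ as an $r$-edge for every $a$ in the set; at most $n$ further queries. \emph{(Group~3, low potential $r$-edges.)} For every $a\in A$ that has an $r$-edge $r(a)=s_A(a)$, issue $\topq(r(a),\{M(r(a)),a\})$; since $r(a)$ prefers $a$ to $M(r(a))$ by definition of $s_A$, the answer is $a$, which (by the second criterion) rules out $(a,b)$ as an $r$-edge for every $b$ with $r(a)\prec_a b$; at most $n$ more queries, for a total of $\mathcal{O}(n)$.

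The heart of the argument is to check that after these queries the only potential $r$-edges that survive are the genuine ones $(a,s_A(a))$. If $a$ has no $r$-edge, then every $b$ with $M(a)\prec_a b$ satisfies $M(b)\prec_b a$ and is therefore contained in the Group~2 query for $b$, so all potential $r$-edges of $a$ are ruled out. If $a$ has $r$-edge $r(a)$, then every $b$ with $M(a)\prec_a b\prec_a r(a)$ satisfies $M(b)\prec_b a$ (otherwise $r(a)$ would not be the $\prec_a$-topmost such element) and is ruled out by Group~2; every $b$ with $r(a)\prec_a b$ is ruled out by the Group~3 query for $a$; only $b=r(a)$ remains. Hence the graph consisting of the matching edges together with the surviving potential $r$-edges is precisely $M$ together with its true $r$-edges, which contains no alternating cycle since $M$ is $B$-optimal (\Cref{lem:rotations-for-optimal}). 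Soundness for a third party who does not know the $B$-side preferences follows because all these deductions use only the \emph{answers} of the queries: for any $B$-side preferences consistent with the obtained answers, the Group~1 answers forbid a blocking pair, and the Groups~2 and~3 answers force the set of $r$-edges (with respect to those preferences) to be a subset of the true one, so again no rotation is exposed.

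The step I expect to be the main obstacle is exactly this case analysis — in particular, being careful that ``$b$ above $r(a)$ in $\prec_a$'' forces $M(b)\prec_b a$, so that Group~2 really captures all high potential $r$-edges, and arguing that ruling out every potential $r$-edge of each agent other than its true one is sufficient to conclude acyclicity of the certified graph, rather than having to reason directly about which alternating cycles could exist.
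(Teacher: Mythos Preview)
Your proposal is correct and takes essentially the same approach as the paper's proof: the three groups of queries you describe are exactly the three rounds of queries in the paper (stability via~\Cref{thm:setqueries:stable}, then for each $b$ the query $\topq(b,P(b)\cup\{M(b)\})$ with $P(b)=\{a\mid M(a)\prec_a b,\ M(b)\prec_b a\}$, then for each $a$ with an $r$-edge the query $\topq(r(a),\{a,M(r(a))\})$), and your case analysis matches the paper's reasoning. Your explicit soundness argument for the third party is a nice addition; note that in fact the surviving $r$-edge set is \emph{exactly} the true one, not merely a subset, though ``subset'' already suffices for the conclusion.
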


\begin{proof}
	By the proof of~\Cref{thm:setqueries:stable}, an algorithm can prove $M$ to be stable by executing at most $n$ set queries, so it remains to prove that $M$ is $B$-optimal by executing at most $\mathcal{O}(n)$ set queries.
	
	We do so by proving that $M$ does not contain a rotation. First, for each $b \in B$, we compute the set $P(b) = \{ a \in A \mid M(a) \prec_a b \text{ and } M(b) \prec_b a \}$. Each tuple $(b,a)$ with $b\in B$ and $a \in P(b)$ could be a rotation edge based on $\prec_a$ but is not a rotation edge as $M(b) \prec_b a$. An algorithm can prove that none of these edge are actually rotation edges by executing a query $\topq(b,P(b)\cup\{M(b)\})$ for each $b \in B$. This leads to $n$ additional queries.
	
	If an $a \in A$ does not have a rotation edge, then the previous queries prove that this is the case. Consider an $a \in A$ that has a rotation edge. Then the second endpoint of that edge is the agent $b \in B$ of highest preference according to $\prec_a$ among those agents that satisfy $M(a) \prec_a b$ and $a \prec_b M(b)$. Let $b$ be that endpoint. To prove that $(a,b)$ is indeed a rotation edge, an algorithm has to verify $a \prec_b M(b)$ and $M(b') \prec_{b'} a$ for all $b'$ with $M(a) \prec_a b' \prec_a b$. The latter has already been verified by the previous $n$ queries and the former can be proven by an additional query $\topq(b, \{a,M(b)\})$. Doing this for every $a \in A$ that has a rotation edge leads to at most $n$ further queries.
	
	Executing these queries yields, for each $a \in A$, either the rotation edge of $a$ or a proof that $a$ does not have a rotation edge. Thus, it gives sufficient information to show that $M$ does not have a rotation and is $B$-optimal. 
\end{proof}

Next, we give an online algorithm that decides whether a given matching $M$ is $B$-optimal by executing at most $\mathcal{O}(n \log n)$ set queries. In combination with \Cref{lem:bopt:numberlb:sq}, this yields an $\mathcal{O}(\log n)$-competitive algorithm for verifying that a given matching is $B$-optimal with set queries. 

\begin{theorem}
	\label{thm:set-alg-Bopt}
	There is an algorithm that decides if a given matching $M$ in a stable matching instance with one-sided uncertainty is stable and $B$-optimal with $\mathcal{O}(n \log n)$~set~queries.
\end{theorem}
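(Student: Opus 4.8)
The plan is to verify stability first and then verify $B$-optimality (i.e., the absence of a rotation), using at most $\mathcal{O}(n)$ set queries for the former and $\mathcal{O}(n\log n)$ for the latter. Stability is handled exactly as in~\Cref{thm:setqueries:stable}: for each $b\in B$ with $Z(b)=\{a\in A\mid b\prec_a M(a)\}\neq\emptyset$, query $\topq(b,Z(b)\cup\{M(b)\})$; if any answer is not $M(b)$, we have found a blocking pair and reject. This costs at most $n$ queries. After this step, for the remaining argument we know the relative order of $M(b)$ against every $a\in Z(b)$ in $\prec_b$, which is information we can reuse.

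For $B$-optimality, by~\Cref{lem:rotations-for-optimal} it suffices to show $M$ exposes no rotation, which amounts to determining, for every $a\in A$, its $r$-edge $(a,s_A(a))$ or a proof that $a$ has none, and then checking that the resulting $r$-edges do not form an alternating cycle. The core subroutine is: given $a\in A$, find the most-preferred $b$ in $\prec_a$ with $M(a)\prec_a b$ such that $b$ prefers $a$ to $M(b)$, i.e., $a\prec_b M(b)$. Let $L_a=(b^{(1)},b^{(2)},\dots)$ be the suffix of $a$'s preference list strictly below $M(a)$, ordered by $\prec_a$. We want the first index $j$ at which $a\prec_{b^{(j)}} M(b^{(j)})$ holds. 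The key idea is binary search on this list using set queries that exploit the power of $\topq$: to test whether any $b$ among a candidate prefix $\{b^{(1)},\dots,b^{(m)}\}$ prefers $a$ to its partner, we cannot directly ask that, so instead we search more carefully. For a single candidate $b$, one query $\topq(b,\{a,M(b)\})$ decides whether $a\prec_b M(b)$. To avoid $\Omega(n)$ such single queries per $a$, observe that during the binary search we only pay one query per probed position, and we only probe $\mathcal{O}(\log n)$ positions per $a$: we maintain an interval $[\ell,r]$ of $L_a$ known to contain the first ``good'' position (or to certify none exists), probe the midpoint $b^{(m)}$ with $\topq(b^{(m)},\{a,M(b^{(m)})\})$, and recurse left if $a\prec_{b^{(m)}}M(b^{(m)})$ and right otherwise. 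The subtlety is that goodness is \emph{not} monotone along $L_a$ in general, so a naive binary search is invalid; the resolution is that we do not actually need the globally first good position — it suffices to decide, for the cycle-detection step, which $b$ can be $s_A(a)$, and a more structured search is required. I expect the main obstacle to be exactly this non-monotonicity: making the binary search correct will require either a preprocessing set query that restricts attention to a structured sublist, or an argument that the positions we must distinguish between are in fact limited to $\mathcal{O}(n)$ globally across all $a$, letting us amortize.

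Concretely, I would proceed as follows. Step 1: run the stability verification ($\le n$ queries), rejecting if it fails. Step 2: for each $a\in A$, run a search over $L_a$ to identify $s_A(a)$ (or conclude $a$ has no $r$-edge), spending $\mathcal{O}(\log n)$ set queries of the form $\topq(b,\{a,M(b)\})$ or $\topq(b,S)$ for suitably chosen small sets $S$; summing over all $a$ gives $\mathcal{O}(n\log n)$ queries. Step 3: with all $r$-edges known, construct the directed graph on $A\cup B$ consisting of matching edges (oriented $B\to A$) and $r$-edges (oriented $A\to B$) and check in polynomial time whether it contains a directed cycle; $M$ is $B$-optimal iff it does not. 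Output accordingly. The correctness of Step~3 is immediate from the definition of rotations and~\Cref{lem:rotations-for-optimal}; the query bound rests entirely on Step~2, and the hard part, as noted, is designing the search in Step~2 so that it is both correct despite non-monotonicity of the ``prefers $a$ to partner'' predicate and bounded by $\mathcal{O}(\log n)$ set queries per agent.
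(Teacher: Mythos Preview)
You have correctly identified the overall skeleton (verify stability with $\mathcal{O}(n)$ queries, then determine all $r$-edges, then check for an alternating cycle) and, crucially, you have correctly identified the obstacle: the predicate ``$b$ prefers $a$ to $M(b)$'' is \emph{not} monotone along $L_a$, so per-agent binary search does not locate $s_A(a)$. However, you do not actually overcome this obstacle. Your claim that ``we do not actually need the globally first good position'' is false: $s_A(a)$ is by definition the first good position, and the rotation structure depends on exactly this edge, so Step~3 cannot be carried out without it. Your fallback suggestions (a preprocessing query creating a ``structured sublist'', or a global amortization) remain unspecified, and I do not see how either one can be made to work while your queries are all of the form $\topq(b,\{a,M(b)\})$: such queries are just comparison queries in disguise, and with comparison queries alone the lower bound of~\Cref{lem:bopt:numberlb:pw} combined with the construction in~\Cref{thm:B-opt:det-lb} shows you cannot hope to determine all $r$-edges in $\mathcal{O}(n\log n)$ queries.

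The missing idea, which the paper supplies, is to exploit the full power of set queries by putting \emph{many agents of $A$} into a single query. Instead of searching $L_a$ for each $a$ separately, one runs $\mathcal{O}(\log n)$ global rounds; in each round one maintains for every $a$ a candidate set $R(a)$ of possible $r$-partners and considers only its top half $\bar R(a)$. For each $b\in B$ one forms $U_b=\{a: b\in\bar R(a)\}$ and repeatedly queries $\topq\bigl(b,\,U_b\cup\{M(b)\}\bigr)$: if the answer is $M(b)$, then $b$ is eliminated from $R(a)$ for \emph{all} $a\in U_b$ at once; if the answer is some $t\in U_b$, then $t$ is certified to have its $r$-edge inside $\bar R(t)$ and is removed from further queries this round. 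Either way each query discharges one element of $A$ or one element of $B$, so a round costs $\mathcal{O}(n)$ queries, and each round halves every $R(a)$, giving the $\mathcal{O}(n\log n)$ bound. The essential difference from your proposal is that the queries are organized by $b$ and contain a \emph{set} of candidate $a$'s, which is precisely what lets set queries beat comparison queries here.
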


\begin{proof}
	First, we can use~\Cref{thm:setqueries:stable} and execute $\mathcal{O}(n)$ queries to decide whether $M$ is stable. If $M$ turns out not to be stable, then we are done.
	Otherwise, we have to decide whether $M$ is $B$-optimal by using at most $\mathcal{O}(n \log n)$ set queries. We do so by giving an algorithm that, for each $a \in A$, either finds the rotation edge of $a$ or proves that $a$ does not have a rotation edge. After executing that algorithm we clearly have sufficient information to decide whether $M$ exposes a rotation and, thus, whether it is $B$-optimal.

	For each $a \in A$, we use $R(a)$ to refer to the set of agents that could potentially form a rotation edge with $a$. Initially, we set $R(a) = \{ b \in B \mid M(a) \prec_a b\}$ as all agents with a lower priority than $M(a)$ can potentially form a rotation edge with $a$ based on the initially given information.
	During the course of our algorithm, we will update the set $R(a)$ such that it always only contains the agents of $B$ that, based on the information obtained by all previous queries, could still form a rotation edge with $a$.
	In particular, if we obtain the information that $M(b) \prec_b a$ for some $b \in R(a)$, then $(a,b)$ clearly cannot be a rotation edge and we can update $R(a) = R(a) \setminus \{b\}$. Similarly, if we obtain the information that $a \prec_b M(b)$ for some $b \in R(a)$, then the agents $b' \in R(a)$ with $b \prec_a b'$ cannot form a rotation edge with $a$ anymore and we can update $R(a) = R(a) \setminus \{b' \in R(a) \mid b \prec_a b'\}$.
	Given the current list $R(a)$ of potential rotation edge partners, we use $\bar{R}(a)$ to refer to the $\left\lceil\frac{|R(a)|}{2}\right\rceil$ agents of $R(a)$ with the highest priority in $R(a)$ according to $\prec_{a}$.
	
	\begin{algorithm}[t]
		\KwIn{Stable matching instance with one-sided uncertainty and a matching $M$.}
		Decide whether $M$ is stable using~\Cref{thm:setqueries:stable}. If $M$ is not stable, terminate\;
		$R(a) \gets \{ b \in B \mid M(a) \prec_a b\}$ for all $a \in A$\;
		\While{We did not decide yet whether $M$ is $B$-optimal}{
			$U \gets \{a \in A \mid |\bar{R}(a)| \ge 1\}$\label{alg:setqueries:verify:start}\;
			\For{$b \in B$}{
				$U_b \gets \{a \in U \mid b \in \bar{R}(a)\}$\;
				\Repeat{$U_b = \emptyset$\label{alg:setqueries:verify:end}}{
					$t \gets \topq(b, U_b \cup \{M(b)\})$\;
					\lIf{$t = M(b)$}{
						$R(a) \gets R(a) \setminus b$ for all $a \in U_b$;
						$U_b \gets \emptyset$;
					}\Else{
						$U \gets U \setminus \{t\}$;
						$U_b \gets U_b \setminus \{t\}$\;
						$R(t) \gets R(t) \setminus \{ b' \in R(t) \mid b \prec_t b'\}$\;
					}
				}	
				
			}

		}
		\caption{Algorithm to decide whether a given matching is $B$-optimal using set queries.}
		\label{alg:setqueries:verify}
	\end{algorithm} 
	
	Our algorithm, cf.~Algorithm~\ref{alg:setqueries:verify}, proceeds in iterations that each execute at most $\mathcal{O}(n)$ set queries. Let $R_i(a)$, $a \in A$, denote the current sets of potential rotation edges at the beginning of iteration $i$ and let $\bar{R}_i(a)$ be as defined above.
	We define our algorithm in a way such that each iteration $i$ decides for each $a \in A$ whether it has a rotation edge to an agent of $\bar{R}_i(a)$ or not. 
	Then, $|R_{i+1}(a)| \le \frac{|R_i(a)|+1}{2}$ holds for each $a \in A$ with $|R_i(a)| > 1$ as we either get $R_{i+1}(a) \subseteq \bar{R}_i(a)$ or $R_{i+1} \subseteq R_i(a) \setminus \bar{R}_i(a)$.  
	Furthermore, if $|R_i(a)| = 1$, then iteration $i$
	either identifies the rotation edge of $a$ or proves that it does not have one.
	This means that after at most $\mathcal{O}(\log n)$ such iterations, for each $a \in A$, we either found the rotation edge of $a$ or verified that it does not have one.
	Since each iteration executes $\mathcal{O}(n)$ set queries, we get an algorithm that executes $\mathcal{O}(n \log n)$ set queries and decides whether $M$ is $B$-optimal.
	
	It remains to show that each iteration $i$ indeed executes $\mathcal{O}(n)$ set queries and decides, for each $a \in A$, whether $a$ has a rotation edge to some agent of $\bar{R}_i(a)$. Lines~\ref{alg:setqueries:verify:start} to~\ref{alg:setqueries:verify:end} of Algorithm~\ref{alg:setqueries:verify} show the pseudocode for such an iteration.
	In each iteration $i$, the algorithm considers the set $U = \{a \in A \mid |R_i(a)| \ge 1\}$, i.e., the subset of $A$ for which we do not yet know whether it has a rotation edge to some agent of $\bar{R}_i(a)$. Then, the algorithm iterates through the agents $b$ of $B$ and considers the set $U_b = \{a \in U \mid b \in \bar{R}_i(a)\}$. Note that, for each $a \in U_b$, it holds that if $a \prec_b M(b)$, then $a$ has a rotation edge to some agent of $\bar{R}_i(a)$ (not necessarily to $b$).
	The algorithm
	executes the query $\topq(b,U_b \cup \{M(b)\})$. If this query returns $M(b)$, then we know for sure that $b$ does not have a rotation edge to any agent of $U_b$ and we can discard $b$ for the rest of the iteration and also remove $b$ from the current $R(a)$ of all $a \in U_b$. On the other hand, if the query returns $a \not= M(b)$, then we know that $a$ has a rotation edge to some agent of $\bar{R}_i(a)$ and we do not need to consider $a$ for the rest of the iteration anymore. Thus, after each query within the iteration we discard an agent of either $A$ or $B$, which means that the iteration terminates after at most $2n$ queries. At the end of the iteration, we know for each $a \in A$ whether it has a rotation edge to some $b \in \bar{R}_i(a)$.
\end{proof}

For the offline problem, we show that computing the query set of minimum size that verifies that a given matching does not have a rotation is NP-hard. However, in the instances constructed by the reduction, verifying that the given matching does not have a rotation \emph{and is stable} is trivial as we will discuss after the proof. This means that the following result does \emph{not} imply NP-hardness for the offline variant of finding the $B$-optimal matching with set queries.

\begin{theorem}
	In the set query model, the offline problem of computing an optimal set of queries for
	verifying that a given $B$-optimal stable matching $M$ for a stable matching instance with one-sided uncertainty
	does not have a rotation is NP-hard.
\end{theorem}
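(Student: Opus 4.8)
The plan is a reduction from the \emph{Minimum Feedback Vertex Set (FVS)} problem on directed graphs, which is NP-hard: given a loopless digraph $H=(V,E)$ and $k\in\mathbb{Z}_+$, decide whether some $W\subseteq V$ with $|W|\le k$ makes $H-W$ acyclic. (For comparison queries one reduces from feedback \emph{arc} set; the extra power of set queries shifts the right target to feedback \emph{vertex} set.) From $H$ I would build a one-sided-uncertainty instance along the lines of the proof of~\Cref{thm:offline-pair-B-NPhard}: for each $v\in V$ introduce an agent $v\in A$ and an agent $v'\in B$; let $\prec_v$ list first all $w'$ with $w\neq v$ and $(v,w)\notin E$ (in arbitrary order), then $v'$, then all $u'$ with $(v,u)\in E$ (in arbitrary order); and let $\prec_{v'}$ have $v$ first and the remaining agents of $A$ in arbitrary order. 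Put $M=\{(v,v')\mid v\in V\}$. As in the comparison-query proof, $M$ is stable — a potential blocking pair $(v,w')$ would require $w'$ to prefer $v$ to $M(w')=w$, but $w$ is $w'$'s first choice — and $M$ is $B$-optimal since every agent of $B$ gets its first choice; by~\Cref{lem:rotations-for-optimal}, $M$ exposes no rotation, so the verification task is well posed.

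Second, I would lay out the combinatorial structure of the verification problem on this instance. The edges that could be $r$-edges are exactly the pairs $(v,u')$ with $(v,u)\in E$, and a sequence of matching edges and such edges is an alternating cycle (a candidate rotation) precisely when the corresponding arcs form a directed cycle of $H$; hence verifying that $M$ has no rotation amounts to issuing set queries whose answers rule out, for every directed cycle of $H$, at least one of its arcs as an $r$-edge. I would then argue that the only way to rule out a candidate $r$-edge $(v,u')$ here is via a query to $u'$: ruling it out through an intermediate agent would require some $b'$ with $M(v)\prec_v b'\prec_v u'$ to prefer $v$ to $M(b')$, but any such $b'$ is a $w'$ with $(v,w)\in E$, and $w'$ has $M(w')=w$ as its first choice, so that never happens. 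Moreover, a query $\topq(u',S)$ rules out $(v,u')$ only if it establishes $M(u')=u\prec_{u'}v$, which (since $u$ is $u'$'s first choice) forces $u\in S$ and forces the answer to be $u$; such a query then rules out $(v,u')$ for exactly those $v\in S$ with $(v,u)\in E$ and gives no information about $r$-edges ending at other agents. Consequently, without loss of generality a verifying family of queries of total cost $c$ is indexed by a set $W\subseteq V$ of size $c$ — one query $\topq(u',\{u\}\cup\{v:(v,u)\in E\})$ per $u\in W$ — and it rules out precisely the arcs of $H$ entering $W$.

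Third, I would close the loop. Deleting from $H$ all arcs entering $W$ leaves an acyclic digraph if and only if $H-W$ is acyclic (any cycle in the remaining digraph has all vertices equal to heads of remaining arcs, hence all outside $W$). Thus the strategy indexed by $W$ verifies ``no rotation'' iff $W$ is a feedback vertex set of $H$, and conversely any verifying family of cost $c$ yields such a set $W$ with $|W|\le c$ whose entering arcs cover every directed cycle, hence a feedback vertex set of size at most $c$. Therefore $\OPT$ equals the minimum size of a feedback vertex set of $H$, so computing $\OPT$ (equivalently, deciding $\OPT\le k$) is at least as hard as FVS and the problem is NP-hard. The step needing most care is the second one: arguing rigorously that the intermediate-agent route to disqualifying an $r$-edge is never available in this construction, and that a query to $u'$ can do nothing but delete an arbitrary chosen set of in-arcs of $u$, so that the optimum is genuinely minimum feedback \emph{vertex} set rather than minimum feedback \emph{arc} set. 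Finally, as promised, I would remark that this reduction does not imply hardness of verifying that $M$ is \emph{stable and} $B$-optimal: adding an isolated vertex to $H$ (which does not change its feedback vertex number) makes proving stability require a query to every $v'\in B$ already, and in the set-query model a single query $\topq(v',A)$ with answer $v$ discharges simultaneously all of $v'$'s stability obligations and all of its rotation obligations, so the combined optimum is exactly $n$ on every such instance, independently of $H$.
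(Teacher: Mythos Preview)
Your proposal is correct and follows essentially the same approach as the paper: both reduce from directed feedback vertex set via the identical construction (one $A$--$B$ pair per vertex, potential $r$-edges encoding the arcs, each $v'$ having $v$ as top choice), and both establish the correspondence between feedback vertex sets and verifying query families. Your write-up is in fact somewhat more careful than the paper's on two points---the explicit argument that the intermediate-agent route to disqualifying an $r$-edge is unavailable here, and the equivalence between deleting all arcs entering $W$ and deleting $W$ itself---but the underlying reduction and proof structure are the same.
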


\begin{proof}
	We show the statement by reduction from the NP-hard \emph{feedback vertex set problem}~\cite{Karp10}. In this problem, we are given a directed graph $G=(V,E)$ and a parameter $k \in \mathbb{N}$. The goal is to decide whether there exists a subset $F \subseteq V$ with $|F| \le k$ such that deleting $F$ from $G$ yields an acyclic graph.
	
	We construct an instance of the stable matching problem with one-sided uncertainty and a matching $M$ as follows:
	\begin{enumerate}
		\item  For each $v \in V$, we add an agent $a_v$ to set $A$ and a matching partner $M(a_v)$ to set $B$.
		\item For each $v \in V$ and $u \in V\setminus \{v\}$, we set $M(a_v) \prec_{a_v} M(a_u)$ if $(v,u) \in E$ and $M(a_u) \prec_{a_v} M(a_v)$ otherwise.
		\item For each $v \in V$ and $u \in  V \setminus \{v\}$, we set $a_v \prec_{M(v)} a_u$.
	\end{enumerate}
	
	Based on the $A$-side preferences, each $(a_v,M(a_u))$ with $(v,u) \in E$ could be a rotation edge and each $(a_v,M(a_u))$ with $(v,u) \notin E$ is not a rotation edge.
	Consider the directed graph $G' = (A\cup B, E')$ with $E' = \{(M(a_v),a_v) \mid v \in V\} \cup \{(a_v,M(a_u)) \mid (v,u) \in E)\}$. Then, based on the $A$-side preferences, each cycle in $G'$ could be a rotation. Furthermore, if we contract the edges $\{(M(a_v),a_v) \mid v \in V\}$, we arrive at the given graph $G$.
	
	Assume that there is a set $F \subseteq V$ with $|F| \le k$ such that deleting $F$ from $G$ yields an acyclic graph. Consider the queries $\topq(M(a_v),A)$ for all $v \in F$. By the third step of the reduction, these queries prove that the agents $M(a_v)$ with $v \in F$ are not part of any rotation.
	This also means that the agents $a_v$ with $v \in F$ cannot be part of a rotation. Thus, the only edges that can still be part of a rotation are the matching edges $(M(a_v),a_v)$ with $v \notin F$ and the edges
$(a_v,M(a_u))$ with $(v,u) \in E$ but $v,u \notin F$. If we consider the graph induced by these remaining edges and contract the matching edges, we arrive at the subgraph $G[V\setminus F]$ of the given feedback vertex set instance. Since this graph by assumption does not contain a cycle, this implies that executing the queries proves that the constructed instance has no rotation.
	
	Consider a query strategy that proves the constructed instance to not have a rotation by using at most $k$ queries. Let $A' \subseteq A$ denote the set of all agents that are returned as the top choice by at least one of those queries. Then, by construction, the alternative query strategy that queries $\topq(M(a_v),A)$ for each $a_v \in A'$ must also be feasible and uses at most $k$ queries. This alternative strategy proves that there exists no rotation by proving that no $a_v \in A'$ is part of any rotation. Thus, removing all vertices $a_v$ and $M(a_v)$ with $a_v \in A'$ from the graph $G'$ as defined above yields a graph without cycles. This also implies that removing $F = \{v \in V \mid a_v \in A'\}$ from $G$ yields a graph without cycles. Thus, $F$ with $|F| \le k$ is feasible for the given feedback vertex set instance.
\end{proof}

In the instances constructed within the proof, querying $\topq(M(a_v),A)$ for between $n-1$ and $n$ agents $a_v \in A$ proves that the given matching is stable and $B$-optimal.
If $n-1$ queries suffice, then this is optimal by~\Cref{lem:bopt:numberlb:sq}. 
Otherwise, $n$ queries are optimal. 
We can decide whether $n-1$ queries suffice via
enumerating all possible choices of the agent $a_v$ for which $M(a_v)$ does not receive a query $\topq(M(a_v),A)$.

Thus, the NP-hardness for proving that no rotation exists does not directly translate to the offline problem of proving that a given matching has no rotation \emph{and is stable}.

    \section{Open Problems}
    \label{sec:conc}
    
    While we understand the \comparison model quite rigorously, it remains open in the set query model what best possible competitive ratio can be achieved for finding a ($A$- or $B$-optimal) stable matching. 
    Further, it would be interesting to investigate the two-sided stable matching problem with uncertainty in the preference lists on both sides further. For verifying the stability of a given matching in this case, we have given a best possible $2$-competitive algorithm. All other questions regarding finding a stable or stable and optimal matching remain open under two-sided uncertainty.
    It would also be interesting to investigate a generalized set query model in which a query to a set $S\subseteq A$ for a $b\in B$ reveals the top-$k$ partners of $b$, that is, the $k$ partners in $S$ that $b$ prefers most.
    

\end{document}